\theoremstyle{plain}
  \newtheorem{theorem}{Theorem}[section]
  \newtheorem{lemma}[theorem]{Lemma}
    \newtheorem{assumption}{Assumption}
\theoremstyle{definition}
\theoremstyle{remark}
\numberwithin{equation}{section}
\numberwithin{figure}{section}
\newcommand{\deq}{\mathrel{\mathop:}=}
\newcommand{\e}[1]{\mathrm{e}^{#1}}
\newcommand{\R} {\mathbb{R}}
\newcommand{\C} {\mathbb{C}}
\newcommand{\N} {\mathbb{N}}
\newcommand{\Z} {\mathbb{Z}}
\newcommand{\adj}{^*} 
\newcommand{\ad}{\mathrm{ad}}
\newcommand{\bra}[1]{\langle #1 |}
\newcommand{\ket}[1]{| #1 \rangle}
\DeclareMathOperator{\Tr}{Tr}
\DeclareMathOperator{\re}{\mathrm{Re}}
\DeclareMathOperator{\im}{\mathrm{Im}}
\newcommand{\caA}{{\mathcal A}}
\newcommand{\caC}{{\mathcal C}}
\newcommand{\caD}{{\mathcal D}}
\newcommand{\caE}{{\mathcal E}}
\newcommand{\caG}{{\mathcal G}}
\newcommand{\caJ}{{\mathcal J}}
\newcommand{\caK}{{\mathcal K}}
\newcommand{\caL}{{\mathcal L}}
\newcommand{\caM}{{\mathcal M}}
\newcommand{\caO}{{\mathcal O}}
\newcommand{\caP}{{\mathcal P}}
\newcommand{\caQ}{{\mathcal Q}}
\newcommand{\caR}{{\mathcal R}}
\newcommand{\caS}{{\mathcal S}}
\newcommand{\caT}{{\mathcal T}}
\newcommand{\caU}{{\mathcal U}}
\newcommand{\caV}{{\mathcal V}}
\newcommand{\caW}{{\mathcal W}}
\newcommand{\caY}{{\mathcal Y}}
\newcommand{\caZ}{{\mathcal Z}}
\newcommand{\bbC}{{\mathbb C}}
\newcommand{\bbH}{{\mathbb H}}
\newcommand{\bbN}{{\mathbb N}}
\newcommand{\bbR}{{\mathbb R}}
\newcommand{\bbT}{{\mathbb T}}
\newcommand{\bbV}{{\mathbb V}}
\newcommand{\bbZ}{{\mathbb Z}}
\newcommand{\opunit}{\text{1}\kern-0.22em\text{l}}
\newcommand{\scrB}{{\mathscr B}}
\newcommand{\scrE}{{\mathscr E}}
\newcommand{\scrH}{{\mathscr H}}
\renewcommand{\d}{{\mathrm d}}
\newcommand{\Dom}{\mathrm{Dom}}
\newcommand{\beq}{ \begin{equation} }
\newcommand{\eeq}{ \end{equation} }
\newcommand{\baq}{ \begin{eqnarray} }
\newcommand{\eaq}{ \end{eqnarray} }
\newcommand{\bal}{  \begin{align} }
\newcommand{\eal}{\end{align}    }
\newcommand{\bet}{ \begin{theorem} }
\newcommand{\eet}{ \end{theorem} }
\newcommand{\bosondispersion}{\om}
\newcommand{\str}{ |}
\newcommand{\links}{\mathrm{l}}
\newcommand{\rechts}{\mathrm{r}}
\newcommand{\uv}{\underline{v}}
\newcommand{\uu}{\underline{u}}
\newcommand{\uw}{\underline{w}}
\newcommand{\ut}{\underline{t}}
\newcommand{\ux}{\underline{x}}
\newcommand{\uvs}{\underline{\vs}}
\newcommand{\adjoint}{\ad}
\newcommand{\norm}{ \|}
\newcommand{\lone}{\mathbbm{1}} 
\newcommand{\irr}{_{\mathrm{irr.}}}
 \let\be=\beta  
\let\ve=\varepsilon  \let\ga=\gamma 
\let\ka=\kappa \let\la=\lambda \let\om=\omega 
\let\si=\sigma \let\vs=\varsigma
 \let\Ga=\Gamma \let\La=\Lambda
\newcommand{\tdl}{\mathop{\lim}\hspace{0.15cm}\hspace{-0.22cm}_{\Lambda}}
\newcommand{\ben}{\begin{arabicenumerate}}
\newcommand{\een}{\end{arabicenumerate}}
\newcommand{\sys}{\mathrm{S}}
\newcommand{\res}{\mathrm{R}}
\newcommand{\lat}{ \mathbb{Z}^d }
\newcommand{\tor}{ {\mathbb{T}^d}  }
\newcommand{\dd}{\mathrm{d}}
\newcommand{\ii}{\mathrm{i}}
\newcommand{\refer}{{\beta}}
\newcommand{\referres}{{\res,\beta}}
\newcommand{\matz}{{\mathcal{Z}_{[0,t]}^{\Lambda}}}
\newcommand{\Matz}{\mathcal{Z}_{[0,t]}}
\newcommand{\kin}{_{M}}
\newcommand{\kini}{_{\widetilde M}}
\newcommand{\eig}{u}
\newcommand{\field}{\chi}
\newcommand{\infi}{{t_{-}(I)}}
\newcommand{\supi}{{t_{+}(I)}}
 \long\def\symbolfootnote[#1]#2{\begingroup%
 \def\thefootnote{\fnsymbol{footnote}}\footnote[#1]{#2}\endgroup}
\begin{document}

\begin{center}
\large{ \bf{Quantum Diffusion with Drift and the Einstein Relation II}} \\
\vspace{20pt} \normalsize

\vspace{10pt} 
{\bf   W.\ De Roeck }\\
\vspace{10pt} 
{\bf   Institute for Theoretical Physics \\
Universit\"at Heidelberg\\ 
 69120 Heidelberg, Germany}

\vspace{20pt}

{\bf   J. Fr\"ohlich\symbolfootnote[2]{
Present address: School of Mathematics, The Institute for Advanced Study, Princeton, NJ 08540, United States.  The stay of J.~F.\ at the IAS is supported by
`The Fund for Math' and `The Monell Foundation'.
}   }\\
\vspace{10pt} 
{\bf   Institute for Theoretical Physics \\
ETH Z\"urich \\
8093 Z\"urich, Switzerland}
\vspace{20pt} 

{\bf K.\ Schnelli  }

\vspace{10pt} 
{\bf   Department of Mathematics \\
Harvard University \\
Cambridge MA 02138, United States}

\vspace{15pt} \normalsize

\end{center}

\vspace{20pt} \footnotesize{ \noindent {\bf Abstract}: }
This paper is a companion to~\cite{paper1}.  Its purpose is to describe and prove
a certain number of technical results used in~\cite{paper1}, but not proven
there.   Both papers concern long-time properties (diffusion, drift) of the motion of a driven quantum particle coupled to an array of thermal reservoirs.  
The main technical results derived in the present paper are $(1)$ an asymptotic perturbation theory applicable for small driving, and, $(2)$ the construction of time-dependent correlation functions of particle observables.

\vspace{5pt} \footnotesize \noindent {\bf KEY WORDS:}  diffusion, kinetic limit,
quantum brownian motion  \vspace{20pt}
\normalsize
\section{Introduction}\label{sec: intro}
 In~\cite{paper1} and in the present paper, we study a model of a quantum tracer particle
hopping on the lattice $\mathbb{Z}^d$ ($d \geq 1$) and
interacting with an array of thermal reservoirs placed at the sites of
$\mathbb{Z}^{d}$, which are described quantum-mechanically: When the particle visits a
site $x \in \mathbb{Z}^d$ it can emit or absorb quanta of the thermal reservoir
at site $x$, thus changing its own momentum. Those quanta correspond to
non-interacting, massless modes -- phonons or photons -- in a state of thermal
equilibrium at a positive temperature $T=\beta^{-1}$, the same for all the
reservoirs. Reservoirs at different sites are independent of one another. As a
consequence, memory effects only arise when the particle returns
to sites it has visited previously. At positive temperature and assuming a certain analyticity property of the particle-reservoir coupling, such memory effects turn out to decay exponentially fast in time.
There is a constant external force acting on the tracer particle. The particular
model described here has first been studied in~\cite{deroeckfrohlichpizzo}, for
a {\it vanishing} external force, and it has been proven there that the particle
exhibits ``quantum Brownian motion''; (see 
also~\cite{pilletdebievre,KangSchenker} for related results). The purpose of the
analysis presented in~\cite{paper1} and in this paper is to determine asymptotic
properties of the motion of the particle in the presence of a
\emph{non-vanishing} external force, as time $t$ tends to $\infty$. As announced
in~\cite{paper1}, one expects that, for a sufficiently weak external force and
weak particle-reservoir interactions, the state of the quantum particle (that is, a state on the algebra generated by functions of its momentum)
approaches a ``non-equilibrium steady state'' (NESS). This state describes a
uniform motion of the tracer particle with a mean drift velocity $v$ that
depends on the force pushing it, the strength of interaction of the particle
with the reservoirs and the temperature of the reservoirs. Furthermore, the
particle exhibits diffusion -- ``quantum Brownian motion'' -- around its mean
motion. The diffusion constant at zero force is related to the derivative, at
zero external force, of the 
drift velocity with respect to the force by the \emph{Einstein relation}.

One major difficulty encountered in our analysis is that, for a \emph{non-zero}
external force, the state of the particle (a density matrix on the particle
Hilbert space obtained by tracing out the degrees of freedom of all the
reservoirs) appears to show phase coherence over fairly long distance scales. In
contrast, when the external force vanishes, the state of the particle exhibits
exponential decoherence in particle position space. Because of the lack of
decoherence in the presence of an external force our result concerning the
approach of the particle state to a NESS only holds in an ergodic mean, and our
formula for the diffusion constant involves an abelian limit.

 While the material
in~\cite{paper1} is quite elegant and of some general conceptual interest, the
present paper is primarily technical. Most technicalities to be coped with in 
the following are related to the thermodynamic limit and to the need to control
large-time asymptotics, in particular cluster properties of connected
correlation functions of particle observables, as time differences tend to
$\infty$. The need to introduce finite-volume approximations of the system
studied in our papers comes from the structure of our proof of the Einstein
relation. Technical difficulties arise because the components of the particle
position are unbounded operators and the potential of a constant force grows
linearly in the particle position.

Background from physics and an account of the difficulties encountered in the
analysis of this and of more realistic models of quantum transport have been
presented in~\cite{paper1} and will not be repeated here. Suffice it to say that
we expect that one would face major problems if one attempted to extend the
results of~\cite{paper1} and of the present paper to more realistic models of
particle transport -- in particular, continuum models -- or if one tried to
prove stronger convergence results for the model studied here.

The organization of our paper is as follows:\\
In Section~2, we recall some notation and the definition of the model studied
in~\cite{paper1} and in this paper. We then define the effective quantum
dynamics of the particle, after tracing out the degrees of freedom of the
thermal reservoirs, and we introduce correlation (Green) functions of operators
representing properties of the particle (particle observables) in various states
of the system. We define the time-reversal operator and recall the KMS condition
characterizing a thermal equilibrium state. Equilibrium states at vanishing
external force appear in our proof of the Einstein relations.

In Section~3, the main assumptions concerning our model are summarized, and the
main results proven in~\cite{paper1} and in this paper are stated. All results
are only proven for a weak external force and at weak coupling of the tracer
particle to the reservoirs, (where ``weak'' depends on the temperature of the
reservoirs and the kinetic energy operator of the particle). In Lemmas~3.1 and
3.2 we state our results on the existence and some properties of the
thermodynamic limit. Theorem~3.3 describes ``return to equilibrium'' at
vanishing external force. Theorem~3.4 concerns the approach, in an ergodic mean,
of the state of the tracer particle to a ``non-equilibrium stationary (or
steady) state'' (NESS). In Theorem~3.5, the convergence of the mean velocity of
the particle to a non-zero drift velocity, as time $t$ tends to $\infty$, is
asserted, and a formula for the diffusion constant involving an abelian limit is
presented. The Einstein relation forms the content of Theorem~3.6.

In Section~4, the Dyson expansion of the propagator describing the
time-evolution of general mixed states of the system in powers of the operator
describing the interactions between the particle and the reservoirs is derived.
This expansion also yields an expansion of the effective particle dynamics  and
of correlation functions of particle observables when the degrees of freedom of
the reservoirs are traced out; (Subsections~4.2 through~4.4). In Subsection~4.5,
the Dyson expansions are cast in the form of ``polymer expansions'' for dilute
gases of extended particles with hard-core interactions contained in a
one-dimensional ``space'', with ``space'' corresponding to the time axis of the
original system. 

In Section~5, the expansions derived in Section~4 are further studied and used
to prove existence and properties of the thermodynamic limit of various
quantities; see Subsection~5.1. (These results can be used to prove Lemmas~3.1
and~3.2.) In Subsection~5.2, bounds on the effective dynamics of the tracer
particle, after tracing out the reservoir degrees of freedom, are proven. In
Subsection~5.3, the Laplace transform in the time variable of the effective
particle dynamics is introduced, yielding an object resembling a resolvent of an
effective Hamiltonian that depends on a spectral parameter with the
interpretation of an energy. This ``pseudo-resolvent'' plays a fundamental role
in the proofs of the main results stated in Section~3. In Subsection~5.4, a
direct-integral  (fiber-momentum) decomposition of translation-invariant
operators acting on the particle state and in particular of the effective
particle dynamics is recalled. In Subsection~5.5, the main contributions, in
fibers of fixed momentum, to the pseudo-resolvents 
introduced in Subsection~5.3 are identified, and some key spectral properties of
these operators are described; (but see also~\cite{paper1}). 

In Section~6, the behavior of the ``pseudo-resolvents'' near the origin of the
complex spectral parameter plane (``zero energy'') is analyzed, which yields
information on large-time asymptotics of the effective dynamics.

In Section~7, the effective dynamics and the correlation functions are studied
for a particle in a vanishing external force. In this situation, the state of
the system (restricted to continuous functions of the particle momentum
operator) is shown to approach an equilibrium state at the temperature of the
reservoirs. This yields a proof of Theorem 3.3. 

In Section~8, the Einstein relation (Theorem~3.6) is proven.

All other results, in particular Theorems~3.4 and~3.5, have already been proven
in the companion paper~\cite{paper1}, using the results established in Sects. 5
and 6 of this paper. In order to render the present paper comprehensible and
more or less self-contained, it is unavoidable to repeat a certain amount of
material concerning definitions and notation from~\cite{paper1}. The reader is
recommended to consult the companion paper~\cite{paper1} for background from
physics, motivation and discussion of the main results.\\

\textit{Acknowledgements.}
 W.D.R.\ is grateful to the DFG for financial support.

\section{Definition of the model} \label{sec: model}

In this section, we recall the definition of the model studied in \cite{paper1}
and in this paper in a finite volume approximation. Among other things, we also
repeat some important notations and conventions, the definition of the reservoir
equilibrium states and the definition of the effective particle dynamics.

\subsection{Notations and conventions}\label{section2.1.1}
\subsubsection{Banach spaces}
Given a Hilbert space $\scrE$, we use the standard notation 
\begin{equation*} \scrB_p(\scrE)\deq \left\{  A \in \scrB(\scrE)\,:\,
\Tr\left[(A^*A)^{p/2}\right] < \infty  \right\}  ,\qquad   1 \leq p \leq
\infty\,,  \end{equation*}
with $\scrB_\infty(\scrE)\equiv \scrB(\scrE)$ the bounded operators on $\scrE$,
and
\begin{equation*}
\norm A \norm_p \deq \left(\Tr\left[(A^*A)^{p/2}\right]\right)^{1/p}\,, \qquad 
\norm A \norm\deq\norm A \norm_{\infty}\,.
\end{equation*}
 For operators acting on $\scrB_p(\scrE)$, e.g., elements of
$\scrB(\scrB_p(\scrE))$, we often use the calligraphic font: $\caV,\caW$ etc..
An operator $A \in \scrB(\scrE)$ determines  bounded operators
\mbox{$\mathrm{Ad}(A)\,,\adjoint(A)\,,A_{\links}\,,A_{\rechts} $} on $\scrB_p(\scrE)$ by
\begin{align*}
\mathrm{Ad}(A) B\deq ABA\adj\,,\qquad
\adjoint(A) B\deq[A,B]= AB-BA
\end{align*}
and
\begin{align}\label{eq:2.5}
A_{\links}B\deq AB\,,\qquad A_{\rechts}B\deq BA\adj\,,\qquad   B \in
\scrB_p(\scrE)\,.
\end{align}
Note that $(A_1)_{\links}(A_2)_{\rechts}=(A_2)_{\rechts} (A_1)_{\links}$, as
operators on $\scrB_p(\scrE)$, $A_1,A_2\in\scrB(\scrE)$, i.e., the left- and
right multiplications commute. The norm of  operators in $\scrB(\scrB_p(\scrE))$
is defined by \begin{equation*}\label{def: norm on operators}
\norm \caW \norm \deq \sup_{A \in \scrB_p(\scrE)}   \frac{\norm \caW(A)
\norm_p}{\norm A\norm_p}\,.
\end{equation*}
In the following, we usually set $p=1$ or $2$.

\subsubsection{Scalar products}
For vectors $\ka \in \bbC^d$, we let $\mathrm{Re}\, \ka$ denote the vector
$(\mathrm{Re}\, \ka^1, \ldots, \mathrm{Re}\, \ka^d)$, where $\mathrm{Re}$
denotes the real part. Similar notation is used for the imaginary part,
$\mathrm{Im}$. The scalar product on $\bbC^d$ is written as
$(\kappa_1,\kappa_2)$ or  
$\overline{\kappa_1}\cdot\kappa_2$ and the norm as $\str \ka \str \deq
\sqrt{(\ka,\ka)}$. 
The scalar product on  an infinite-dimensional Hilbert space $\scrE$  is written
as $\langle \cdot\,, \cdot \rangle$, or, occasionally, as $\langle \cdot\,,
\cdot \rangle_{\scrE}$. All scalar products are defined to be  linear in the
second argument and anti-linear in the first one.

\subsubsection{Kernels}
 For  $\scrE=\ell^2(\Z^d)$,  we  can  represent
$A\in\scrB_{2}(\scrE)$ by its kernel $A(x,y)$, i.e., $(Af)(x)=\sum_{y} A(x,y)f(y)$, $f\in\scrE$. Similarly, an operator, $\mathcal{A}$, acting on $\scrB_2(\scrE)$ can be represented by
 its kernel $\mathcal{A}(x,y,x',y')$ satisfying
{$(\mathcal{A}\rho)(x,y)=\sum_{x',y'}\mathcal{A}(x,y,x',y')\rho(x',y')$},
$\rho\in\scrB_2(\scrE)$. 
Occasionally, we use the notation $\ket x$ for $\delta_x  \in \scrE$, defined by $\delta_x(x')=\delta_{x,x'}$, and $\bra x$ for $\langle\,\delta_x\,,\,\cdot\,\rangle$.
 In this
notation $\ket x\bra y$ stands for the rank-one operator $\delta_x\,\langle
\delta_y\,,\,\cdot\,\rangle$. Similarly, for the choice
$\scrE=\mathrm{L}^2(\tor)$, we often use the notation $\ket f$ for
$f\in\mathrm{L}^2(\tor)$ and $\bra g$ for $\langle g,\,\cdot\,\rangle$,
$g\in\mathrm{L}^2(\tor)$. In this `Dirac notation', $\ket f\bra g$ stands for
the rank-one operator $f\langle g,\,\cdot\,\rangle$ on $\mathrm{L}^2(\tor)$.

\subsection{The particle}\label{particle}
Consider the hypercube $ \Lambda=\Lambda_{L}=\bbZ^d \cap [-L/2,L/2]^d$, for some $L\in 2\N$. The particle Hilbert space is chosen as
$\mathscr{H}_{\sys}=\ell^2(\Lambda)$ where the subscript $S$ refers to `system'.
  
To describe the hopping term (kinetic energy), we choose a real function $\ve:
\bbT^d \to \bbR$ and we consider the self-adjoint operator $T\equiv T^{\La} $ on
$\ell^2(\La)$ with symmetric kernel\footnote{{ Later, we will consider $T^{\Lambda}$ as an operator on $\ell^2(\Z^d)$ by the natural embedding of $\ell^2(\La)$ into $\ell^2(\Z^d)$. As such, it has the kernel 
\begin{align}
T^{\Lambda}(x,x')=\begin{cases}
                   \hat\epsilon(x-x')\,,\quad &\textrm{if}\quad x,x'\in \Lambda\\
0 &\textrm{else }
                  \end{cases}\,,
\end{align}
 i.e., we impose Dirichlet boundary conditions.}}
\begin{equation*}
T(x,x') = \hat \ve(x-x')\,,
\end{equation*}
with $\hat \ve$ the Fourier transform of $\ve$. Since we will assume $\ve$ to
be analytic, the hopping is short range.

A natural choice for the dispersion law is $\varepsilon(k) = \sum_j 2 (1-\cos
k^j) $, corresponding to $T=-\Delta$, with $\Delta$ the lattice
Laplacian on $\ell^2(\Lambda)$ with Dirichlet boundary conditions. This choice satisfies all our assumptions, to be stated in 
Section~\ref{assume}.

We define the particle Hamiltonian as
\begin{equation*}
H_{\sys}\deq T-F \cdot X\,,
\end{equation*}
where $F \in \R^d$ is an external force field, e.g., an electric field, and $X\equiv X^{\Lambda}$ 
denotes the position operator on $\scrH_{\mathrm{S}}$, defined by $Xf(x)=x f(x)$. In what
follows we will write $F= \la^2 \field$, with $\field$ a rescaled field, (a
notation to be motivated later).\\

\subsection{The reservoir}\label{thereservoir}
\subsubsection{Dynamics}
For each $x\in\Z^d$, we define a reservoir Hilbert space at site $x$ by
\begin{align*}
\mathscr{H}_{\res_x}\deq\Gamma_s( \mathrm{L}^2(\bar\Lambda))\,,
\end{align*}
where $\bar\Lambda=\bar\Lambda_{L}=\R^d\cap[-L/ 2,L/2]^d$ and $\Gamma_s(\caE)$ is the
symmetric (bosonic) Fock space over the Hilbert space~$\caE$. We assume that the
reader is familiar with basic concepts of second quantization, such as Fock
space and creation/annihilation operators; (we refer to,
e.g.,~\cite{derezinski1} for definitions and background).
The total reservoir Hilbert space is defined by
\begin{align*}
\mathscr{H}_{\res}\deq\bigotimes_{x\in\Lambda}\mathscr{H}_{\res_x}\,.
\end{align*}
Note that for all $x$, the spaces $\mathscr{H}_{\res_x}$ are isomorphic to each other. We remark that there is no compelling reason to restrict the one-site reservoirs to
the same region, $[-L/2,L/2]^d$, as the particle system, but this simplifies our
notation. The reservoir Hamiltonian is defined as
\begin{align}\label{def: res ham}
H_{\res}\deq\sum_{x\in\Lambda} \, \, \sum_{q\in\bar\Lambda\adj} \bosondispersion(q)
 a_{x,q}\adj a_{x,q}\,,
\end{align}
where $\bar\Lambda\adj=\frac{2\pi}{L}\Z^d$ is the set of quasi-momenta for the
reservoir at site $x$, and the operators $a^{\#}_{x,q}$ are the canonical
creation/annihilation operators satisfying the commutation relations
\begin{align*}
{ [a_{x,q},a\adj_{x',q'}]=\delta_{x,x'}\delta_{q,q'}\,, \quad\quad[a_{x,q},a_{x',q'}]=[a\adj_{x,q},a\adj_{x',q'}]=0\,,}
\end{align*}
and we choose the dispersion law $\bosondispersion(q)=  \str q
\str + \delta_{q,0}$. Note that this dispersion law corresponds to photons or phonons, except for $q=0$,  where we have modified this dispersion
law at $q=0$ by adding an infrared
regularization that does not affect any of our results; e.g., if we replace
$\delta_{0,q}$ by $K \delta_{0,q}$, with $K >0$, then all  infinite-volume
objects studied in this paper are independent  of $K$.

\subsubsection{Equilibrium state} \label{sec: eq state}
Next, we introduce the {\it Gibbs state} of the reservoir at inverse temperature
$\beta$, $0<\be<\infty$. It is given by the density matrix 
\begin{equation}\label{eq:2.14}
\rho_{\res, \refer} \deq\frac{1}{Z_{\referres}}\e{-\beta
H_{\res}}\,,\quad\quad\textrm{where}\quad{Z_{\referres}}=\Tr_{\res}[\e{-\beta
H_{\res}}]\,,
\end{equation}
where $\Tr_{\res}$ denotes the trace over $\mathscr{H}_{\res}$.

An alternative way to describe this density matrix is to specify the expectation
values of arbitrary observables, which we denote by $ \langle O
\rangle_{\rho_\referres}\deq \Tr \left[O\rho_{\res,\refer}\right]$. 
 For $\varphi\in\ell^2(\bar\Lambda^*)$, we write $a_{x}(\varphi)= \sum_{q
\in\bar\Lambda^* }\varphi(q)a_{x,q}$, and we choose observables, $O$, to be
polynomials in the operators $a_{x}(\varphi)$.  One then finds that, for any
$x,x'$ and $\varphi,\varphi'\in\ell^2(\bar\Lambda^*)$:
\begin{itemize}
\item[$i.$] Gauge-invariance:
\begin{equation}\label{eq:gaugeinvariance}
\langle a_x\adj(\varphi)  \rangle_{\rho_\referres} =\langle a_x(\varphi) 
\rangle_{\rho_\referres}=0\,;
\end{equation}
\item[$ii.$]  Two-point correlations: Let
$\varrho_{\beta}\deq(\e{\beta\bosondispersion}-1)^{-1}$, with the one-particle dispersion law
$\bosondispersion(q)=\str q \str+\delta_{q,0}$, be the Bose-Einstein density (operator). Then
\end{itemize}
\begin{align*}
\left( \begin{array}{cc} \langle a^*_{x}(\varphi)
a_{x'}(\varphi')  \rangle_{\rho_\referres}
& \langle a^*_{x}(\varphi) a^*_{x'}(\varphi')  \rangle_{\rho_\referres}  
\\[1mm]
\langle a_{x}(\varphi)a_{x'}(\varphi') \rangle_{\rho_\referres}&\langle
a_{x}(\varphi) a^*_{x'}(\varphi')
 \rangle_{\rho_\referres}
\end{array} \right)  =   \delta_{x,x'} \left(\begin{array}{cc}
 \langle \varphi' ,\varrho_{\beta} \varphi \rangle & 0    \\[1mm]
0 & \langle \varphi , (1+ \varrho_{\beta}) \varphi') \rangle
\end{array}\right)\,;
\end{align*}
\begin{itemize}
\item[$iii.$] Wick's theorem:
 \begin{align}
\langle a^{\#}_{x_{2n}}(\varphi_{2n})  \ldots a^{\#}_{x_{1}}(\varphi_{1}) 
\rangle_{\rho_\referres}  & =  \sum_{\pi\in\mathrm{Pair}(n)} \prod_{(r,s) \in
\pi} \langle a^{\#}_{x_s}(\varphi_s) a^{\#}_{x_r}(\varphi_r) 
\rangle_{\rho_\referres}   \label{eq: gaussian property1}\,,\\[2mm]
\langle a^{\#}_{x_{2n+1}}(\varphi_{2n+1})  \ldots a^{\#}_{1}(\varphi_{1}) 
\rangle_{\rho_\referres}  &=  0\,,\label{eq: gaussian property2}
\end{align}
where $\mathrm{Pair}(n)$ denotes the set of partitions of $\{1,\ldots,2n \}$ into $n$ pairs and
the product is over these pairs $(r,s)$, with the convention that $r<s$. Here,
$\#$ stands either for $\adj$ or nothing.
\end{itemize}

\subsection{The interaction}\label{the interaction}
We define the Hilbert space of state vectors of the coupled system (particle and
reservoirs) by
\begin{equation*}
\mathscr{H}\deq \mathscr{H}_{\mathrm{S}}\otimes\mathscr{H}_{\mathrm{R}}\,.
\end{equation*}
We pick  a smooth `structure factor' $\phi\in\mathrm{L}^2(\bbR^d)$ and we define its
finite volume version $\phi^{\La} \in \ell^2(\bar\La^*)$ by
\mbox{$\phi^{\La} (q)=  (2\pi/L)^{d/2}\phi(q)$}, with the normalization chosen such that $\norm
\phi\norm_{\mathrm{L}^2(\bbR^d)}= \mathop{\lim}\limits_{L \to \infty}
\norm \phi^\La\norm_{\ell^2({\bar\La^*})} $. We will drop the superscript
$\La$. The interaction between the particle and the reservoir at site $x$ is given by
\begin{equation*}
\lone_x  \otimes  \Psi_x(\phi),  \quad \textrm{where} \quad    \Psi_x(\phi)=  
a_x(\phi)+ {a}\adj_x(\phi)\
\end{equation*}
is the field operator, and $\lone_x=\ket x\bra x$ denotes the projection onto the
lattice site $x$. The interaction Hamiltonian is taken to be
\begin{equation*}
H_{\mathrm{SR}}\deq \sum_{x \in \Lambda}   \lone_x  \otimes \Psi_x(\phi) \quad  
\textrm{on} \quad 
\mathscr{H}_{\sys} \otimes  \mathscr{H}_{\res}\,.
\end{equation*}

The total Hamiltonian of the interacting system on $\scrH$ is then given by
\begin{align}\label{eq:1.13}
H\deq T\otimes\lone-\lambda^2\field\cdot X\otimes\lone+\lone\otimes H_{\res}+\lambda H_{\mathrm{SR}}\,,
\end{align}
where $\lambda\in\R$ is a coupling constant. The interaction term $H_{\sys\res}$
is relatively  bounded w.r.t.\ $H_{\sys}+H_{\res}$ with arbitrarily small
relative bound.  It follows that $H$ is essentially selfadjoint on the domain 
$\scrH_\sys \otimes \Dom(H_\res)$, (where $\Dom(H_\res)$ denotes the domain of
$H_{\res}$).

\subsection{Effective dynamics and correlation functions} \label{effective dynamics}
The time-evolution in the Schr\"odinger picture is given by
\begin{equation*}
\rho_t=\e{-\ii t H}\rho\,\e{\ii t H}\,,\quad\quad\rho\in\mathscr{B}_1(\mathscr{H})\,.
\end{equation*}
We will first choose an initial state $\rho$ of the form $\rho= \rho_\sys
\otimes \rho_{\res,\refer}$, with $\rho_{\res,\refer}$ as defined above.
Of course,~$\rho_t$, with $t>0$, will in general not be a simple tensor product, but we
can always take the partial trace, $\Tr_{\res}[\,\cdot\,]$, over $\scrH_\res$ to
obtain the `reduced density matrix' $\rho_{\sys,t} $ of the system;
\begin{equation*}
\rho_{\sys,t} =  \Tr_\res \left[ \e{-\ii t H}(\rho_\sys \otimes
\rho_{\res,\refer}) \e{\ii t H} \right]= : \caZ_{[0,t]}\rho_\sys\,,
\end{equation*}
and we call $\caZ_{[0,t]}:
\scrB_1(\mathscr{H}_{\sys})\rightarrow\scrB_1(\mathscr{H}_{\sys}):  \rho_{\sys}
\mapsto \rho_{\sys,t} $ the {\it reduced} or {\it effective} dynamics. 
It  is a trace-preserving and completely positive map. 

In the present paper, we will mainly consider observables of the form $O \otimes
\lone$ with $O \in \scrB(\scrH_\sys)$, in which case we can also write
\begin{align}\label{expectationvalue}
\langle O(t)\rangle_{\rho_{\sys}\otimes \rho_{\res,\refer}}\deq\Tr[O(t)\rho_{\sys}\otimes\rho_{\res,\refer}]=\Tr_{\sys}[O
\rho_{\sys,t}]\,,
\end{align}
where the trace $\Tr[\,\cdot\,]$ is over the Hilbert space $\scrH$, the trace  $\Tr_{\sys}[\,\cdot\,]$ is over the particle Hilbert space
$\mathscr{H}_{\sys}$ and~$O(t)$ is the Heisenberg picture time evolution of the observable~$O\otimes\lone$, i.e.,
\begin{align}\label{Heisenberg}
 O(t)\deq\e{\ii tH}(O\otimes\lone)\,\e{-\ii t H}\,.
\end{align}
Note that $O(t)$ is, in general, \textit{not}
of the product form $O' \otimes\lone$, for some $O'$. 

Next, consider several observables
$O_1,\ldots,O_m\in\mathscr{B}(\mathscr{H})$ and a set of times $t_1,\ldots
,t_m\in\R$. For $\rho_\sys\in\scrB_1(\scrH_{\sys})$ we define correlation
functions by the formula
\begin{equation}\label{definition of correlation function}
\langle O_m(t_m)\cdots O_1(t_1)\rangle_{\rho_{\sys}\otimes \rho_{\res,\refer}}
\deq\Tr[O_m(t_m)\cdots O_1(t_1)  (\rho_\sys \otimes \rho_{\res,\refer} )]\,,
\end{equation}
the trace being over the Hilbert space $\scrH$.

\subsubsection{Equilibrium states}
Apart from an initial state (density matrix) of the product form $\rho_\sys
\otimes \rho_{\res, \refer}$, we also consider the Gibbs state of the coupled
system when the external force field vanishes, $\field=0$. In finite volume, it is defined by
\begin{equation*}
\rho_\be \deq \frac{1}{Z_{\be}} \e{-\be H^{\field=0}}, \qquad Z_{\be}=  \Tr
\e{-\be H^{\field=0}}, \qquad H^{\field=0} =T\otimes\lone+\lone\otimes H_\res + \la H_{\sys\res}\,,
\end{equation*}
and one easily checks that $\rho_\be \in \scrB_1(\scrH)$. 
The correlation functions determined by $\rho_\beta$ are written as
\begin{equation*}
\langle O_m(t_m)\cdots O_1(t_1)\rangle_{\rho_\refer} \deq\Tr[ O_m(t_m)\cdots
O_1(t_1)  \rho_{\refer}  ]\,,
\end{equation*}
with $O_1,\ldots,O_m$ and $t_1,\ldots,t_m$ as in~\eqref{definition of correlation function}.
\subsubsection{Time-reversal} 
We define an anti-linear time-reversal operator $\Theta= \Theta_\sys \otimes
\Theta_\res$, where $\Theta_\sys$ is given by 
\begin{equation*}
\Theta_\sys f(x)  = \overline{f(x)}\,, \qquad  f \in \ell^2(\La)\,,
\end{equation*}
and $\Theta_\res$ by $\Theta_\res\deq \Ga_s(\theta_\res)$, with the one-particle
operator $\theta_\res$ given by
\begin{equation*}
\theta_\res \varphi_{x}(q)  = \overline{\varphi_{x}(-q)}\,, \qquad  \varphi_{x}
\in  \ell^2(\bar \La^*)\,, \qquad x \in \Lambda\,.
\end{equation*}
If the dispersion law $\varepsilon$ of the particle and the form factor $\phi$
are invariant under time-reversal, i.e., $\varepsilon(k)=\varepsilon(-k)$,
$\phi(q)=\overline{\phi(-q)}$ (as will be assumed) then we have that
\begin{equation*}
 \Theta H^{\field=0} \Theta=H^{\field=0}, \qquad   \Theta \rho_\refer \Theta =
\rho_\refer\,,
\end{equation*}
expressing time-reversal invariance of the model. 
\subsubsection{KMS condition}
The KMS condition characterizing the Gibbs state $\rho_{\beta}$ can be expressed
as follows: Denote by $\bbH_{\beta}$ the strip
\begin{align}\label{beta strip}
 \bbH_{\be}\deq \{z \in \bbC\,:\, 0 \leq \im z \leq \be \}\,.
\end{align}
Then, for $O_1,O_2\in\scrB(\scrH)$,  the correlation function  $z\mapsto\langle O^{\field=0}_2(z)
O_1\rangle_{{\rho_\beta}} $ is
analytic in the interior of the strip~$\bbH_{\be}$, bounded and continuous on~$\bbH_{\be}$, and satisfies the \emph{KMS (boundary) condition}
\begin{align}
\langle O^{\field=0}_2(t) O_1\rangle_{{\rho_\referres}}=\langle O_1
O^{\field=0}_2(t+\ii\beta)\rangle_{{\rho_\referres}}\,,\qquad t\in\R\,. \label{eq: finite volume kms}
\end{align}
This follows from the cyclicity of the trace. Note that we write $O^{\field=0}(t)$ to indicate that, here, the time evolution
is generated by $H^{\field=0}$.

\section{Assumptions and Results}\label{sec: results}

\subsection{Assumptions}\label{assume}
The model introduced in the last section is parametrized by two functions: the dispersion law $\varepsilon: \bbT^d\to\bbR$, and the form factor
$\phi:\bbR^d\to\bbC$. In this subsection, we formulate our assumptions  on these two
functions. The  (multi-) strip $\mathbb{V}_{\delta}$ is defined by
\begin{equation}
\bbV_{\delta}\deq\{ z \in (\bbT+\ii \bbT)^d\,:\, |\im z \str \leq \delta \}\,.
\end{equation}

 \renewcommand{\theassumption}{\Alph{assumption}}
 
\begin{assumption} \label{ass: analytic dispersion}\emph{ [Particle
dispersion]} 
The function $\varepsilon$ extends to an analytic function in a region
containing a strip $\bbV_{\delta}, \delta >0$. In particular, the norm
\begin{equation*}
\norm \ve \norm_{\infty,\delta} \deq\sup_{p \in \bbV_{\delta}} \str \ve(p) \str 
\end{equation*}
is finite, for some $\delta>0$. 
Furthermore, there  does not exist any  $v \in\R^d$ such that the function
\[\tor\ni k \mapsto
 ( v, \nabla\varepsilon(k))\]  vanishes identically.
\end{assumption}
This assumption allows us to estimate the free particle propagator
$\e{-\ii t H_{\sys}}$ on the particle Hilbert space \mbox{$\scrH_{\sys}=\ell^2(\La_L)$}
as
follows: 
\begin{align}\label{eq: propagation bound}
\big|\big(\e{-\ii t H_{\sys}}\big)(x,x')\big|\le C\e{-\nu|x-x'|}\e{ t  \norm \im\ve
\norm_{\infty,\nu} }\,.
\end{align}
For $L=\infty$, the bound~\eqref{eq: propagation bound} is
the {\it Combes-Thomas} bound; for finite $L$, it can be established in an
analogous way. If we replace $\Z^d$ by $\R^d$,  any physically acceptable dispersion law
$\varepsilon$ is unbounded, and there is no exponential decay in $|x-x'|$.  This is the main reason why
the system studied in this paper is defined on a lattice.

The next assumption deals with the `time-dependent' correlation function defined (in finite-volume) as
 \begin{equation}\label{eq: clear presentation finite volume correlationfunction one}
\hat \psi^{\La}(t)\deq   \sum_{q \in \bar\La^*}  
\str \phi^{\La}(q)\str^2 \left(  \frac{\e{-\ii t \bosondispersion(q)  }}{\e{\beta\bosondispersion(q)}-1}+\frac{\e{\ii t \bosondispersion(q)}  }{1-\e{-\beta\bosondispersion(q)}}  \right),
\end{equation}
and in the thermodynamic limit as
\begin{equation} \label{eq: clear presentation of correlation function}
\hat\psi(t)\deq
 \int \d q \, \str \phi(q)\str^2  \left(  \frac{\e{-\ii t \str q
\str}}{\e{\beta\str q \str}-1}+\frac{\e{\ii t \str q \str}  }{1-\e{-\beta\str q
\str}}  \right).\
\end{equation}
Since the correlation function $\hat \psi$ is determined by the form factor $\phi$, the following assumption is in fact a constraint on the choice of $\phi$.
\begin{assumption} \label{ass: exponential decay}\emph{[Decay of reservoir correlation
function]}
The form factor $\phi$ is a spherically symmetric function, i.e., $\phi(q) =:
\phi(\str q \str)$. The correlation functions
$\hat\psi^{\La}(z)$, $\hat \psi(z)$ are uniformly bounded in $ \La$ and $z
\in \bbH_{\be}$, (see~\eqref{beta strip}), and 
\begin{align*}
\tdl\hat\psi^{\La}(z) = \hat \psi(z) 
\end{align*}
holds uniformly on compacts in $
\bbH_\be$, where $\lim_{\La}$ stands for $\lim_{L \to \infty}$ (recall that $\La\equiv\La_L$).  Furthermore, the number
\beq \label{eq: gs shift}
\sum_{q \in \bar\La^*}  \bosondispersion(q)^{-1} \str \phi^{\La}(q)\str^2
\eeq
is  bounded uniformly in~$\La$. Most importantly,  $\hat\psi(z)$ is continuous on $\bbH_\be$ and 
\begin{align*}
\str\hat{\psi}(z)\str \leq C \, \e{-g_\res\str z\str}\,, \qquad z \in \bbH_\be\,.
\end{align*}
\end{assumption}
This assumption entails that the
reservoirs exhibit exponential loss of memory. This is a key ingredient for our
analysis.

 Often, one also considers the `spectral density'
\begin{equation}\label{definition of psi ohne hut}
\psi (\bosondispersion) = \frac{1}{2 \pi} \int_{-\infty}^{\infty} \d t  \, \hat \psi(t) \,\e{\ii t \bosondispersion}\,.
\end{equation}
It satisfies the so-called `detailed balance' property $\e{\be \bosondispersion}\psi(\bosondispersion)= \psi(-\bosondispersion)$, which
expresses, physically, that the reservoir is in thermal equilibrium at inverse temperature $\beta$.

Assumptions~\ref{ass: analytic dispersion} and~\ref{ass: exponential decay} are
henceforth required and will not be repeated.

\subsection{Thermodynamic limit}\label{sec: thermo}
Up to this point, we have considered a system in a finite volume (cube), $\La$
or $\bar \La$, characterized by its linear size $L$. However, if we wish to
study dissipative effects, we must, of course, pass to the thermodynamic limit,
in order to eliminate finite-volume effects such as Poincar\'e recurrence. This
amounts to taking $\La= \bbZ^d, \bar \La=\bbR^d$ and is accomplished below. 

In this section, we will explicitly put a label $\La$ on all quantities
referring to a system in a finite volume.  As an example, $\scrH_\sys$ now
stands for $\ell^2(\bbZ^d)$, and we write $\scrH^\La_\sys$ for 
$\ell^2(\La)$.   The shorthand $\lim_{\La}$ stands for  the thermodynamic limit,
$\lim_{L \to \infty}$.

\subsubsection{Observables of the system}\label{sec: observables of the system}
We begin by defining some classes of infinite-volume system observables, (i.e., certain
types of bounded operators on $\scrH_{S}$). We say that an operator $O \in
\scrB(\scrH_\sys)$ is {\it exponentially localized} whenever
\begin{equation*}
\str O(x,x')\str \leq C \e{-\nu (\str x \str+\str x' \str) }, \qquad \textrm{for
some}\, \nu >0\,.
\end{equation*}
An important r\^{o}le is played by the so-called \textit{quasi-diagonal} operators. These are operators $O\in\scrB(\scrH_\mathrm{S})$ with
the property that

\begin{equation*}
\str O(x,x')\str \leq C \e{-\nu (\str x-x' \str) }, \qquad \textrm{for some}\,
\nu >0\,.
\end{equation*}
We denote by $\mathop{\mathfrak{A}}\limits^{\circ} $ the class of quasi-diagonal operators and by $\mathfrak{A} $ its norm-closure.

An observable $O \in \scrB(\scrH_\sys)$ is said to be \textit{translation-invariant} whenever
$\caT_yO=O, $ for arbitrary $y \in \bbZ^d$, where $\caT_y O(x,x') \deq O(x+y,x'+y)$.  
Translation-invariant operators on $\scrH_\sys$ form a \textit{commutative}
$C^*$-algebra denoted by $\mathfrak{C}_{\mathrm{ti}}$. We also introduce
the algebras
\begin{equation*}
{\mathop{\mathfrak{A}}\limits^{\circ}}_{\mathrm{ti}}
\deq\mathfrak{C}_{\mathrm{ti}} \cap \mathop{\mathfrak{A}}\limits^{\circ}\,,
\qquad \mathfrak{A}_{\mathrm{ti}} \deq \mathfrak{C}_{\mathrm{ti}} \cap
\mathfrak{A}\,.
\end{equation*}

An operator $O \in  \mathfrak{C}_{\mathrm{ti}} /
{\mathop{\mathfrak{A}}\limits^{\circ}}_{\mathrm{ti}}/ \mathfrak{A}_{\mathrm{ti}}$
can be identified with a multiplication operator, $M_{f}$, on the Hilbert space~$\mathrm{L}^2(\tor)$, i.e., $M_fg=fg$, $g\in\mathrm{L}^2(\tor)$,
with $f:\tor\mapsto \C$ a bounded and measurable/real-analytic/continuous function. Physically, the variable in $\bbT^d$ is the momentum of the particle.

These classes of operators are introduced because certain expansions used in our analysis will apply to quasi-diagonal operators or translation-invariant quasi-diagonal operators, and they can be extended to the closures of these algebras by density.

In analyzing diffusion and in the proof of the Einstein relation we also need to consider certain observables that are unbounded operators: We introduce the
~$^*$-algebra $\mathfrak{X}$ that consists of
polynomials in the components,~$X^{i}$, $i=1,\ldots,d$, of the particle-position operator $X$.

Given an infinite-volume observable $O\in\scrB(\mathscr{H}_{\sys})$,
$\mathscr{H}_{\sys}=\ell^2(\Z^d)$, or $O \in \mathfrak{X}$, we
associate an observable $O^{\Lambda}= \lone_{\La} O \lone_{\La}$ on $\scrH_{\sys}^{\Lambda}=\ell^2(\Lambda)$ with it, where
$ \lone_{\La}$ is the orthogonal projection  $\ell^2(\bbZ^d) \to
\ell^2(\Lambda)$.  

\subsubsection{Dynamics}
We choose not to construct directly the time-evolution of infinite-volume observables and
infinite-volume states, although this could be done by using the Araki-Woods representation of the system in the thermodynamic limit. Instead, we will analyze
the infinite-volume dynamics of \textit{`reduced'} states, i.e., of states restricted to
particle observables and correlation (Green) functions of particle-observables by 
constructing these objects as thermodynamic limits of finite-volume expressions.

An infinite-volume density matrix of the particle system
$\rho_{\mathrm{S}}\in\mathscr{B}_1(\scrH_{\sys})$ is called {\it exponentially localized} if
\begin{align}\label{exponentially localized density matrix}
 |\rho_{\mathrm{S}}(x,x')|\le C\e{-\nu(|x|+|x'|)}\,,\qquad \textrm{for some}\, \nu >0\,.
\end{align}
Given such an infinite-volume density matrix $\rho_{\mathrm{S}}$, we
associate finite-volume density matrices 
\begin{align}\label{finite volume density matrix}
 \rho_{\sys}^{\Lambda}\deq
\frac{1}{Z_{\rho_{\sys}}^{\Lambda}}\lone_{\La} \rho_{\sys}
\lone_{\La}\in\mathscr{B}_1(\scrH_{\sys}^{\Lambda})\,,\qquad
Z_{\rho_{\sys}}^{\Lambda}\deq\Tr_{\sys}[\lone_{\La} \rho_{\sys}
\lone_{\La}]\,,                                      
\end{align}
 with it. Note that, due to the normalization by
$Z_{\rho_{\sys}}^{\Lambda}$, $\rho_{\sys}^{\Lambda}$ is a density matrix on
$\scrH_{\sys}^{\Lambda}$. 

Recall the definition of the reduced dynamics, $\caZ^{\La}_{[0,t]}$, introduced
in Section~\ref{effective dynamics}. We set
\begin{equation}\label{eqabove}
\caZ_{[0,t]} \rho_{\sys} \deq\tdl  \caZ_{[0,t]}^{\La} \rho_\sys^{\La}  \,.
\end{equation}
The next lemma asserts that the thermodynamic limit (as $\Lambda$ and $\bar\Lambda$ increase to $\mathbb{Z}^{d}$, $\R^d$, respectively) in~\eqref{eqabove} exists, 
and that the resulting reduced dynamics 
$\caZ_{[0,t]}$ is translation-invariant.
\begin{lemma} \label{lem: thermodynamic dyn}
The limit on the right side of Equation~\eqref{eqabove} exists in
$\scrB_1(\scrH_\sys)$, and this defines the map $\caZ_{[0,t]}: 
\scrB_1(\scrH_\sys)\to \scrB_1(\scrH_\sys)$.  The map $\caZ_{[0,t]}  $ preserves
the trace, i.e.,  $\Tr_{\sys}[ \caZ_{[0,t]} \rho_\sys]= \Tr_{\sys}[ \rho_\sys]$,
positivity and exponential localization of the state of the particle, i.e., if
$\rho_\sys$ has any of these properties, then so does  $\caZ_{[0,t]}
\rho_\sys$. 
Moreover, $\caZ_{[0,t]}$ is translation-invariant; $\caT_{-y}\caZ_{[0,t]}
\caT_y=\caZ_{[0,t]} $ for $y \in \bbZ^d$  with $\caT_y$ as in Subsection~\ref{sec:
observables of the system}. As a consequence of the above, 
for $O$ in $\mathfrak{A}$ or $\mathfrak{X}$, and for an exponentially localized
state $\rho_\sys$, we can define
\begin{equation*}
 \langle O(t) \rangle_{\rho_\sys \otimes \rho_\referres}\deq \Tr_{\sys} [O \caZ_{[0,t]} \rho_{\sys}]\,. 
\end{equation*}

\end{lemma}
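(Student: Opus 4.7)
The plan is to exploit the Dyson expansion of $\caZ_{[0,t]}^{\La}$ in powers of the coupling $\la$, which will be developed in detail in Section~4 and recast as a polymer expansion. Expanding $\e{-\ii t H}$ and $\e{\ii t H}$ around the free (uncoupled) dynamics and taking the partial trace over the reservoir, one uses Wick's theorem~\eqref{eq: gaussian property1}--\eqref{eq: gaussian property2} to reduce every $n$-th order contribution to a sum over pairings, each pair contributing a factor of the reservoir correlation function $\hat\psi^{\La}$ evaluated at a time difference, sandwiched between free propagator kernels $(\e{-\ii s H_{\sys}^{\La}})(x,x')$ and multiplications by the projections $\lone_x$ coming from the interaction Hamiltonian.

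Using the Combes-Thomas bound~\eqref{eq: propagation bound} on the free propagator together with the exponential decay bound $\str\hat\psi^{\La}(s)\str\le C\,\e{-g_\res\str s\str}$ from Assumption~\ref{ass: exponential decay}, each Dyson term admits a quasi-diagonal kernel bound that is summable over the internal spatial vertices. A standard polymer/cluster estimate (absorbing the hopping and reservoir decays into cluster activities) then shows that the full series converges absolutely, uniformly in $\La$, and that every term is $\scrB_1$-dominated by a $\La$-independent summable majorant. Dominated convergence allows one to interchange $\tdl$ with the summation and to apply it term by term.

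For each fixed term, convergence rests on three ingredients: $\hat\psi^{\La}(z)\to\hat\psi(z)$ uniformly on compacts in $\bbH_\be$ by Assumption~\ref{ass: exponential decay}; the finite-volume free propagator kernel $(\e{-\ii s H_{\sys}^{\La}})(x,x')$ converges pointwise to the infinite-volume kernel, because the Dirichlet cut-off affects only contributions involving paths exiting $\La$, whose combined weight is exponentially small in $\mathrm{dist}(\{x,x'\},\La^c)$ by~\eqref{eq: propagation bound}; and the normalizing factor $Z_{\rho_\sys}^{\La}\to 1$, since $\rho_\sys$ is exponentially localized. Trace preservation then carries over from the finite-volume identity $\Tr_\sys[\caZ_{[0,t]}^{\La}\rho_\sys^{\La}]=\Tr_\sys[\rho_\sys^{\La}]$ by $\scrB_1$-continuity of the trace, and positivity is inherited because each $\caZ_{[0,t]}^{\La}$ is completely positive (being the reduction of a unitary conjugation) and the positive cone in $\scrB_1(\scrH_\sys)$ is trace-norm closed. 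Exponential localization of $\caZ_{[0,t]}\rho_\sys$ is read off directly from the polymer kernel bound combined with the decay of $\rho_\sys$. Translation invariance follows because $\caT_{-y}\caZ_{[0,t]}^{\La}\caT_{y}$ coincides with the reduced dynamics on the shifted volume $\La+y$ (the kinetic term, the sum $\sum_{x\in\La}\lone_x\otimes\Psi_x(\phi)$, and the reservoir transform covariantly), so in the thermodynamic limit both yield the same $\caZ_{[0,t]}$.

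The main obstacle is the uniform-in-$\La$ control of the Dyson series: at order $n$, the naive bound is $(Ct)^n/n!$ times a combinatorial factor counting pairings that is not obviously tame. This is where the polymer reformulation of Subsection~4.5 becomes essential: pairings are grouped into connected clusters of bounded time-extent (made possible by the exponential decay in Assumption~\ref{ass: exponential decay}), and a standard convergence criterion for polymer expansions then yields an $\e{Ct}$-type bound uniformly in $\La$ and in the localization parameters of $\rho_\sys$, which is the precise ingredient needed to justify all the interchanges of limits above.
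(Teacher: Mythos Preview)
Your overall architecture --- Dyson expansion, Wick's theorem, Combes--Thomas, termwise thermodynamic limit via dominated convergence --- is exactly the paper's approach (Lemma~\ref{lem: thermo on kernels} and the short proof that follows). The treatment of positivity, trace preservation, translation invariance, and exponential localization is also right.

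Where you depart from the paper is in your diagnosis of the ``main obstacle''. You claim that the naive bound at order $2n$ involves a pairing count that is ``not obviously tame'' and that the polymer reformulation of Subsection~4.5 is \emph{essential} to control it. This is not so. The paper's bound (see the computation leading to~\eqref{eq: from cag to caz}) is simply
\[
\sum_{n\ge 0}(2n-1)!!\,\frac{(C\str I\str)^{2n}}{(2n)!}
=\sum_{n\ge 0}\frac{(C\str I\str)^{2n}}{2^n n!}
=\e{C\str I\str^2/2}\,,
\]
using only that $\str\zeta(\varpi,\pi)\str\le C^n$ (boundedness of $\hat\psi$, not its exponential decay). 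The double factorial from Wick pairings is exactly compensated by the $(2n)!$ from the time-ordered simplex, so no polymer machinery is needed here. The paper stresses this explicitly at the end of Subsection~\ref{sec: thermo}: Lemmas~\ref{lem: thermodynamic dyn} and~\ref{lem: thermodynamic obs} are the only results that require \emph{neither} exponential decay of $\hat\psi$ \emph{nor} small coupling $\la$.

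Your proposed route via the polymer expansion would work, but it buys a sharper bound ($\e{C\str I\str}$ versus $\e{C\str I\str^2}$) at the cost of importing the smallness condition on $\la$ that the polymer convergence needs (cf.\ Lemma~\ref{lem: bounds on diagrams}). Since Lemma~\ref{lem: thermodynamic dyn} is stated without any restriction on $\la$ or $\field$, your argument as written would prove only a weaker statement. Drop the last paragraph, replace the appeal to exponential decay of $\hat\psi$ by mere boundedness, and observe the elementary summation above; then your proof matches the paper's.
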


\subsubsection{Correlation functions}
Next, we define the infinite-volume analogues of the finite-volume correlation
functions 
\begin{equation*}
\langle O^{\La}_m(t_m)\cdots O^{\La}_1(t_1)\rangle_{\rho^\La_\refer} \qquad
\text{and}  \qquad \langle O^{\La}_m(t_m)\cdots
O^{\La}_1(t_1)\rangle_{\rho^\La_\sys \otimes \rho^\La_{\res,\refer}}
\end{equation*}
that have been introduced in Section~\ref{effective dynamics}.

Consider observables $O_1, \ldots, O_m$ in $\mathfrak{A}$ or $\mathfrak{X}$,
times $0 \le t_1 < \ldots < t_m$, and an exponentially localized density matrix
$\rho_\sys$. We then define
 \begin{align}
\langle O_m(t_m)\cdots O_1(t_1)\rangle^{}_{\rho_{\sys}\otimes
\rho_\referres}\deq\tdl\langle O^{\La}_m(t_m)\cdots
O^{\La}_1(t_1)\rangle^{}_{\rho^{\La}_{\sys}\otimes \rho_{\referres}^{\La}} \,.
\label{def: thermo limit loc}
\end{align}

Similarly, for observables $O_1,
\ldots, O_m \in \mathfrak{A}_{\mathrm{ti}}$ and times $0 \le t_1 < \ldots <
t_m$, we define
 \begin{align}
\langle O_m(t_m)\cdots O_1(t_1)\rangle_{\rho_\be}\deq\tdl\langle
O^{\La}_m(t_m)\cdots O^{\La}_1(t_1)\rangle_{\rho^{\La}_\be}\,.   \label{def:
thermo limit beta}
\end{align}
Note that we construct the thermodynamic limit of equilibrium correlation functions only for translation-invariant observables, since, pictorially,
the particle is uniformly distributed in space and hence the expectation values of localized observables vanish. Also note that, in Equation~\eqref{def: thermo limit beta}, we do \textit{not}
constrain the time-evolution to be the one generated by the Hamiltonian with
$\field=0$. But, of course, we  have to do so if we want the correlation
functions to be stationary in time, as in the following lemma.
\begin{lemma}\label{lem: thermodynamic obs}
The limits on the right hand sides of Equations~\eqref{def: thermo limit loc} and ~\eqref{def: thermo limit beta}
exist.   For $m=1,2$, they exist for $t_1, t_2 \in \bbR$ (i.e., not
necessarily positive or ordered).  
For the dynamics with $\field=0$, and for arbitrary observables $O_1, O_2 \in
\mathfrak{A}_{\mathrm{ti}}$, writing $O(t)$, instead of $O^{\field=0}(t)$, and
$O_\Theta$, instead of $\Theta_\sys O \Theta_\sys $, the following properties hold:
\begin{itemize}
\item[$i.$]Stationarity:  $\langle O_2(t_2)O_1(t_1) \rangle_{\rho_\be}=\langle
O_2(t_2+t)O_1(t_1+t) \rangle_{\rho_\be}$, for any $t \in \bbR$;
\item[$ii.$] Time-reversal invariance:  $\langle
O_{2,\Theta}(-t_2)O_{1,\Theta}(-t_1) \rangle_{\rho_\be}=\overline{\langle
O_2(t_2)O_1(t_1) \rangle}_{\rho_\be}$;
\item[$iii.$] KMS condition: There exists a function $ z \mapsto f_{ O_1, O_2}(z)$,
analytic in the interior of the strip $\bbH_{\be}$, bounded and continuous on $\bbH_{\be}$, that satisfies the \emph{KMS (boundary) condition}
\begin{equation*}
f_{ O_1, O_2}(t) \deq \langle O_2 O_1(t)\rangle_{\rho_\be}, \qquad  f_{ O_1,
O_2}(t+\ii\beta)=\langle O_1(t) O_2\rangle_{\rho_\be}\,,\quad\quad t\in\R\,.
\end{equation*}
\end{itemize}
\end{lemma}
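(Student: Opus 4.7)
The plan is to realize both families of correlation functions as sums arising from the Dyson/polymer expansions of Sections~4--5, apply a $\La$-uniform dominating estimate, and then pass to the thermodynamic limit inside these expansions. Stationarity and time-reversal invariance are essentially finite-volume identities that survive the limit; the KMS condition is obtained by a Vitali-type argument in the strip variable $z$.

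For \eqref{def: thermo limit loc}, I first use the partial trace $\Tr_{\res}$ and cyclicity to write the finite-volume correlation function as a composition of insertions of the $O_j^{\La}$ interleaved with multi-time generalizations of the reduced dynamics of Lemma~\ref{lem: thermodynamic dyn}. These generalized objects admit the same polymer structure as $\caZ_{[0,t]}^{\La}$, with observable insertions acting as fixed internal vertices that the reservoir pairings may or may not straddle. Using Assumption~\ref{ass: exponential decay} to bound the reservoir activities and the Combes--Thomas estimate \eqref{eq: propagation bound} for the free propagator, each polymer contribution is bounded by a $\La$-independent weight that is summable; this provides the dominating bound needed to commute $\tdl$ with the expansion sum. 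Term-by-term convergence follows from $\tdl \hat\psi^{\La} = \hat\psi$ together with the Riemann-sum convergence of the finite-volume momentum sums, while observables in $\mathfrak{X}$ are controlled by exponential localization of $\rho_\sys$ absorbing polynomial growth in $X$. The polymer bound is uniform in the ordering of the $t_j$, so the statement for $m \le 2$ with arbitrary real $t_1,t_2$ is covered.

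For \eqref{def: thermo limit beta}, I would further expand $\rho_\be^{\La} = (Z_\be^{\La})^{-1} e^{-\be H^{\field=0,\La}}$ via the imaginary-time (Araki) Dyson expansion, picking up contour-time integrations over $[-\be,0]$. Combined with the real-time Dyson expansion of the Heisenberg evolutions, this yields a single polymer gas living on the contour $[-\be,\max_j t_j]$. Since the observables are translation-invariant, all insertions commute with translations and the expansion is compatible with the fiber decomposition of Subsection~5.4; uniform bounds in $\La$ again follow from Assumption~\ref{ass: exponential decay}. Translation invariance is essential here: expectation values of spatially localized observables in $\rho_\be^{\La}$ scale like $|\La|^{-1}$, so the restriction to $\mathfrak{A}_{\mathrm{ti}}$ is unavoidable. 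Extension from ${\mathop{\mathfrak{A}}\limits^{\circ}}_{\mathrm{ti}}$ to $\mathfrak{A}_{\mathrm{ti}}$ is by density.

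Properties $i$ and $ii$ hold in each finite volume by cyclicity of $\Tr$ and by the symmetries $\Theta H^{\field=0}\Theta = H^{\field=0}$, $\Theta \rho_\be \Theta = \rho_\be$ recorded in Section~\ref{sec: model}, and pass to the limit immediately using the convergence just established. For $iii$, the finite-volume function
\begin{equation*}
f_{O_1,O_2}^{\La}(z) \deq \langle O_2^{\field=0,\La}(z) O_1^{\La} \rangle_{\rho_\be^{\La}}
\end{equation*}
is analytic on the interior of $\bbH_\be$, bounded and continuous on $\bbH_\be$, and satisfies \eqref{eq: finite volume kms} in finite volume. The contour polymer expansion yields a bound on $f_{O_1,O_2}^{\La}$ uniform in $z \in \bbH_\be$ and in $\La$; Vitali's theorem then extracts an analytic limit $f_{O_1,O_2}$ on the interior of $\bbH_\be$, and the two boundary identities pass to the limit term by term in the expansion. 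The main obstacle is precisely the \emph{closed-strip} uniform bound: analyticity in the open strip is comparatively cheap, but one needs the polymer activities to decay exponentially in the polymer's extent \emph{uniformly} in $\im z \in [0,\be]$, which calls on the full strength of the uniform-in-$\bbH_\be$ statement in Assumption~\ref{ass: exponential decay} together with careful bookkeeping of the imaginary-time contour segments so that the total weight does not explode with $L$.
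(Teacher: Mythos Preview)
Your overall architecture---Dyson/polymer expansion, $\La$-uniform domination, termwise limit, then inherit finite-volume identities---is the paper's approach. Two points, one substantive and one minor.

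\textbf{The equilibrium normalization is the missing step.} For \eqref{def: thermo limit beta} you write that you expand $\rho_\be^{\La}=(Z_\be^{\La})^{-1}\e{-\be H^{\field=0,\La}}$ and then assert that ``uniform bounds in $\La$ again follow from Assumption~\ref{ass: exponential decay}.'' But the naive contour expansion produces, after tracing, a sum over the particle's initial position $x\in\La$, i.e.\ a factor $|\La|$ which must be cancelled by $1/Z_\be^\La$; you never explain why this ratio stays bounded and converges. The paper handles this by a separate ``pinning'' lemma (Lemma~\ref{lemma: use of pinned}): one writes $\rho_\be^\La=\frac{Z_{\referres}|\La|}{Z_\be}\,\caD(\lone_0\otimes\rho_\referres)$ up to boundary terms of order $|\partial\La|/|\La|$, and proves independently that the scalar $\frac{Z_{\referres}|\La|}{Z_\be}$ has a finite thermodynamic limit. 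After this reduction the equilibrium correlation becomes $\Tr_\sys[\caG_{[-\be/2,t]}(\cdot)\eta_\be]$ with $\eta_\be$ an exponentially \emph{localized} rank-one operator, and one is back in the setting you already controlled for \eqref{def: thermo limit loc}. Without some device of this kind (pinning, or a direct proof that the per-site free energy converges), your equilibrium argument has a gap.

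\textbf{You are invoking more than is needed.} You emphasize that the polymer activities must ``decay exponentially in the polymer's extent uniformly in $\im z\in[0,\be]$.'' The paper explicitly remarks (end of Section~\ref{sec: thermo}) that Lemmas~\ref{lem: thermodynamic dyn} and~\ref{lem: thermodynamic obs} do \emph{not} require the exponential decay of $\hat\psi$, nor smallness of $\la$ or $\field$. The crude estimate of Lemma~\ref{lem: thermo on kernels}, namely $|\caA^\La(w,w')|\le C_I\,\e{-c|w-w'|}$ with a constant $C_I$ growing (badly) in $|I|$ but uniform in $\La$, already suffices: for the KMS statement one just checks that when $s_1-s_2\in\bbH_\be$ all arguments of $\hat\psi$ in the extended correlation function \eqref{eq:extension correlation to negative} remain in $\bbH_\be$, where Assumption~\ref{ass: exponential decay} gives mere \emph{boundedness}. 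Uniform-on-compacts convergence of the finite-volume functions then yields analyticity of the limit directly; your Vitali route works too, but the closed-strip uniform bound you worry about is already supplied by the crude $C_I$ estimate once the pinning reduction is in place.
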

We remark that there is no particular reason to limit our construction of
general correlation functions to one- and two-point functions, $m=1,2$. However,
focusing on these special cases will enable us to keep our notation manageable
in the technical sections. Lemmas~\ref{lem: thermodynamic dyn} and \ref{lem: thermodynamic obs} are proven in Section~\ref{sec: dyson
analyis}, using rather straightforward estimates. These are the only ones among our
results that do not require exponential decay of the reservoir correlation
function $\hat \psi$ (cf.\ Assumption~\ref{ass: exponential decay}), nor small coupling, $\lambda$, or weak external field, $\field$.

\subsection{Results}\label{sec: small results}
Next, we summarize our \textit{main results}. Throughout this section, it is understood that we consider the infinite-volume system; i.e., $\Lambda=\Z^d$, $\bar\Lambda=\R^d$.  

Theorem~\ref{thm:correlationfunctions} concerns the system in equilibrium, i.e., in a vanishing external force field, and asserts that, in this case, the system has the property of `return to equilibrium'. Theorem~\ref{thm: stationary} states the corresponding result for an off-equilibrium system: It claims that the state of the system approaches, for small external force fields~$\field$, a `{\it Non-Equilibrium Stationary State}' (NESS) in the limit of large times. Theorem~\ref{thm: diffusion} asserts that the motion of the particle is \textit{diffusive} at large times. For $\field\not=0$, this diffusive motion is around an average uniform motion (i.e., a drift at a constant velocity). Our last result, Theorem~\ref{thm: einstein}, confirms the fluctuation-dissipation formula of linear response theory: The equilibrium ($\field=0$) diffusion matrix is related to the response of the particle's motion to the field $\field$ through the `Einstein relation'.

 Theorems~\ref{thm: stationary} and~\ref{thm: diffusion} have already appeared in~\cite{paper1}, (see Theorems~3.2 and~3.3). We restate them here for completeness. The  main purpose of the present paper is to prove Theorems~\ref{thm:correlationfunctions} and~\ref{thm: einstein}, besides developing analytical techniques and establishing technical results that have been used to prove various results in \cite{paper1}.

Our first result (partially contained in \cite{deroeckfrohlichpizzo}) concerns
the model without external force field, i.e., $\field=0$.   

\begin{theorem}\emph{[Return to equilibrium]}\label{thm:correlationfunctions}
Let $\field = 0$. Then there exist a constant $k_\la>0$ and a decay rate $g>0$ such
that, for $0 < \str \la \str < k_\la$, the following holds.
For an arbitrary exponentially localized density matrix $\rho_{\sys}$, arbitrary
observables $O_1, \ldots, O_m \in \mathfrak{A}_{\mathrm{ti}}$ and times $0 \le
t_1 <\ldots < t_m$, 
\begin{align}\label{eq:2.12}
\langle O_m(t_m)\cdots O_1(t_1)\rangle_{\rho_{\sys}\otimes\rho_\referres}=
\langle O_m(t_m)\cdots O_1(t_1)\rangle_{\rho_\be}+ \caO(\e{- \la^2 g t_1})\,,
\qquad as \text{       } t_1 \to \infty\,,
\end{align}
and  the  correlation functions exhibit the following `exponential cluster
property': 
\begin{align}\label{eq:2.13}
\langle O_2(t_2) O_1(t_1)\rangle_{\rho_\be} =  \langle
O_1\rangle_{\rho_\beta}\langle O_2\rangle_{\rho_\beta}+\caO(\e{- \la^2 g \str
t_2-t_1 \str}) \,,\qquad as \text{    }t_2-t_1 \to \infty\,.
\end{align}
(In these equations, $O(t)$ stands for $O^{\field = 0}(t)$.)
\end{theorem}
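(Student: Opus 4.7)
The plan is to exploit the pseudo-resolvent machinery developed in Sections~5 and~6, combined with the translation invariance that holds at $\field=0$, and to reduce both claims to a single-time exponential convergence statement for the reduced dynamics.

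First I would reduce the multi-time correlation function on the left side of~\eqref{eq:2.12} to a composition of reduced dynamics acting on the particle Hilbert space. Using the Dyson/polymer expansion of Section~4 applied to each intertwined reservoir factor, the correlation function can be written, up to an error coming from reservoir ``memory bonds'' that cross between different time intervals $[t_i,t_{i+1}]$, as a product of $\caZ^{\field=0}_{[t_i,t_{i+1}]}$ operators interleaved with the observables $O_i$. The exponential decay of $\hat\psi$ in Assumption~\ref{ass: exponential decay} controls these crossing bonds and yields an error of order $\caO(\e{-g_{\res} t_1})$. Exactly the same expansion, applied to the equilibrium correlation function on the right side of~\eqref{eq:2.12}, produces the same compositional structure but with the initial state replaced by the reduced Gibbs state $\rho_\be^{\sys}\deq\Tr_\res[\rho_\be]$. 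Hence~\eqref{eq:2.12} reduces to showing
\begin{equation*}
\caZ^{\field=0}_{[0,t_1]}\rho_{\sys}\;=\;\Tr_\sys[\rho_\sys]\,\rho_\be^{\sys} + \caO(\e{-\la^2 g t_1}) \qquad \text{in trace norm.}
\end{equation*}

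To obtain this, I would use that, for $\field=0$, $\caZ^{\field=0}_{[0,t]}$ is translation-invariant (Lemma~\ref{lem: thermodynamic dyn}), hence admits the fiber decomposition of Subsection~5.4, $\caZ^{\field=0}_{[0,t]}=\int^{\oplus}\caZ^{\field=0}_{[0,t]}(p)\,\mathrm{d}p$ with $p\in\tor$ the total momentum of the density matrix. The Laplace-transformed fiber operator is, by Subsection~5.5, a pseudo-resolvent $(z-\bsM(z,p))^{-1}$ up to a subleading remainder, with $\bsM(z,p)$ a small perturbation (of order $\la^2$) of the free fibered dynamics. The key spectral input, which I expect to read off from the analysis of Subsection~5.5, is that $\bsM(0,0)$ has a simple eigenvalue at $0$ with eigenvector $\rho_\be^{\sys}(0)$ (the $p=0$ fiber of the reduced Gibbs state) and a spectral gap of size $\la^2 g$, while for $p\ne 0$ the whole spectrum of $\bsM(0,p)$ sits in $\{\re z\le-\la^2 g\}$. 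The simple eigenvalue at $p=z=0$ is forced by trace preservation; its identification with $\rho_\be^{\sys}$ comes from the KMS condition (Lemma~\ref{lem: thermodynamic obs}$(iii)$), which exhibits $\rho_\be^{\sys}$ as a stationary state of the effective dynamics. The gap rests on the detailed-balance property $\e{\be\om}\psi(\om)=\psi(-\om)$ noted after~\eqref{definition of psi ohne hut}, which makes the leading (Davies-type) second-order contribution to $\bsM(0,0)$ dissipative off $\rho_\be^{\sys}$.

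Given the gap, I would close the Bromwich contour for the inverse Laplace transform in $\{\re z<-\la^2 g+\ep\}$, picking up only the rank-one residue at $z=0$ in the $p=0$ fiber; this residue equals $\Tr_\sys[\,\cdot\,]\,\rho_\be^{\sys}$ and the remainder decays like $\e{-\la^2 g t}$, as required. For the cluster property~\eqref{eq:2.13}, I would use stationarity (Lemma~\ref{lem: thermodynamic obs}$(i)$) to set $t_1=0$ and then repeat the Dyson reduction: this represents $\langle O_2(t_2)O_1\rangle_{\rho_\be}$ as a trace against $\caZ^{\field=0}_{[0,t_2]}(O_1\rho_\be^{\sys})$ up to an $\caO(\e{-g_\res t_2})$ error. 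The operator $O_1\rho_\be^{\sys}$ is not invariant, but it has trace $\langle O_1\rangle_{\rho_\be}$; applying the same gap analysis, only the component $\langle O_1\rangle_{\rho_\be}\,\rho_\be^{\sys}$ survives, yielding $\langle O_2\rangle_{\rho_\be}\langle O_1\rangle_{\rho_\be}+\caO(\e{-\la^2 g t_2})$.

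The hardest step will be the spectral analysis of Subsection~5.5 in the present context: one must verify both that the eigenvalue of $\bsM(0,0)$ at $0$ is \emph{simple} with eigenvector \emph{exactly} $\rho_\be^{\sys}$ (rather than some other stationary state accessible to the perturbed generator), and that the gap holds \emph{uniformly} in $p\in\tor$. Simplicity and identification with $\rho_\be^{\sys}$ should follow from uniqueness of the invariant state of the Davies generator together with the KMS characterization, but propagating this through all orders of the polymer expansion and through the fiber variable $p$ is where the bookkeeping is delicate and where Assumption~\ref{ass: analytic dispersion} (non-degeneracy of $\nabla\ve$) enters to rule out stationary states at $p\ne 0$.
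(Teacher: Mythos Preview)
Your high-level strategy --- fiber decomposition, spectral gap for the pseudo-resolvent, rank-one residue at $z=0$ in the $p=0$ fiber --- is exactly the one the paper uses. But there is a genuine gap in how you handle the \emph{equilibrium} side of~\eqref{eq:2.12} and~\eqref{eq:2.13}.

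You write that the Dyson expansion of the equilibrium correlation function ``produces the same compositional structure but with the initial state replaced by the reduced Gibbs state $\rho_\be^{\sys}\deq\Tr_\res[\rho_\be]$.'' This is not correct, for two reasons. First, in infinite volume the particle in the interacting Gibbs state is delocalized, so $\rho_\be^{\sys}$ does not exist as a trace-class operator on $\ell^2(\bbZ^d)$; there is no density matrix to plug into $\caZ_{[0,t]}$. Second, even in finite volume, $\rho_\be$ is \emph{entangled} between system and reservoir, so $\Tr_\res[\e{-\ii t\caL}O_1\rho_\be]$ is not $\caZ_{[0,t]}(O_1\rho_\be^\sys)$ plus a crossing-bond remainder. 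The paper handles this via the ``Radon--Nikodym'' representation of Section~\ref{negative1}: one writes $\rho_\be$ as $\caD(\eta_\be\otimes\rho_\referres)$ with $\eta_\be$ a \emph{pinned} (exponentially localized) rank-one operator, expands $\caD$ as an imaginary-time Dyson series, and obtains an extended polymer expansion on the interval $[-\be/2,\infty)$. This produces $\caZ_{[-\be/2,t]}$ and, after Laplace transforming, the relation $\caR_\be(z)=\caR(z)\caY(z)$ of~\eqref{eq: relation car and carbe}. The identification $\langle M_f\rangle_{\rho_\be}=\langle f,\zeta^0\rangle$ (i.e.\ that the residue of the pseudo-resolvent really \emph{is} the equilibrium functional) is then obtained by computing $\lim_{z\to 0}z\langle f,(\caR(z)\caY(z)\eta_\be)_0\rangle$ and using stationarity; your KMS argument that $\rho_\be^{\sys}$ is ``a stationary state of the effective dynamics'' does not make sense as stated, because $\caZ_{[0,t]}$ is not a semigroup and $\rho_\be^\sys$ is not a density matrix.

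Two smaller points. You require the gap to hold uniformly in $p\in\tor$; this is neither proven in the paper nor needed. Since all observables are in $\mathfrak A_{\mathrm{ti}}$, the traces only see the $p=0$ fiber, and the whole proof lives there. Finally, for the cluster property the paper does not merely bound the crossing-bond correction: it shows it \emph{vanishes} via a sum rule (Lemma~\ref{lem: sum rule}), namely $P^0\int\!\!\int(\caV_{[t_1,t_2]}(\caS_1)\caZ_{[u,t_1]})_0\,\d t_1\,\d t_2=0$, which follows from trace preservation. Your alternative of simply bounding this term by $\caO(\e{-g_\res(s_2-s_1)})$ would also work, but you should be aware that the paper's route is cleaner.
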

In~\eqref{eq:2.13}, we consider only two
observables; but an analogous statement holds for $n>2$ observables. 

As already remarked, $\mathfrak{A}_{\mathrm{ti}}$ is commutative. Hence the positive and normalized functional $O \mapsto \langle
O \rangle_{\be}$ on $\mathfrak{A}_{\mathrm{ti}}$ can be expressed in terms of a probability measure. Recall that $O \in \mathfrak{A}_{\mathrm{ti}}$ is of the form $M_f$ for $f \in C(\tor)$. Anticipating the notation of Theorem \ref{thm: stationary}, we can write
\begin{align*}
\langle M_f \rangle_{\rho_\be}=   \langle  f, \zeta^{0,\la}  \rangle_{\mathrm{L}^2(\tor)}\,.
\end{align*}

Next, we state results that hold off equilibrium: In the theorems below, we use the notation~$O(t)$ for~$O^{\field}(t)$, even if $\field \neq 0$.

Our next result describes the approach of
the state of the system to a `Non-Equilibrium Stationary State' (NESS), in the
limit of large times. However, it is slightly weaker, because we are forced to consider
ergodic averages, since the external force field attenuates dissipative effects of the reservoirs; for a more extended discussion we refer to~\cite{paper1}.   In fact,  in statement \eqref{eq:diffconst} of Theorem \ref{thm: diffusion}, we cannot even control the ergodic average, but only the abelian average.

\begin{theorem}\label{thm: stationary}\emph{[Approach to NESS]}
There are constants
$k_\lambda, k_\field$, such that, for $0 <|\lambda| < k_\lambda$, $|\field|<
k_\field$, there exists a real-analytic function $\zeta \equiv
\zeta^{\field,\la}$ on $\bbT^d$, satisfying $\zeta \geq 0$ and $ \int_{\bbT^d} \dd k\, \zeta(k)=1$, i.e., $\zeta$ is a probability density,  such that the following statements hold for any 
exponentially localized density matrix, $\rho_\sys$,  and continuous function $f: \tor \to \bbR$:
\begin{itemize}
\item[$i.$] For $\field\not=0$, 
\begin{equation*}
\frac{1}{T}\int_{0}^{T}\,\dd t\,  \langle M_f(t)\rangle_{ \rho_{\sys} \otimes
\rho_\referres}  =     \langle  f, \zeta^{\field,\la}  \rangle_{\mathrm{L}^2(\tor)} +
\caO(1/T)\, , \qquad\textrm{ as } T \to \infty\,. 
\end{equation*}
\item[$ii.$] For $\field = 0$, 
\begin{align*}
\langle M_f(t)
\rangle_{ \rho_{\sys} \otimes
\rho_\referres}= \langle f,\zeta^{0,\lambda}\rangle_{\mathrm{L}^2(\tor)}+\caO(\e{-\lambda^2 g t})\,, \qquad\textrm{ as } t \to \infty\,,
\end{align*}
 where $g>0$ is the decay constant appearing in~\eqref{eq:2.12} and~\eqref{eq:2.13}. Moreover $\zeta^{0,\lambda}$ satisfies `time
reversal invariance'; $\zeta^{0,\la}(k) =    \zeta^{0,\la}(-k) $.
\end{itemize}
\end{theorem}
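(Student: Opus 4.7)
The plan is to reduce the problem to a spectral analysis of the restriction of the reduced dynamics to the translation-invariant (``zero momentum-transfer'') fiber, and then to extract large-$t$ information from the behaviour of the associated pseudo-resolvent near the origin, as set up in Sections~5 and~6.

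First I would exploit that $M_f \in \mathfrak{A}_{\mathrm{ti}}$ and that $\caZ_{[0,t]}$ is translation-invariant (Lemma~\ref{lem: thermodynamic dyn}), so that the fiber decomposition of Subsection~5.4 applies. Only the fiber of zero momentum-transfer, call it $\caZ^{(0)}_{[0,t]}$, contributes: writing $\hat\rho_t(k)$ for the momentum density associated with $\caZ_{[0,t]}\rho_\sys$, one obtains
\begin{equation*}
\langle M_f(t)\rangle_{\rho_\sys\otimes\rho_\referres}=\int_{\tor}f(k)\,\hat\rho_t(k)\,\dd k,\qquad \hat\rho_t=\caZ^{(0)}_{[0,t]}\hat\rho_0,
\end{equation*}
where $\caZ^{(0)}_{[0,t]}$ acts on $\mathrm{L}^1(\tor)$ and is Markovian (positivity and $\int\hat\rho_t\,\dd k=\int\hat\rho_0\,\dd k$ inherited from the corresponding properties of $\caZ_{[0,t]}$). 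This reduces the theorem to proving (i) Ces\`aro convergence of $\hat\rho_t$ when $\field\neq 0$ and (ii) strong, exponentially fast convergence when $\field=0$, the limit being $\zeta^{\field,\la}$ in both cases.

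Next I would introduce the pseudo-resolvent $R^{(0)}(z)\deq\int_0^\infty \e{-zt}\caZ^{(0)}_{[0,t]}\,\dd t$ (Subsection~5.3), initially defined for $\re z>0$. The key structural input, to be taken from the perturbative spectral analysis of Subsection~5.5 and Section~6, is that for $|\la|$ and $|\field|$ small $R^{(0)}(z)$ admits a meromorphic continuation to a neighbourhood of $0$ in the closed right half-plane, with a \emph{simple} pole at $z=0$ whose residue is a rank-one operator $\ket{\zeta^{\field,\la}}\bra{\mathbf{1}}$; here $\zeta^{\field,\la}$ is the unique (up to normalization) right null-vector of the effective generator restricted to the zero fiber. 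Real-analyticity of $\zeta^{\field,\la}$ on $\tor$ follows from analyticity of the effective collision kernel inherited from Assumption~\ref{ass: analytic dispersion}; positivity and $\int\zeta^{\field,\la}\,\dd k=1$ follow from the Markov property recorded above; and for $\field=0$ the time-reversal operator $\Theta$ together with the detailed balance relation noted after~\eqref{definition of psi ohne hut} forces $\zeta^{0,\la}(k)=\zeta^{0,\la}(-k)$.

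Finally I would split into the two cases. For $\field=0$, the same analysis shows that, apart from the simple pole at $z=0$, the meromorphic continuation of $R^{(0)}$ is analytic in a half-plane $\{\re z>-\la^2 g\}$; a standard contour-deformation inverse-Laplace argument then yields $\hat\rho_t=\zeta^{0,\la}\langle\mathbf{1},\hat\rho_0\rangle + \caO(\e{-\la^2 g t})$, which is statement~(ii). For $\field\neq 0$ no such spectral gap is available, but the simplicity of the pole at $z=0$ suffices: writing $\hat\rho_t=\zeta^{\field,\la}\langle\mathbf{1},\hat\rho_0\rangle+\widetilde\rho_t$, the Laplace transform of $\widetilde\rho_t$ is bounded at $z=0$, so $\int_0^T\widetilde\rho_t\,\dd t$ is uniformly bounded in $T$, and dividing by $T$ yields the $\caO(1/T)$ rate of statement~(i). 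The main obstacle is Step~2: under the \emph{joint} perturbations in $\la$ and $\field$, and with $\caZ^{(0)}_{[0,t]}$ only a pseudo-semigroup with memory, establishing that the pole at $z=0$ is genuinely simple and separating it from any other spectrum on the imaginary axis requires the Feshbach-type reduction and the delicate fiber-wise analysis developed in Subsections~5.5 and~6.
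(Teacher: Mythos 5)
Your overall architecture coincides with the one the paper intends: restriction to the zero fiber via the decomposition of Subsection~\ref{sec: fiber decomposition}, the pseudo-resolvent of Lemma~\ref{lem: pseudoresolvent}, the simple pole at $z=0$ with rank-one residue $\ket{\zeta}\bra{1}$ and bounded analytic remainder from Lemmas~\ref{lem: prop for asymptotic} and~\ref{cor:laurent}, inverse Laplace transform with a genuine spectral gap for $\field=0$ (this is Theorem~\ref{thm: equilibrium rte}), and a Tauberian-type extraction of the ergodic mean for $\field\neq 0$; the paper itself assembles exactly these ingredients (in the companion paper) rather than proving the theorem here.

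There is, however, a genuine gap at the decisive step of part~$i$. You argue: ``the Laplace transform of $\widetilde\rho_t$ is bounded at $z=0$, so $\int_0^T\widetilde\rho_t\,\dd t$ is uniformly bounded in $T$.'' As a general implication this is false: for a bounded scalar function, boundedness (even vanishing) of the Laplace transform as $z\downarrow 0$ along the reals does not control the primitive. For instance $h(t)=\sin\sqrt{t}$ is bounded, its Laplace transform behaves like $\tfrac{\sqrt{\pi}}{2}\,z^{-3/2}\e{-1/(4z)}\to 0$ as $z\downarrow 0$, yet $\int_0^T h(t)\,\dd t\sim -2\sqrt{T}\cos\sqrt{T}$ is unbounded. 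What saves the argument in the present setting is not boundedness at the point $z=0$ but the fact, supplied by Lemma~\ref{lem: prop for asymptotic}~$iv$ together with~\eqref{eq: eigenvalue is zero}, that the Laplace transform of $\hat\rho_t-\zeta$ extends to a function $R^{0}(z)\hat\rho_0$ that is analytic and \emph{bounded on a full complex disc} $B_{\la^2 r}$ around $z=0$ (and of course analytic with the a priori bound $\caO(1/\re z)$ in the open right half-plane, since $\norm\caZ_{[0,t]}\norm\le C$ uniformly in $t$). Turning this into the uniform bound on $\int_0^T(\hat\rho_t-\zeta)\,\dd t$ requires an actual contour argument of Newman--Ingham type: one inverts the Laplace transform along $\re z=1/T$, deforms the contour into $\{\re z<0\}$ only inside the disc $B_{\la^2 r}$, and controls the remaining vertical pieces with the usual multiplier trick (or, equivalently, invokes a quantified Tauberian theorem whose hypothesis is analyticity in a neighbourhood of $z=0$, not boundedness at $z=0$). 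This is precisely the point where the restriction to an ergodic mean for $\field\neq 0$ enters, since no analytic continuation is available elsewhere on the imaginary axis, and it is the one step of your proposal that needs to be supplied rather than asserted. The remaining items (positivity and normalization of $\zeta$, real-analyticity, the symmetry $\zeta^{0,\la}(k)=\zeta^{0,\la}(-k)$, and the exponential rate for $\field=0$) are consistent with Lemma~\ref{cor:laurent} and Theorem~\ref{thm: equilibrium rte} and pose no problem.
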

Our next result asserts that the motion of the particle is \textit{diffusive} around an average uniform motion.
 \begin{theorem}\label{thm: diffusion}\emph{[Diffusion]}
 Under  the same assumptions as in Theorem~\ref{thm: stationary}, 
\begin{equation*}
\lim_{t\to \infty}\frac{1}{t}    \langle X(t) \rangle_{\rho_{\sys} \otimes
\rho_\referres}  =v(\field)\,,
\end{equation*}
where $v(\field)$ is the `asymptotic velocity' of the particle and is given by
$v(\field)=\langle\nabla\varepsilon,\zeta^{\field,\la}\rangle$. For
$\field\not=0$, we have $v(\field)\not=0$. The dynamics of the particle is
diffusive,
in the sense that the limits
\begin{equation}\label{eq:diffconst}
D^{ij}(\field) \deq 
\lim_{T\to\infty}\frac{1}{T^2}\int_{0}^{\infty}\,\mathrm{d}t\,\mathrm{e}^{
-\frac{t}{T}}\,  \langle (X^{i}(t)-v^i(\field)t)(X^{j}(t)-v^j(\field)t)
\rangle_{\rho_{\sys} \otimes \rho_\referres}
\end{equation}
exist, where the `diffusion tensor' $D(\field)$ is positive-definite, with
$D(\field)=\caO(\lambda^{-2})$, as $\lambda\to 0$.
\end{theorem}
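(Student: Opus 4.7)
The first step is to reduce the drift statement to Theorem~\ref{thm: stationary}. A short calculation using that the coupling $H_{\sys\res}$ is diagonal in position (each $\lone_x$ commutes with $X$), that $H_\res$ acts trivially on $\scrH_\sys$, and that $\field\cdot X$ commutes with $X$, yields $[H,X]=[T,X]$. Hence $\dot X(t)=\ii[T,X](t)$ and $\ii[T,X^j]=M_{(\nabla\ve)^j}=:V^j$, which lies in $\mathfrak{A}_{\mathrm{ti}}$ by Assumption~\ref{ass: analytic dispersion}. Writing
\begin{equation*}
\frac{1}{t}\langle X(t)\rangle_{\rho_\sys\otimes\rho_\referres} = \frac{\langle X\rangle_{\rho_\sys}}{t}+\frac{1}{t}\int_0^t\langle V(s)\rangle_{\rho_\sys\otimes\rho_\referres}\,\d s,
\end{equation*}
the first term vanishes as $t\to\infty$ because $\rho_\sys$ is exponentially localized, and Theorem~\ref{thm: stationary}$(i)$ applied componentwise identifies the Ces\`aro limit of the second term with $\langle\nabla\ve,\zeta^{\field,\la}\rangle_{\mathrm{L}^2(\tor)}=:v(\field)$. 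To see that $v(\field)\neq 0$ when $\field\neq 0$: at $\field=0$, the time-reversal symmetry $\zeta^{0,\la}(k)=\zeta^{0,\la}(-k)$ of Theorem~\ref{thm: stationary}$(ii)$, combined with $\ve(k)=\ve(-k)$, forces $v(0)=0$; analyticity of $\zeta^{\field,\la}$ in $\field$ near $0$ (which follows from the analytic-perturbation construction of $\zeta$ via the fibred pseudo-resolvent of Subsections~5.3--5.5) makes $v(\field)$ analytic near $\field=0$; and the Einstein relation (Theorem~\ref{thm: einstein}) identifies $\partial_\field v|_{\field=0}=\beta D(0)$, which is positive-definite and therefore invertible, so $v(\field)\neq 0$ for small $\field\neq 0$.

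The diffusion tensor is analyzed along the same route. Substituting $X^i(t)-v^i t=\int_0^t (V^i(s)-v^i)\,\d s$ into~\eqref{eq:diffconst} and performing the $t$-integration against the Abelian weight $\e{-t/T}$ first (the integrand is dominated by $\|V\|^2 t^2 \e{-t/T}$) rewrites $D^{ij}(\field)$ as a double time integral of the velocity--velocity correlation
\begin{equation*}
C^{ij}(s_1,s_2)=\langle(V^i(s_1)-v^i)(V^j(s_2)-v^j)\rangle_{\rho_\sys\otimes\rho_\referres}.
\end{equation*}
Passing to the Laplace-transform picture of Subsection~5.3, the Abelian limit $T\to\infty$ is recast as the evaluation at spectral parameter $z=0$ of the pseudo-resolvent, more precisely as the second-order correction, near $z=0$, to the fibred resolvent at fibre momentum $k=0$, projected onto the complement of the NESS eigenvector $\zeta^{\field,\la}$. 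This is a standard Green--Kubo type representation; positive-definiteness of $D(\field)$ then follows from the quadratic, positive-type structure of the resulting sesquilinear form on $\R^d$.

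\emph{Main obstacle.} The hard part is the spectral control of the pseudo-resolvent near the origin: one must show that, for small $\la$ and $\field$, the reduced generator in the $k=0$ fibre has $0$ as an isolated simple eigenvalue with eigenvector $\zeta^{\field,\la}$, separated from the rest of the spectrum by a gap of order $\la^2$, and that its inverse on the complementary subspace has norm $\caO(\la^{-2})$. This is precisely what Sections~5 and~6 deliver, through the identification of the leading kinetic (Davies-generator) branch in the weak-coupling limit. The $\la^2$ gap is what converts the Abelian limit into a genuine limit and produces $D(\field)=\caO(\la^{-2})$; the joint analyticity in $\la$, $\field$ and $z$ near the origin, together with the perturbative tracking of $\zeta^{\field,\la}$, is where the bulk of the work lies (and is largely the content of \cite{paper1}, to which one appeals for the diffusion theorem itself).
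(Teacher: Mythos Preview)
Your proposal is correct and mirrors the paper's approach: the paper itself does not prove Theorem~\ref{thm: diffusion} but defers to~\cite{paper1}, sketching exactly the reduction $X(t)=X(0)+\int_0^t V(s)\,\d s$ with $V=M_{\nabla\ve}$ that you describe, and relying on the pseudo-resolvent analysis of Sections~5--6 for the spectral input; the argument that $v(\field)\neq 0$ via the Einstein relation and positive-definiteness of $D(0)$ is also the paper's own (see the remark following Theorem~\ref{thm: einstein}).

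Two minor corrections: first, $\zeta^{\field,\la}$ is only shown to be $C^\infty$ in $\field$, not analytic (Lemma~\ref{lem: prop for asymptotic}$(v)$), which still suffices since an invertible derivative at $\field=0$ gives local injectivity; second, the Einstein relation reads $\partial_{\field^i}v^j|_{\field=0}=\lambda^2\beta D^{ij}(0)$, so you dropped a harmless factor $\lambda^2>0$. Also, your substitution $X^i(t)-v^it=\int_0^t(V^i(s)-v^i)\,\d s$ omits the $X^i(0)$ term, but its contribution to \eqref{eq:diffconst} is bounded and vanishes after dividing by $T^2$.
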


Note that the claim about the asymptotic velocity follows formally from
Theorem~\ref{thm: stationary} by defining the velocity operator  as
\begin{equation} \label{def: finite volume velocity}
V^j \deq\ii [ H,X^j]   = \ii [T,X^j]=M_{\nabla^j\epsilon}\,,
\end{equation}
and writing $X(t)=X(0)+\int_0^t \d s V(s)$. Although it is quite easy to make this
reasoning precise, we warn the reader that, at this point, it is
\textit{formal}, because the Heisenberg-picture observables $X(t)$ and $V(t)$ have not been constructed as
operators in the thermodynamic limit. They are formal objects appearing in correlation
functions that are constructed as thermodynamic limits of finite-volume correlation functions.

Our next result states that the equilibrium diffusion matrix $D(\field=0)$
(which is in fact a multiple of the identity matrix) is related to the response
of the particle's motion to the field $\field$. The corresponding identity is
known as the `Einstein relation':
\begin{theorem}\emph{[Einstein relation]}\label{thm: einstein}
 Under  the same assumptions as in Theorem~\ref{thm: stationary}, 
\begin{equation}\label{eq: einstein relation}
\frac{\partial}{\partial
\field^i}\bigg|_{\field=0}v^j(\field)={\lambda^2\beta}D^{ij}(\field=0)\,,
\end{equation}
where $D(\field=0)$ is defined in Equation~\eqref{eq:diffconst} and it equals
\begin{align*}
D^{ij}(\field=0)=\frac{1}{2}\int_{\R}\dd t\,\langle
V^i(t)V^j\rangle_{\rho_\beta}\,.
\end{align*}
\end{theorem}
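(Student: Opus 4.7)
The plan is to derive the Einstein relation by combining a Kubo-type linear response calculation with the KMS condition and a short Fourier analysis. The first step is to cast $v^j(\field)$ in a form amenable to differentiation in $\field$. Since $V^j=\ii[H,X^j]=\ii[T,X^j]$ (the reservoir piece and the driving piece of $H$ commute with $X^j$), one has $X^j(t)=X^j+\int_0^t V^j(s)\,\dd s$, so that, by Theorem~\ref{thm: diffusion}, $v^j(\field)=\lim_{T\to\infty}T^{-1}\int_0^T \langle V^j(t)\rangle_{\rho_\sys\otimes\rho_\referres}\,\dd t$ is an ergodic average of the current.

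Next, I would apply a Duhamel expansion in $\field$. Since $\partial_{\field^i}H=-\la^2 X^i$, first-order perturbation theory gives
\begin{align*}
\partial_{\field^i}\langle V^j(t)\rangle_{\rho_\sys\otimes\rho_\referres}\big|_{\field=0}=-\ii\la^2\int_0^t\dd\tau\,\langle[X^i(\tau),V^j(t)]\rangle^{\field=0}_{\rho_\sys\otimes\rho_\referres}\,.
\end{align*}
Return to equilibrium (Theorem~\ref{thm:correlationfunctions}) lets me replace the product state by $\rho_\be$ up to exponentially small corrections as $t-\tau\to\infty$, and stationarity of $\rho_\be$ (Lemma~\ref{lem: thermodynamic obs}) shifts both arguments by $-\tau$. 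Performing the ergodic average over $t$ and the change of variable $u=t-\tau$ yields the Green-Kubo formula
\begin{align*}
\partial_{\field^i}v^j(\field)\big|_{\field=0}=-\ii\la^2\int_0^\infty\dd u\,\langle[X^i,V^j(u)]\rangle_{\rho_\be}\,.
\end{align*}
The standard KMS identity (a direct consequence of analyticity in the strip $\bbH_\be$ together with the KMS boundary condition of Lemma~\ref{lem: thermodynamic obs}), applied with $A=X^i$, $\dot A=V^i$, and $B=V^j(u)$, namely
\begin{align*}
\langle[A,B]\rangle_{\rho_\be}=\ii\int_0^\be\dd s\,\langle \dot A(-\ii s)B\rangle_{\rho_\be}\,,
\end{align*}
combined with the analytic continuation of stationarity $\langle V^i(-\ii s)V^j(u)\rangle_{\rho_\be}=\langle V^iV^j(u+\ii s)\rangle_{\rho_\be}$, converts this into
\begin{align*}
\partial_{\field^i}v^j(\field)\big|_{\field=0}=\la^2\int_0^\infty\dd u\int_0^\be\dd s\,\langle V^iV^j(u+\ii s)\rangle_{\rho_\be}\,.
\end{align*}
Writing the velocity autocorrelator through its Fourier transform $\tilde C(\omega)$ and using $\int_0^\infty\dd u\,e^{-\ii\omega u}=\pi\delta(\omega)-\ii\,\mathrm{PV}\,\omega^{-1}$ together with $\int_0^\be\dd s\,e^{\omega s}=(e^{\be\omega}-1)/\omega$, the $\delta(\omega)$-contribution produces precisely $\tfrac{\be}{2}\int_\bbR\dd u\,\langle V^i(u)V^j\rangle_{\rho_\be}=\be D^{ij}(\field=0)$; the remaining principal-value contribution is purely imaginary, whereas the left-hand side is manifestly real (it equals $-\ii\la^2\int_0^\infty\langle[X^i,V^j(u)]\rangle_{\rho_\be}\dd u$ with the expectation value of a self-adjoint commutator), and thus must vanish, yielding~\eqref{eq: einstein relation}.

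The main technical obstacles will be: (i)~justifying differentiation of $v^j(\field)$ in $\field$ rigorously, which rests on the perturbation theory of the NESS developed through the fiber pseudo-resolvent framework of Sections~5--6 and a uniform spectral gap estimate for small $\field$ and $\la$; (ii)~coping with the unboundedness of $X$: although $X^i\notin\mathfrak{A}$, one may rewrite $[X^i,V^j(u)]=\partial_u[X^i,X^j(u)]$ with $[X^i,X^j(u)]=\int_0^u[X^i,V^j(s)]\dd s$ and each $[X^i,V^j(s)]\in\mathfrak{A}_{\mathrm{ti}}$, so the thermodynamic-limit statements of Lemma~\ref{lem: thermodynamic obs} do apply; (iii)~interchanging the $\field$-derivative with the large-$T$ abelian limit and with the $(u,s)$-integrals, which requires absolute convergence provided by Assumption~\ref{ass: exponential decay} together with the exponential cluster property of Theorem~\ref{thm:correlationfunctions}.
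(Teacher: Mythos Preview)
Your overall strategy---Duhamel expansion, KMS, then exponential decay of the velocity autocorrelation---is the same as the paper's. Two differences and one genuine gap are worth noting.

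\textbf{Choice of reference state.} The paper avoids your detour through the product state: since $v(\field)=\langle\nabla\varepsilon,\zeta\rangle$ does not depend on the initial $\rho_\sys$, the authors represent $v(\field)$ directly as $\lim_{z\to 0}z\langle\nabla\varepsilon,(\caR_\be(z)\eta_\be)_0\rangle$ and differentiate $\langle V^\field(t)\rangle_{\rho_\be}$ in $\field$. This spares them your Step~3 (return to equilibrium for correlators involving $X^i(\tau)$), which is delicate because $X^i$ is unbounded and not in $\mathfrak{A}_{\mathrm{ti}}$; your rewriting in item~(ii) can be made to work, but it is extra labour that the paper simply bypasses.

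\textbf{Interchange of limits.} You correctly identify this as obstacle~(i), but the justification is not a ``uniform spectral gap estimate for small $\field$'': for $\field\neq 0$ there \emph{is} no gap (this is precisely why Theorems~\ref{thm: stationary} and~\ref{thm: diffusion} involve ergodic/abelian averages). What the paper actually proves (Lemma~\ref{lem: prop for asymptotic}) is that the simple pole $u(\la,\ka,\field)$ and its rank-one residue $P^{\la,\ka,\field}$ are $C^\infty$ in $\field$ on the dense subspace $\caD$ of real-analytic functions; this is what allows $\partial_\field$ to pass through the Laplace representation of $v(\field)$.

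\textbf{Gap: the principal-value term.} Your claim that the PV contribution is ``purely imaginary'' is not justified as stated. Writing $C_{ij}(t)=\langle V^iV^j(t)\rangle_{\rho_\be}$, the general identity $\overline{C_{ij}(t)}=\langle V^j(t)V^i\rangle_{\rho_\be}=C_{ji}(-t)$ only gives $\overline{\tilde C_{ij}(\omega)}=\tilde C_{ji}(\omega)$; this makes $\tilde C_{ii}(\omega)$ real but says nothing about the off-diagonal case $i\neq j$. To conclude that $\tilde C_{ij}(\omega)\in\bbR$ (and hence that $-\ii\,\mathrm{PV}\int\tilde C_{ij}(\omega)(e^{\be\omega}-1)\omega^{-2}\,\d\omega$ is imaginary) you need the \emph{time-reversal invariance} of Lemma~\ref{lem: thermodynamic obs}\,$(ii)$, which yields $\overline{C_{ij}(t)}=C_{ij}(-t)$. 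The paper uses time-reversal explicitly at the analogous step (the derivation of equation~\eqref{finite volume einstein}); without it your reality argument does not close for $i\neq j$. Once time-reversal is inserted, your Fourier/PV computation is equivalent to the paper's direct manipulation (which produces the remainder $Q(t)$ and then shows $Q(t)\to 0$ via a Phragm\'en--Lindel\"of argument in the strip $\bbH_\be$, Lemma~\ref{lem: decay interior}).
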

Note that, by the positivity and isotropy of the diffusion matrix, this theorem
also shows that, for small but non-zero $\field$, $v(\field)$ does not vanish.
The origin of the unfamiliar factor $\la^2$ on the right side of \eqref{eq: einstein
relation} is found in the fact that the driving force field in the Hamiltonian
is $\la^2 \field$, rather than $\field$.


\section{Dyson expansion: The formalism}\label{section:expansions}\label{sec:
definition of expansions}

In this  section and the next one, we expand the effective dynamic $\matz$ (see
Section~\ref{effective dynamics}) and correlation functions in absolutely
convergent series. 
This task is carried out in two steps: First, we derive the expansions
without worrying about careful estimates. To avoid ambiguities concerning the
definition of operators, we do this in finite volume. This is done in the
present section, which is therefore essentially an algebraic exercise, because
the convergence of the expansions is a trivial matter. 
 In a second step, which is postponed to Section~\ref{sec: dyson analyis},
  we derive bounds on these expansions and prove their convergence uniformly in~$\Lambda$. 

In the present section, we set
$\mathscr{H}^{\La}_{\sys}=\ell^2(\Lambda), \scrH_{\res_x}^{\La} =
\Ga_s(\mathrm{L}^2(\bar \La))$, etc., as in Section~\ref{thereservoir}. To keep notations simple, we temporarily drop the superscript $\La$ everywhere. 
We start by defining `Green functions'.

\subsection{Green functions}\label{section:6.1}
The {\it (interacting) Green functions} are defined as follows:  Recall the equilibrium density matrix of
the reservoirs $\rho_{\referres}=Z_{\referres}^{-1}\e{-\beta H_{\res}}$,
$Z_{\referres}=\Tr{\e{-\beta H_{\res}}}$, and 
let $\rho$ be a density matrix on
$\mathscr{H}_{\sys}$.  We define a map
$\mathcal{Q}: \scrB_1(\scrH_\sys) \to \scrB_1(\scrH):  \rho \mapsto
\rho\otimes\rho_{\referres}$.
 Let
$I\subset\R_+$ be a finite time interval. Subsequently, we abbreviate $\inf I$
and $\sup I$ by $\infi$ and $\supi$, respectively.  The length of an interval
$I$ is
denoted by $|I|\deq \supi-\infi$. Let $\caS_1,\ldots,\caS_m$ be operators acting
on
$\scrB_2(\mathscr{H}_{\sys})$. The most relevant choice will be $\caS_i$ equal
to $(O_i)_{\links}$ or $(O_i)_{\rechts}$
for some `observables' $O_i\in\scrB(\mathscr{H}_{\sys})$, $i=1,\ldots,m$, (we
use here the left- and right-multiplications that were defined
in~\eqref{eq:2.5}). The
(interacting) Green function on $I$ is defined as the map
\begin{align*}
\mathcal{G}_I\big(\caS^{s_1}_1,\ldots, \caS^{s_m}_m\big):\scrB
_2(\mathscr{H}_{\sys})&\longrightarrow\scrB_2(\mathscr{H}_{\sys})
\end{align*}
given by\small
\begin{align}\label{eq:6.5}
\mathcal{G}_I\big(\caS^{s_1}_1,\ldots, \caS^{s_m}_m\big)(\,\cdot\,
)\deq\Tr_{\res}\bigg[\e{-\ii (\supi-s_m)\caL }\,\caS_m\e{-\ii
(s_m-s_{m-1})\caL}\caS_{m-1}\cdots \caS_1\e{-\ii (s_1-\infi)
\caL}\,\mathcal{Q}(\,\cdot\,)\bigg]\,,
\end{align}\normalsize
where $\infi\le s_1<s_2<\ldots<s_m\le\supi$, and the trace is over the
reservoir Hilbert space $\mathscr{H}_{\res}$. Here,
$\caL\deq\mathrm{ad}(H)=[H,\,\cdot\,]$ denotes the Liouvillian associated to
$H$, an essentially selfadjoint operator on $\scrB_2(\scrH)$.  The notation
$\caS_j^{s_j}$,
$j=1,\ldots,m$, merely indicates where the operator $\caS_j$ should be placed on the right side of \eqref{eq:6.5}.
In
particular, $\caS_j^{s_j}$ is {\it not} the operator $\caS_j$ time-evolved to
$s_j$.  Since the
operators carry a time label $s_j$, their order in the bracket on the left side of \eqref{eq:6.5} is irrelevant. We remark that we have defined the Green functions as operators on $\scrB_2(\scrH_\mathrm{S})$, i.e., we view density matrices as Hilbert-Schmidt operators through the embedding $\scrB_1(\scrH_\mathrm{S})\subset\scrB_2(\scrH_\mathrm{S}) $ since it is more convenient to work with the Hilbert space $\scrB_2(\scrH_\mathrm{S})$.

A special case of interest is $m=0$, i.e., when no $\caS_i$'s are present in the
Green function. We set
\begin{align*}
\mathcal{Z}_I(\,\cdot\,)\deq\mathcal{G}_I
(\emptyset)(\,\cdot\,)=\Tr_{\res}\left[\e{-\ii |I| \caL}\mathcal{Q}(\,\cdot\,)\right]\,.
\end{align*}
For $I=[0,t]$ this notation agrees with the notation for the effective dynamics
in Section~\ref{effective dynamics}.
\newline

For later purposes, we define a special class of operators on $\scrB_{2}(\scrH_{\mathrm{S}})$: We say $\caS\in\scrB(\scrB_2(\scrH_{\mathrm{S}}))$ is {\it quasi-diagonal}, whenever its kernel satisfies
\begin{align}\label{definition locally acting superop}
 |\caS(x_\links,x_\rechts; x'_\links,x'_\rechts)|\le C\e{-\nu
|x_\links-x'_\links|-\nu|x_\rechts-x'_\rechts|}\,,\qquad\textrm{for
some }\nu>0\,,\qquad (x_\links,x_\rechts, x'_\links, x'_\rechts\in \Z^d)\,.
\end{align}
Note the analogy with quasi-diagonal observables $O \in
\scrB(\scrH_\sys)$ defined in Section~\ref{sec: thermo}; in particular $\caS=
(O)_{\vs \in \{ \links,\rechts\}}$ is quasi-diagonal if and only if $O$ is quasi-diagonal.
\subsection{An expansion for $\mathcal{Z}_I$}
In this subsection, we derive an expansion for $\mathcal{Z}_{I}$,  with
$I\subset\R_+$ a finite interval. We define the free particle dynamics,
$\mathcal{U}_I$, on
$\scrB_2(\mathscr{H}_{\sys})$ by
\begin{align}\label{eq:6.2.7}
\mathcal{U}_{[t_1,t_2]}\deq\,\e{-\ii (t_2-t_1)\ad(H_{\sys})}\,,\quad
t_1,t_2\in\R_+\,,\quad H_{\sys}=T-\lambda^2\field\cdot X\,,
\end{align}
and the particle-reservoir interaction, $H_{\mathrm{SR}}(t)$, in the interaction picture, which
we decompose in spatially localized terms 
\begin{align*}
H_{\mathrm{SR}}(t) \deq \e{\ii t
H_{\res}} H_{\mathrm{SR}}  \e{-\ii t
H_{\res}} = \sum_x \lone_x \otimes \Psi_x(t)\,, \qquad   \Psi_x(t) \deq \e{\ii t
H_\res}\left(  a^*_x(\phi)+ a_x(\phi) \right) \e{-\ii t H_\res}\,.
\end{align*} 
Iterating Duhamel's formula
\begin{align*}
\e{\ii t\ad(H_{\res})}\e{-\ii t
\ad(H)}&=\mathcal{U}_{[0,t]}-\ii\lambda\int_0^t\dd
s\,\mathcal{U}_{[s,t]}\,\ad(H_{\mathrm{SR}}(s))\,\e{\ii s \ad(H_{\res})}\,\e{-\ii
s\ad(H)}\,,
\end{align*}
we find the Lie-Schwinger- or Dyson series for $\mathcal{Z}_I$:\small
\begin{align*}
\mathcal{Z}_I(\,\cdot\,)=\sum_{n\ge0}(-\ii
\la)^n\mathop\int\limits_{{\infi<t_1<\ldots<t_n<\supi}}\dd t_1\cdots\dd
t_n\Tr_{\res}\big[\mathcal{U}_{[t_n,\supi]}\ad(H_{\mathrm{SR}}(t_n))
\cdots\ad(H_{\mathrm{SR}}(t_1))\mathcal{U}_{[\infi,t_1]} \caQ(\cdot)\big]\,.
\end{align*}\normalsize
where the $n=0$ term on the right side is understood as $\mathcal{U}_I$.
We refrain from giving a proof of the (norm)-convergence of this series, since we
establish similar, but more involved, bounds in Section~\ref{sec: dyson
analyis}. We will use the shorthand notations
\begin{equation*}
\lone_{x,\vs}\deq (\lone_x)_{\vs}\,, \qquad \Psi_{x,\vs}(t) \deq(-\ii \Psi_{x}(t)
)_{\vs}\,,
\end{equation*}
for  $x\in\La$, $\vs\in\{ \links,\rechts \}$ (the left- and right
multiplications $(\,\cdot\,)_{\vs}$ were introduced in~\eqref{eq:2.5}).
In this notation, the formal Lie-Schwinger series for $\mathcal{Z}_I$ can be
rewritten as:
\begin{align}\label{eq:6.2.14}
\mathcal{Z}_I(\,\cdot\,)=&\sum_{n\ge0}(-\lambda)^{n}\sum_{
\underline{x}\in\Lambda^n}\sum_{\uvs \in\{ \links,\rechts \}^n}
\int_{\Delta_I^n}\dd\underline{t}\,
\Tr_{\res}\big[\Psi_{x_n,\vs_n}(t_n)\cdots\Psi_{x_1,\vs_1}(t_1)\rho_{\referres}
\big]
\nonumber\\[2mm]
&\qquad \qquad
\times\mathcal{U}_{[t_n,\supi]}\lone_{x_n,\vs_n}\mathcal{U}_{[t_{n-1}
,t_n]}\cdots\lone_{x_1,\vs_1}\mathcal{U}_{[\infi,t_1]}(\,\cdot\,)
\,,
\end{align}
where we use the shorthand
\begin{align} \label{eq:def integration simplex}
\int_{\Delta_I^n}\dd\underline
t\deq\delta_{n,0}+\int_{{\infi<t_1<t_2<\ldots<t_n<\supi}}\,\dd t_1\,\cdots\dd
t_n\,.
\end{align}
In a next step, we evaluate the trace over the reservoir Hilbert space in
Equation~\eqref{eq:6.2.14} using the quasi-free property of
$\rho_{\referres}$. To do so, we introduce some more notation also
applicable to the slightly more complicated expansions of general Green
functions. 
\subsection{Free Green functions, reservoir correlations and the path expansion
of $\mathcal{Z}_I$}
 Let $\caS_1,\ldots, \caS_m$ be operators acting on $\scrB_2(\scrH_\sys)$.  Let
$I\subset\R_+$ be an interval, and choose a set of times \mbox{$\infi\le s_1 < s_2 \ldots
< s_m\le
\supi$}. We define {\it free Green functions} by
\begin{align}\label{eq:6.9}
\mathcal{G}_I^{0}\big(\caS^{s_1}_1,\ldots, \caS^{s_m}_m\big)\deq 
\mathcal{U}_{[s_m,\supi]} \caS_m      \ldots    \mathcal{U}_{[s_1,s_2]} \caS_1 
\mathcal{U}_{[\infi,s_1]}\,.
\end{align}
For $m=0$, we set $\mathcal{G}_I^{0}(\emptyset)=\mathcal{U}_{I}$. As in \eqref{eq:6.5}, the time labels
$s_i$ just indicate where the operators should be placed. 

Since the operators $\lone_{x,\vs}$ often show up in combination with the
free time evolution $\mathcal{U}_t$ on $\scrB_2(\scrH_{\mathrm{S}})$, we introduce the
following shorthand notation: A {\it path}, $\varpi$, over a (closed) interval
$I\subset\R_+$ is a finite collection of triples
\begin{align*}
(x_i,\vs_i,t_i)\,,\quad i=1,2,\ldots\,,
\end{align*}
where $x_i\in\Lambda$, $\vs_i\in\{ \links,\rechts \}$ and $t_i\in I$. The number
of triples in a path $\varpi$ is denoted by $|\varpi|$. The set of all paths
over an interval $I$, referred to as  `path space', is denoted by
$\mathcal{P}_{I}$.  The free Green function associated to a path,~$\caG^0_{I}(\varpi)$, is defined as
\begin{align*}
\caG^0_{I}(\varpi)\deq\mathcal{U}_{[t_n,\supi]}\lone_{x_n,\vs_n}\mathcal{U}_{[t_{
n-1},
t_n
]}\cdots\lone_{x_1,\vs_1}\mathcal{U}_{[\infi,t_1]}\,,
\end{align*}
where $\varpi=((x_1,\vs_1,t_1),\ldots,(x_n,\vs_n,t_n))$, with $\infi< t_1<
t_2<\ldots< t_n\le \supi$, $I\subset\R_+$ an interval. 

We can evaluate the term containing the partial trace in~\eqref{eq:6.2.14} using the quasi-free property, or Wick rule, of
the reservoir
state: If $n$ is odd, this terms vanishes. Hence, we replace $n$ by $2n$
subsequently. Denote by $\mathrm{Pair}(n)$ be the set of partitions $\pi$ of the
integers
$1,\ldots,2n$ into $n$ pairs. We write $(r,s)\in\pi$ if $(r,s)$ is one of these
pairs, with the convention that $r<s$. Wick's rule states that
\begin{align}\label{eq:6.29}
\lambda^{2n}\Tr_{\res}\left[\Psi_{x_{2n},\vs_{2n}}(t_{2n})\cdots\Psi_{x_{1},
\vs_1}(t_1)
\rho_{\referres}\right]={\sum_{\pi\in
\mathrm{Pair}(\varpi)}}\zeta(\varpi,\pi)\,,
\end{align}
where $\varpi=((x_1,\vs_1,t_1),\ldots,(x_{2n},\vs_{2n},t_{2n}))$,
$\mathrm{Pair}(\varpi)\equiv \mathrm{Pair}({n})$, $|\varpi|=2n$, and
\begin{align}\label{eq:6.200}
\zeta(\varpi,\pi)\deq\prod_{(r,s)\in\pi}\lambda^2   h(t_r, t_s,\vs_s, \vs_r)
\delta_{x_r,x_s}\,,
\end{align}
where we have set, for $u,v\in\R_+$,
\begin{align}\label{eq:6.200bis}
 h(u,v ,\vs, \vs') \deq  \begin{cases} -\hat\psi(u-v)\,,&\textrm{if }
\vs=\links\,,\phantom{\rechts}\vs'=\links\,, \\
-{\hat\psi(v-u)}\,,&\textrm{if }\vs=\rechts\,,\phantom{\links}\vs'=\rechts\,,\\
\hat\psi(v-u)\,,&\textrm{if }\vs=\rechts\,,\phantom{\links}\vs'=\links\,,\\
{\hat\psi(u-v)}\,,&\textrm{if }\vs=\links\,,\phantom{\rechts} \vs'=\rechts\,,
\end{cases}
\end{align}
with $\hat \psi(t)\equiv\hat \psi^{\La}(t) $ as defined in~\eqref{eq: clear presentation finite volume correlationfunction one}.

If $n=0$, i.e., $\varpi=\emptyset$, we set the right side
of Equation~\eqref{eq:6.29} equal to one.
Integration over path space~$\mathcal{P}_I, I\subset\R_+$ an interval, is
denoted by the shorthand
\begin{align}\label{path space measure}
\int_{\mathcal{P}_I}\dd\varpi F(\varpi) \,\deq 
F_0(\emptyset)+\sum_{n\ge1}\,\sum_{\underline
x\,,\uvs}\,\int_{\Delta_I^{2n}}       \dd\underline t\,  F_n(\varpi)\,,
\end{align}
for $F=(F_n)_{n \in \bbN}$, where now $\underline x\in\Lambda^{2n}$, $\uvs\in\{ \links,\rechts \}^{2n}$. We will treat $\d \varpi$ merely as a shorthand notation, though it
is straightforward to check that $\d \varpi$  indeed defines a measure on an
appropriate measure space. In this notation, the expansion for $\mathcal{Z}_{I}$ in
Equation~\eqref{eq:6.2.14} takes the compact form
\begin{align}\label{eq:6.201}
\mathcal{Z}_{I}=\int_{\mathcal{P}_{I}}\dd\varpi\,\sum_{\pi\in
\mathrm{Pair}(\varpi)}\,\zeta(\varpi,\pi)\,\mathcal{G}_I^0(\varpi)\,,
\end{align}
which we call the {\it `path expansion'} of $\mathcal{Z}_I$.

\subsection{Path expansion of the correlation functions}
\label{section6.2.3}
With the formalism introduced in the previous subsections, it is straightforward
to derive the path expansion for the interacting Green function
$\mathcal{G}_I\big(\caS_1,\ldots, \caS_m\big)$,
{$I\subset\R_+$ an interval}, where $\caS_1,\ldots, \caS_m$ are operators acting on
$\scrB_2(\mathscr{H}_{\sys})$: We expand each propagator
$\e{-\ii (s_j-s_{j-1}) \caL}$ in Equation~\eqref{eq:6.5}
in its Lie-Schwinger series and proceed as previously. As a result 
we obtain the path expansion for the Green function:
\begin{align*}  
\mathcal{G}_I\big(\caS^{s_1}_1,\ldots, \caS^{s_m}_m\big)=\int_{\mathcal{
P}_I}\dd \varpi\,\sum_{\pi\in
\mathrm{Pair}(\varpi)}\,\zeta(\varpi,\pi)\,\mathcal{G}_I^0\big(\caS^{s_1}_1,
\ldots, \caS^{s_m}_m\,|\,\varpi\big)\,,
\end{align*}
where we use the shorthand notation
\begin{align*}
 \mathcal{G}_I^0\big(\caS^{s_1}_1,\ldots,
\caS^{s_m}_m\,|\,\varpi\big)\deq\mathcal{G}_I^0(\caS_1^{s_1},\ldots,\caS_m^{s_m}
,\lone_{x_1,\vs_1}^{t_1},\ldots,\lone_{x_{2n},\vs_{2n}}^{t_{2n}})\,,
\end{align*}
with $\varpi=((x_1,\vs_1,t_1),\ldots,(x_{2n},\vs_{2n},t_{2n}))$.
\subsection{Polymer expansions} 
In this subsection, we rearrange our
expansions in a `polymer' form. This will enable us to explore the
exponential decay of the reservoir correlation function $\hat\psi$ in the next
sections.
Given that  paths  $\varpi$ are collections of triples, we can define the `union' path $\varpi= \varpi_1 \cup \varpi_2$. 
Let us write $\min(t(\varpi))$, $\max(t(\varpi))$ for the smallest and largest time in the
path $\varpi$, respectively.  If $\max(t(\varpi_1)) < \min(t(\varpi_2))$, then we write $\varpi_1 < \varpi_2$. 

Given two pairings $\pi_1\in \mathrm{Pair}(\varpi_1)$,
$\pi_2\in
\mathrm{Pair}(\varpi_2)$,  with $\varpi_1 < \varpi_2$, 
we define $\pi = \pi_1 \cup \pi_2 \in 
\mathrm{Pair}(\varpi_1 \cup \varpi_2)$ in the obvious way, namely $(r,s) \in \pi$ if, either $(r,s) \in \pi_1$, or $(\str \varpi_1 \str+ r, \str \varpi_1 \str+ s) \in \pi_2$.  In an analogous way, we also define unions of a finite ordered set  of paths, i.e.,\ $\varpi_1 < \varpi_2 < \ldots < \varpi_l$  and pairings over them. 
Note the factorization property of the weights
$\zeta$ defined in
Equation~\eqref{eq:6.200}:
\begin{align} \label{eq:factorization of correlations}
\zeta(\varpi_1\cup\varpi_2 \cup \ldots \cup \varpi_l,\pi_1\cup\pi_2 \cup \ldots \cup \pi_l)=\zeta(\varpi_1,\pi_1)\zeta(\varpi_2,
\pi_2) \ldots \zeta(\varpi_l,\pi_l) \,.
\end{align}
We call a pairing $\pi \in \mathrm{Pair}(\varpi)$ irreducible if it can be written as a union $\pi = \pi_1 \cup \pi_2 $ with $\pi_j \in \varpi_j$ with $\varpi= \varpi_1 \cup \varpi_2$ and $\varpi_1 < \varpi_2 $. 

We call $D\deq(\varpi,\pi)$, $\varpi\in\mathcal{P}_I$, $\pi\in
\mathrm{Pair}(\varpi)$, a
{\it diagram}.  A diagram $D=(\varpi,\pi)$ is irreducible over an interval $I$ if $\pi$ is irreducible and $\infi= \min (t(\varpi)), \supi= \max(t(\varpi))$.
We define the domain of a diagram $D$ as
\begin{align*}
\mathrm{Dom}(D)\deq   \bigcup_{(r,s) \in \pi}  [t_r,t_s]\,.
\end{align*}
Note that a diagram $D=(\varpi,\pi)$ is irreducible over $I$ if and only if $\mathrm{Dom}(D)=I$.
 If a diagram is not
irreducible it can be decomposed uniquely into irreducible diagrams with
pairwise disjoint domains:
\begin{align}\label{eq:6.23}
\mathrm{Dom}(D)=\bigcup_{j=1}^l\, \mathrm{Dom}(D_j)\,,  \quad D_j\textrm{
irreducible over }I_j\subseteq I\,,
\end{align}
for some $l\ge 1$, with $D_j= (\pi_j,\varpi_j)$,  such that $\pi= \cup_j \pi_j, \varpi=\cup_j \varpi_j$ and $\varpi_1 < \varpi_2 < \ldots < \varpi_l$.   It follows that  the
diagrams
$D_j$ are ordered in the sense that $\sup \mathrm{Dom}(D_j)<\inf
\mathrm{Dom}(D_{j+1})$, $j=1,\ldots, l-1$ and that $I_j$ are intervals. 

Let $I\subset\R_+$ be an interval. We define
$\mathcal{V}_I\in\scrB(\scrB_2(\scrH_{\sys}))$ as the
sum over the `amplitudes' of all irreducible diagrams over $I$: Given a path
$\varpi\in\mathcal{P}_I$, we use the `$\delta$-function'
\begin{equation}\label{eq:6.26}
 \delta(\partial I -\partial \varpi) \deq   \delta(\min (t(\varpi))-\infi) 
\,\delta(\max (t(\varpi))-\supi)\,,
\end{equation}
 to restrict the integration over path space
$\mathcal{P}_I$ to paths having an element (i.e., a triple) at the initial time
$\infi$ and an element at the final time $\supi$ of the interval $I$. Hence,
$\mathcal{V}_I$ is
given by
\begin{align}\label{eq:6.27}
\mathcal{V}_I\deq\int_{\mathcal{
P}_{I}}\dd \varpi \, \delta(\partial I -\partial \varpi) \sum_{\substack{\pi \in
\mathrm{Pair} (\varpi)\\
\pi\,\mathrm{irreducible}}} \zeta(\varpi,\pi) \,    \mathcal{G}^0_I(\varpi)\,,   
\end{align}
where we sum only over irreducible $\pi$ (alternatively, such $\pi$ that render the diagram
$D=(\varpi,\pi)$ irreducible over~$I$). As in~\eqref{path space measure}, the $\d
\varpi$-integral and the `$\delta$-function' in~\eqref{eq:6.27} are a shorthand
notation for certain sums and integrals; the $\delta$-function in fact indicates
that the integral over the time-coordinates is over  $\Delta_I^{2n-2}$ instead
of  $\Delta_I^{2n}$.
Using the factorization property of $\zeta$ and the definition of $\mathcal{V}$
we rewrite the path expansion of $\mathcal{Z}_I$ in Equation~\eqref{eq:6.201} in
terms of
irreducible diagrams:
\begin{align}\label{eq:6.28}
\mathcal{Z}_{I}=\sum_{l\ge0}\int_{\Delta_{I}^{2l}}\dd
\underline
t\,\mathcal U_{[t_{2l},\supi]}\mathcal{V}_{[t_{2l-1},t_{2l}]}\mathcal
U_{[t_{2l-2},t_{2l-1}]}\cdots\mathcal{V}_{[t_1,t_2]}\mathcal U_{[\infi,t_1]}\,.
\end{align}
This expansion can be viewed as a one-dimensional `polymer' expansion. The
polymers correspond to connected subsets of the interval $I$ with weights
given by
$\mathcal{V}$. Two polymers `interact' via hard core exclusion taken into
account in the integration domain $\Delta_{I}^{2l}$. In
Equation~\eqref{eq:6.27},
we may consider diagrams with $|\varpi|>2$ as `excitations'. Diagrams without
`excitations', i.e., diagrams whose irreducible decomposition contains only paths $\varpi$ with ${|\varpi|=2}$, are called {\it ladder
diagrams}. The origin of the nomenclature becomes clear when one only retains
diagrams with $|\varpi|=2$ in the expansion~\eqref{eq:6.28}. In the following
sections, we will argue that the leading contributions to
$\mathcal{Z}_{I}$ arise from ladder diagrams.

The formalism developed above can also be applied to Green functions: We extend
the 
definition~\eqref{eq:6.27} by setting
\begin{align} \label{eq:def v of s}   
\mathcal{V}_I\big(\caS^{s_1}_1,\ldots,
\caS^{s_m}_m\big)\deq\int_{\mathcal{P}_I}\dd \varpi \, \delta(\partial I -\partial \varpi) \,\sum_{\substack{\pi\in
\mathrm{Pair}(\varpi)\\ \pi\,\mathrm{irreducible}}}\zeta(\varpi,\pi)\,\mathcal{G}
_I^0\big(\caS^{s_1}_1,\ldots, \caS^{s_m}_m\,|\,\varpi\big)\,,
\end{align}
for $\caS_1,\ldots, \caS_m\in\scrB(\scrB_2(\scrH_\caS))$, $\infi\le s_1<\ldots
<s_m\le \supi$. We refer to $\mathcal{V}_I$ as the `{\it dressing operator}' in
the following. For simplicity, we restrict our discussion to Green functions
involving two `observables'
$\caS_1,\,\caS_2\in\scrB(\scrB_2(\scrH_\caS))$. Given two times $s_1,s_2\in\R_+$
we
use the shorthand notation $(1)$,~$(2)$, for $\caS_1^{s_1}$, $\caS_2^{s_2}$,
respectively. Using the factorization property of $\zeta$ and the definition of
the dressing operator $\caV$ we find
\begin{align}
\mathcal{G}_I\big((1),(2)\big)=&\mathcal{Z}_{[s_2,\supi]}\caS_2\mathcal{Z}_{[s_1
,s_2]}\caS_1\mathcal{Z}_{[\infi,s_1]}+\int_{s_1}^{s_2}\dd
t_2\,\int_{\infi}^{s_1}\dd
t_1\,\mathcal{Z}_{[s_2,\supi]}\caS_2\mathcal{Z}_{[t_2,s_2]}\mathcal{V}_{[t_1,t_2
]}(1)\mathcal{Z}_{[\infi,t_1]}\nonumber\\
&+\int_{s_2}^{\supi}\dd t_2\,\int_{s_1}^{s_2}\dd
t_1\,\mathcal{Z}_{[t_2,\supi]}\mathcal{V}_{[t_1,t_2]}(2)\mathcal{Z}_{[t_1,s_1]}
\caS_1\mathcal{Z}_{[\infi,s_1]}\nonumber\\
&+\int_{s_2}^{\supi}\dd t_2\,\int_{\infi}^{s_1}\dd
t_1\,\mathcal{Z}_{[t_2,\supi]}\mathcal{V}_{[t_1,t_2]}((1),(2))\mathcal{Z}_{[
\infi,t_1]}\,,  \label{eq: cag of two}
\end{align}
and
\begin{align}
\mathcal{G}_I\big((1)\big)=\mathcal{Z}_{[s_1,\supi]}\caS_1\mathcal{Z}_{[\infi,
s_1]}+\int_{s_1}^{\supi}\dd t_2\,\int_{\infi}^{s_1}\dd
t_1\,\mathcal{Z}_{[t_2,t]}\mathcal{V}_{[t_1,t_2]}(1)\mathcal{Z}_{[\infi,t_1]}\,.  
\label{eq: cag of one}
\end{align}

\subsection{Generalized expansions}\label{negative1}
The expansion presented in the previous subsections were appropriate for initial
states of the form
 $\rho_{\sys}\otimes\rho_{\referres}$. In the present section, we replace the
factorized initial states by the coupled `Gibbs' state of the interacting system
at vanishing external force $\field$:
\begin{align*}
\rho_{\beta}=\frac{1}{Z_{\beta}}\e{-\beta H^{\field=0}}\,,\qquad
Z_{\beta}=\Tr[\e{-\beta
H^{\field=0}}]\,,\qquad H^{\field=0}=T+H_{\res}+\lambda H_{\mathrm{SR}}\,.
\end{align*}
To extend our formalism to this particular initial state, we define operators
\begin{align*}
D\deq \e{-\beta H/2}\e{\beta H_{\res}/2}\,,\qquad \caD(O)
\deq\mathrm{Ad}(D)O=DOD\adj\,,
\end{align*}
for observables $O$ in some subspace of $\scrB_2(\scrH)$.  Note that $D$ and
$\caD$ are unbounded operators, even in  finite volume $\La$.  Their use lies in
a non-commutative `Radon-Nykodim' identity:
\begin{align*}
\rho_{\beta}=   \frac{Z_{\be,\res}}{Z_{\be}} \caD (\lone \otimes \rho_\res)\,.
\end{align*}
Since we are in finite volume and we do not have periodic boundary conditions,
the operators~$D$ and~$\caD$ are not translation-invariant. If they {\it were}
translation-invariant and $O$ were a translation-invariant observable, then we
could write 
\begin{equation} \label{eq: desired translation invariance}
\Tr \left[O \rho_\beta \right] =  \frac{Z_{\be,\res} }{Z_{\be}}\Tr \left[O  \caD (\lone
\otimes \rho_\res)\right]  =    \frac{Z_{\be,\res} \str \La \str }{Z_{\be}}  \Tr
[O \caD (\lone_{{x=0}} \otimes \rho_{\referres} )]  =  \Tr
[O \caD (\eta_\be \otimes \rho_{\referres} )] \,,
\end{equation}
with  $\eta_\be\deq\frac{Z_{\be,\res} \str \La \str }{Z_{\be}} \lone_{0} $ and the assumed translation-invariance is used in  the second equality.  The rank-one operator~$\eta_\be$ is positive but not normalized, i.e., $\Tr_{\sys}[ \eta_\be ]\neq 1$, and it has a well-defined thermodynamic limit. Since translation invariance is broken only at the boundary of $\La$, Equality~\eqref{eq: desired translation invariance} is correct up to an error that vanishes in the
thermodynamic limit; (recall the definition of quasi-diagonal operators in~\eqref{definition locally acting
superop}).
\begin{lemma}\label{lemma: use of pinned} 
Assume that the operators $\caS_1,\ldots,\caS_m$ on $ \scrB_2(\scrH_\sys)$  are quasi-diagonal and translation-invariant in the sense that 
\begin{equation}
\caS_i(x_\links, x_\rechts,x'_\links, x'_\rechts )= \caS_i(x_\links+y, x_\rechts+y,x'_\links+y, x'_\rechts+y )\,,
\end{equation} 
whenever all variables are in $\La$. Then, for any $0\le s_1<s_2<\ldots<s_m$,
\begin{align}
\Tr\left[ \caS_m   \ldots \caS_2 \e{\ii (s_2-s_1)\caL} \caS_1  \e{\ii s_1\caL}  \rho_\be\right]  =\Tr\left[  \caS_m \ldots
\caS_2 \e{\ii (s_2-s_1) \caL} \caS_1  \e{\ii s_1 \caL}  \caD  ( \eta_{\beta}
\otimes \rho_{\referres} )\right] +\caO\left(\frac{\str\partial \La \str}{\str \La \str}\right)\,,  
\label{eq: guinea pig}
\end{align}
where the error term $\caO(\str\partial \La \str/\str \La \str)$ is understood as $\caO(L^{-1})$, as $L \to \infty$. Furthermore, the limit $\lim_\La \frac{Z_{\be,\res} \str \La \str }{Z_{\be}} $ exists and is finite (note that the number $\frac{Z_{\be,\res} \str \La \str }{Z_{\be}}$ appears in the operator~$\eta_\be$ in \eqref{eq: guinea pig}).
\end{lemma}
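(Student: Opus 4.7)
The plan is to write $\caA\deq\caS_m\cdots\caS_2\,\e{\ii(s_2-s_1)\caL}\caS_1\e{\ii s_1\caL}$ for the translation-invariant, quasi-diagonal superoperator appearing on the left-hand side of \eqref{eq: guinea pig}, and to apply the Radon--Nikodym identity $\rho_\be=(Z_{\be,\res}/Z_\be)\caD(\lone\otimes\rho_{\res,\be})$ together with the decomposition $\lone=\sum_{x\in\La}\lone_x$. This yields
\begin{align*}
\Tr[\caA\rho_\be]=\frac{Z_{\be,\res}}{Z_\be}\sum_{x\in\La}f(x)\,,\qquad f(x)\deq\Tr\bigl[\caA\,\caD(\lone_x\otimes\rho_{\res,\be})\bigr],
\end{align*}
while the right-hand side of \eqref{eq: guinea pig} is, by the very definition of $\eta_\be$, exactly $(Z_{\be,\res}\str\La\str/Z_\be)f(0)$. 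So the proof reduces to two claims: (i)~$\sum_{x\in\La}\str f(x)-f(0)\str=\caO(\str\partial\La\str)$, and (ii)~$Z_{\be,\res}\str\La\str/Z_\be$ is uniformly bounded in $\La$ and admits a thermodynamic limit. Together these will produce the announced error of order $\str\partial\La\str/\str\La\str$.

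For (i), I would let $\tilde f(x)$ denote the analogue of $f(x)$ computed on $\Z^d$, i.e., with the Dirichlet cut-off $\lone_\La$ removed from every operator on which $\caA$ and $\caD$ depend. In this infinite-volume formulation, all ingredients -- the $\caS_i$, the Liouvillian $\caL$ (using that $\ad(\chi\cdot X)$ is translation covariant), and the operator $\caD=\mathrm{Ad}(\e{-\be H/2}\e{\be H_\res/2})$ -- commute with particle translations, hence $\tilde f(x)=\tilde f(0)$ by translation invariance of $\caA$. Control of the error $f(x)-\tilde f(x)$ would come from three ingredients: first, the Combes--Thomas bound \eqref{eq: propagation bound}, applied at imaginary time $\be/2$ to $\e{-\be H_\sys/2}$ and combined with an imaginary-time Dyson expansion of $\e{-\be H/2}$ in $\la H_{\sys\res}$, yielding exponential decay of the particle-position kernel of $\caD$; second, the real-time path expansion \eqref{eq:6.28} applied to the kernel of $\caA$, again with exponential decay inherited from \eqref{eq: propagation bound}; and third, the assumed quasi-diagonality of the $\caS_i$. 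Since $\caD(\lone_x\otimes\rho_{\res,\be})$ is sharply localized near the site $x$, the effect of restoring the cut-off is suppressed by $C\e{-\nu\,d(x,\partial\La)}$ uniformly in $x$; hence $\str f(x)-f(0)\str\le C\e{-\nu\,d(x,\partial\La)}+C\e{-\nu L/2}$, and a shell-by-shell summation gives $\sum_{x\in\La}\str f(x)-f(0)\str\le C\str\partial\La\str$.

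For (ii), I would first note the explicit identity $Z_\be^{(0)}=Z_{\be,\res}\Tr_\sys[\e{-\be T}]$ at $\la=0$, and observe that by Assumption \ref{ass: analytic dispersion} and a Riemann-sum argument $\str\La\str^{-1}\Tr_\sys[\e{-\be T}]\to(2\pi)^{-d}\int_{\tor}\dd k\,\e{-\be\ve(k)}$, a finite positive constant. For $\la\ne0$, the plan is to expand $\e{-\be H}$ in a Dyson series in powers of $\la H_{\sys\res}$ and to Wick-contract the reservoir operators using the values $\hat\psi(\ii t)$, which are bounded on $[0,\be]$ by Assumption \ref{ass: exponential decay}. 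The resulting `imaginary-time polymer expansion' is structurally identical to \eqref{eq:6.27}--\eqref{eq:6.28}, with the real interval $I$ replaced by $[0,\be]$ and the weights $\zeta$ supported in $[0,\be]^2$; since $\be$ is finite and $\hat\psi(\ii\,\cdot\,)$ is bounded, the series converges term by term for $\str\la\str$ small, uniformly in $\La$, and gives $Z_\be/Z_{\be,\res}=\Tr_\sys[\e{-\be T}](1+\caO(\la^2))$, from which existence and boundedness of $\lim_\La Z_{\be,\res}\str\La\str/Z_\be$ follows. The main technical obstacle is precisely this imaginary-time expansion: $\hat\psi(\ii t)$ need not decay in $t$, so one cannot apply the resummation techniques used in the real-time analysis; however, for the present purpose of controlling only the leading behaviour of the partition-function ratio on the compact interval $[0,\be]$, the crude term-by-term Dyson bound will suffice.
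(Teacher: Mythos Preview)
Your argument for (i) is essentially the paper's argument: write both sides as $g(\La)$ times an average (resp.\ single term) of $f(x)=\Tr[\caA\,\caD(\lone_x\otimes\rho_{\res,\be})]$, compare $f$ to a translation-invariant reference, and bound the discrepancy by $\e{-\nu\,d(x,\partial\La)}$. The paper's reference object is slightly different from yours: it keeps the reservoir correlation functions $\zeta^\La$ in finite volume and only sends the particle propagators $\caU_I$ (and the $\caS_j$) to infinite volume, obtaining a genuinely translation-invariant operator $\widetilde\caA$ on $\scrB_2(\ell^2(\bbZ^d))$; the comparison then reduces to a Duhamel estimate on $\e{-\ii tH_\sys}-\e{-\ii tH_\sys^\La}$ alone (Lemma~\ref{lem: exp conv}). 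Your choice of the full infinite-volume $\tilde f$ works as well but forces you to control simultaneously the particle and the reservoir thermodynamic limit; the paper's half-measure isolates the particle boundary effect, which is all that is needed.

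For (ii) the paper takes a different and more economical route. Rather than expanding $Z_\be$ in a Dyson series (which, as you note, needs $\str\la\str$ small), it bounds $g(\La)=Z_{\be,\res}\str\La\str/Z_\be$ directly: Golden--Thompson plus $c\lone\le\e{\pm\be T^\La}\le C\lone$ reduces the question to two-sided bounds on $\Tr\,\e{-\be(H_\res+\la H_{\sys\res})}/(\str\La\str\,\Tr_\res\e{-\be H_\res})$, and since $H_\res+\la H_{\sys\res}$ is quadratic in $a^\#$ this ratio is computed explicitly, with the uniform bound following from the assumed finiteness of $\sum_q\bosondispersion(q)^{-1}\str\phi^\La(q)\str^2$ in Assumption~\ref{ass: exponential decay}. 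For the existence of $\lim_\La g(\La)$, the paper bootstraps: $\Tr[\rho_\be]=1$ gives $1/g(\La)=\str\La\str^{-1}\sum_x\Tr[\caD(\lone_x\otimes\rho_\res)]$, and applying the already-established comparison (i) with $\caA^\La=\caZ^\La_{[-\be/2,0]}$ yields $1/g(\La)=\Tr[\caZ^\La_{[-\be/2,0]}(\lone_0)]+\caO(\str\partial\La\str/\str\La\str)$, whose thermodynamic limit exists by Lemma~\ref{lem: thermo on kernels}. This avoids your imaginary-time polymer expansion entirely and yields the result for all $\la$, not just small $\la$.
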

Note that \eqref{eq: desired translation invariance} is \eqref{eq: guinea pig} with $m=1$, $\caS_1=(O)_\links$ and $s_1=0$. The proof of Lemma~\ref{lemma: use of pinned} is postponed to Section~\ref{section thermodynamic limit}.

The  representation of equilibrium expectations (and correlations) on the right side of \eqref{eq: guinea pig}
 is useful for us, because it allows us to treat equilibrium correlations en
par with correlations in a state where the particle is initially localized on
the lattice. Indeed, for, e.g., $m=2$, the expression on the right side of~\eqref{eq: guinea pig} differs from the previously considered
expressions $\Tr_\sys[ \caG_{I}(\caS_2^{s_2},\caS_1^{s_1} ) (\rho_\sys)]$ only through
the particular choice $\rho_\sys= \eta_\be$ (keeping in mind that $\eta_\be$ is not normalized) and the presence of the operator~$\caD$. Our strategy will be to expand  $\caD$ in its Lie-Schwinger series 
(treating again $\la H_{\sys\res}$ as a perturbation) and to merge this
expansion with the one developed in the previous sections. Things are set up so
that the only change to be made in the framework developed above is that the
interval~$I$ is now a subset of $\bbR_\be\deq [-\beta/2, \infty)$, and the objects
$\caU_I$ and $h(s,s',\vs,\vs')$ need to be redefined whenever $I$ or
$\left\{s,s' \right\}$ have a non-zero intersection with $[-\beta/2,0)$. The
necessary generalizations are:
\begin{itemize}
\item[$i.$]{\it Free particle propagation $ \mathcal{U}_{I}$:} For any interval
$I \subset \bbR_{\be}$, we set $\caU_I\deq \caU_{I_2}\caU_{I_1}$, with $I_1\cup
I_2=I$ and \mbox{$I_1 \subset[-\beta/2,0], I_2 \subset \bbR_+$}, and we define
\begin{align}\label{eq:6.48}
\mathcal{U}_{I}(\,\cdot\,)\deq\begin{cases}
\e{-\ii  \str I \str  {H_{\sys}}}(\,\cdot\,)\,\e{\ii \str I \str 
{H_{\sys}}}\,,&\mathrm{if}\quad I \subset \bbR_+\,,   \\
\e{-  \str I \str   T }(\,\cdot\,)\,\e{-  \str I \str 
T }\,,&\mathrm{if}\quad I \subset   [-\beta/2, 0]\,. 
\end{cases}
\end{align}
\item[$ii.$]{\it Correlation function:} It is convenient to introduce the maps $m_{\pm}: \bbR_\be \to \bbC$ defined by $m_{\pm}(s)\deq s$, for $s \geq 0$, and  $m_{\pm}(s)\deq \pm \ii s$, for $s \in [-\be/2,0]$. Then we set 
\begin{align}\label{eq:extension correlation to negative}
 h(s,s' ,\vs, \vs') \deq    \si(s,\vs) \si(s',\vs')   \begin{cases}
\hat\psi(m_-(s)-m_-( s'))\,,&\textrm{if } \vs=\links\,,\phantom{\rechts}\vs'=\links\,,
\\[1mm]
{\hat\psi( m_+(s')- m_+(s))}\,,&\textrm{if
}\vs=\rechts\,,\phantom{\links}\vs'=\rechts\,, \\[1mm]
   \hat\psi(m_-( s')- m_+(s))\,,&\textrm{if
}\vs=\rechts\,,\phantom{\links}\vs'=\links\,, \\[1mm]
 {\hat\psi(m_-(s)- m_+(s') )} \,,&\textrm{if }\vs=\links\,,\phantom{\rechts}
\vs'=\rechts\,,
\end{cases}
\end{align}
where $\si(s\leq 0,\vs)= 1$  and $\si(s > 0 ,\vs)= -\ii, \ii$, for $\vs=
\links,\rechts$, respectively.

\end{itemize}
With these modifications, we can extend the definitions of $\caV_I, \caZ_I,
\caG_I$ and all relations between them; in particular~\eqref{eq:def v of s},
\eqref{eq: cag of two} and~\eqref{eq: cag of one}, remain valid. For example, we have from~\eqref{eq: guinea pig}
\begin{align}
 \Tr[O(s)\rho_\be]   &=   \Tr_\sys [(O)_{\vs} \caZ_{[-\be/2, s]} \eta_\be
]+\caO\left(\frac{\str\partial \La \str}{\str \La \str}\right)\,,
\end{align}
where $\vs= \links$ or $\vs = \rechts$; and $\caZ_{[-\beta/2,t]}$ may be decomposed as
\begin{equation}  \label{eq: relation caz cazbeta}
 \caZ_{[-\beta/2,t]} 
=   \caZ_{[0,t]}  \caZ_{[-\beta/2, 0]} +  \int_{-\beta/2}^0 \d u \, \int_{0}^t
\d v \,   \caZ_{[v, t]} \caV_{[u, v]}   \caZ_{[-\beta/2, u]}\,.
 \end{equation}
This concludes our discussion on finite-volume expansions. In the next section, we move on to discussing the thermodynamic limit and convergence of the series expansions introduced above.

\section{Dyson series: Analysis and bounds} \label{sec: dyson analyis}
{In this section, we analyze the Dyson series.  The first part, Section \ref{section thermodynamic limit}, though lengthy, contains only soft estimates that require neither the full power of Assumption \ref{ass: exponential decay}, nor the crucial fact that we consider separate reservoirs at each lattice point.  This part  could have been avoided by defining the model in the thermodynamic limit from the start.   In contrast, Section \ref{sec: bounds on effective dynamics} contains the crucial estimates that are specific to our model.}

\subsection{Thermodynamic limit}\label{section thermodynamic limit}
In the previous section, we have derived the expansion
 \begin{align}\label{eq: repeat expansion}
\mathcal{Z}_{I}=\int_{\mathcal{P}_{I}}\dd \varpi\,\sum_{\pi\in
\mathrm{Pair}(\varpi)}\,\zeta(\varpi,\pi)\,\mathcal{G}^{0}_I(\varpi)\,,
\end{align}
where $I \subset [-\be/2,\infty)$, with all objects in finite volume
$\La=\La_{L}$. 
Next, we propose to pass to the thermodynamic limit.   First, we
note that the operators $\zeta(\varpi,\pi)\,\mathcal{G}^{0}_I(\varpi)$ on the 
right side of~\eqref{eq: repeat expansion}  are well-defined as  operators
on $\scrB_2(\scrH_\sys)$, for $\La= \bbZ^d$, $\bar\La=\R^d$, respectively. 
Indeed,   the correlation functions $\zeta$ are products of the functions $\hat
\psi$, which were defined for $\La=\R^d$ in~\eqref{eq: clear presentation of
correlation function}, and the operators~$\caU_I  $  are well-defined on
$\scrB_2(\ell^2(\bbZ^d))$ because \mbox{$H_\sys= T- \la^2 \field \cdot X$}
is a self-adjoint operator. 
Finally, the shorthand  $\int \d \varpi$ contains sums over $x_i \in \La$, which
have to be interpreted now as sums over~$\bbZ^d$.
This gives meaning to $\mathcal{Z}_{I}$ as a series of
operators on $\scrB_2(\scrH_\sys)$. 
Likewise, the series for  the correlation functions $\caV_I(\caS_1^{s_1},
\ldots,\caS_m^{s_m})$ and $ \caG_I(\caS_1^{s_1}, \ldots,\caS_m^{s_m} )$ are
well-defined term by term, respectively, for $\caS_j \in
\scrB(\scrB_2(\scrH_\sys))$.    

Below, we prove convergence of these series and we establish that they are
indeed the limits of their natural finite-volume counterparts.   The main ingredient here is the thermodynamic limit of correlation functions in Assumption \ref{ass: exponential decay}. 

We introduce some notation that will also be used in the subsequent analysis.  Let
$\Lambda=\Lambda_L$ or $\Z^d$. We write~$\uw$ to denote {\it walks} in $\Lambda \times
\Lambda$, i.e., sequences $\uw= (w_0,w_1, \ldots, w_n)$, where each $w$ is of the
form $(w_\links, w_\rechts)$, with $w_\links, w_\rechts \in \Lambda$, and we
write $l(\uw)\deq n+1$ for  the length of the walk. (The walks~$\uw$ should not be
confused with the paths of triples $\varpi$.) On $\Lambda\times\Lambda$ we use the distance
$$|w|_{\Lambda\times\Lambda}=|(w_\links,w_\rechts)|_{\Lambda\times\Lambda}\deq
|w_\links|+|w_\rechts|\,,$$where $|\,\cdot\,|$ denotes the Euclidean distance on
$\Lambda$. Moreover, we set $\norm \uw \norm\deq \sum_{j=1}^n \str
w_{j}-w_{j-1}\str_{\Lambda\times\Lambda} $. For simplicity, we abbreviate
$|w|\equiv|w|_{\Lambda\times\Lambda}$ hereafter. Finally, we write $\uw: w \to
w'$ whenever $w=w_0$, \mbox{$w'=w_n$}, i.e., $\uw$ is a walk starting at $w$ and ending at $w'$.

Subsequently, we use the above notation of walks for kernels of operator acting on~$\scrB_{2}(\scrH_{\mathrm{S}})$: Instead of writing
$\caS(w_\links,w_\rechts, w_\links',w_\rechts')$, we write $\caS(w,w')$, with $w=(w_\links,w_\rechts)$, $w'=(w_\links',w_\rechts')$, for $\caS\in\scrB(\scrB_2(\scrH_{\mathrm{S}}))$. For example, the definition of quasi-diagonal operators in~\eqref{definition locally acting superop}, reads $\str\caS(w',w)\str \leq C
\e{-\nu\str w'-w \str} $, for some $\nu>0$, in this notation.
\newline

Before we are able to analyze the thermodynamic limit, we still have to define what is the
`natural finite volume analogue' of $\caV_I(\caS_1^{s_1}, \ldots,\caS_m^{s_m}),
\caG_I(\caS_1^{s_1}, \ldots,\caS_m^{s_m} )$ when starting from infinite volume
$\caS_j$-operators. We set
  $$ \caS^{\La}_j\deq \mathrm{Ad}(\lone_{\La}) \caS_j
\mathrm{Ad}(\lone_{\La})\,,$$
where $\lone_\La$ is the orthogonal projection  $\ell^2(\bbZ^d) \to
\ell^2(\Lambda)$.  

\begin{lemma}\label{lem: thermo on kernels}
Let $\caA$ be one of the following (infinite volume) operators
\begin{equation} \label{eq: things to be tl constructed}
\caZ_I\,,   \qquad \caG_I(\caS_1^{s_1}, \ldots, \caS_m^{s_m})\,, \qquad 
\caV_I(\caS_1^{s_1}, \ldots, \caS_m^{s_m})\,,
\end{equation}
with $I \subset [-\be/2, \infty)$, and $\caS_1,\ldots,\caS_m$ quasi-diagonal.
Denote by $\caA^{\La}$ their finite-volume analogues as discussed above, in
particular including the replacement of $\caS_j$ by $\caS_j^{\La}$. 
Then, for $\La$ finite and for $\La=\bbZ^d$, the series defining these operators
converge absolutely in norm (as operators on $\scrB_2(\scrH_\sys)$).  Moreover,
\begin{equation*}
\str \caA^{\La}(w,w')  \str \leq C_I  \e{-c \str w'-w \str}\,, \qquad \str
\caA(w,w')  \str \leq C_I  \e{-c \str w'-w \str}\,,
\end{equation*}
where the constant $C_I$ can be chosen uniformly in~$\La$ (for $|\La|$ large
enough), including $\La=\bbZ^d$, independent of the times $s_1,\ldots,
s_m$ and uniform on compacts in $\infi,\supi$. The exponent $c>0$ can be chosen to depend only on the
observables $\caS_1,\ldots, \caS_m$.
Finally,  for any $w,w'$,
\begin{equation*}
\tdl  \caA^{\La}(w,w') =   \caA(w,w')\,, 
\end{equation*}
uniformly on compacts in the variables $s_1,\ldots, s_m$ and $\infi, \supi$.
 \end{lemma}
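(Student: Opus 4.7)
The plan is to prove the three assertions of the lemma --- absolute convergence, the exponential kernel bound, and the $\tdl$ statement --- in parallel, by working at the level of the kernel $\caA(w,w')$ of each individual term in the path expansion. The three inputs I will use are: the Combes--Thomas bound~\eqref{eq: propagation bound}, which holds uniformly in $\La$ (including $\La=\bbZ^d$); the uniform boundedness of $\hat\psi^\La$ on $\bbH_\be$ from Assumption~\ref{ass: exponential decay}; and the pointwise convergence $\hat\psi^\La\to\hat\psi$ on $\bbH_\be$ from the same assumption.

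First I would derive a pointwise bound on the kernel of a single summand $\zeta(\varpi,\pi)\,\caG^0_I(\caS_1^{s_1},\ldots,\caS_m^{s_m}\,|\,\varpi)$ with $|\varpi|=2n$. Since $\caU_{[s,t]}$ acts on $\scrB_2(\scrH_\sys)$ as the tensor product of the one-particle propagator $\e{-\ii(t-s)H_\sys}$ and its conjugate, and since $H_\sys=T-\la^2\field\cdot X$ differs from $T$ only by an operator diagonal in the position basis, the Combes--Thomas bound yields $|\caU_{[s,t]}(u,u')|\le C\,\e{-\nu|u-u'|}\,\e{2(t-s)\|\im\ve\|_{\infty,\nu}}$ with constants independent of $\La$. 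The projectors $\lone_{x_i,\vs_i}$ are diagonal at the kernel level, so the kernel of $\caG^0_I(\cdots\,|\,\varpi)$ at $(w,w')$ becomes a sum over intermediate positions $w_1,\ldots,w_{2n}$ of a chain of Combes--Thomas-bounded factors weighted by the constraints $\delta_{(w_i)_{\vs_i},x_i}$ and the quasi-diagonal kernels of the $\caS_j$'s.

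Next I would assemble the resulting estimate as follows. The constraints $\delta_{x_r,x_s}$ coming from the pairings $(r,s)\in\pi$ in $\zeta(\varpi,\pi)$, together with the constraints $(w_i)_{\vs_i}=x_i$, allow all sums over the $x$-labels in $\La^{2n}$ to be carried out trivially, leaving only couplings $(w_r)_{\vs_r}=(w_s)_{\vs_s}$ between the $w$-labels. Each pair contributes $\la^2|\hat\psi(\cdot)|\le C\la^2$, the simplex integral is $|I|^{2n}/(2n)!$, the number of pairings is $(2n)!/(n!\,2^n)$, and the sum over $\uvs\in\{\links,\rechts\}^{2n}$ yields at most $4^n$. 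These combine to a convergent series $\sum_n(C\la^2|I|^2)^n/(n!\,2^n)$, independent of $\La$, giving absolute norm convergence together with a bound $C_I$ that is uniform on compacts in $\infi,\supi$ and independent of the intermediate times $s_j$. The decay in $|w-w'|$ is then extracted from the chain factor via $\prod_i\e{-\nu|w_{i+1}-w_i|}\le\e{-c|w-w'|}\,\e{-(\nu-c)\sum_i|w_{i+1}-w_i|}$ for some $0<c<\nu$, with the second factor used to control the remaining sums over the intermediate $w_i$; the quasi-diagonal kernels of the $\caS_j$'s contribute further exponential weights that are absorbed in the same way. Intervals with $I\cap[-\be/2,0)\neq\emptyset$ are handled by the same method, using that the imaginary-time propagator $\e{-sT}$ obeys a Combes--Thomas-type estimate thanks to the analyticity of $\ve$.

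For the thermodynamic limit, the uniform-in-$\La$ envelope just derived supplies a summable dominating series, so it suffices to verify termwise convergence $(\zeta^\La\cdot\caG^{0,\La}_I)(w,w')\to(\zeta\cdot\caG^0_I)(w,w')$ for fixed $(w,w')$ and a fixed path $\varpi$. The weight $\zeta^\La(\varpi,\pi)$ converges to $\zeta(\varpi,\pi)$ by Assumption~\ref{ass: exponential decay}, while the finite-volume matrix elements $\caU^\La_{[s,t]}(u,u')$ with $u,u'\in\La$ differ from $\caU_{[s,t]}(u,u')$ only through contributions of paths in the kernel sum that visit sites outside $\La$, an error which is exponentially suppressed by Combes--Thomas and therefore vanishes as $L\to\infty$. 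Dominated convergence over the path space $\caP_I$ then yields $\caA^\La(w,w')\to\caA(w,w')$, uniformly on compacts in the time parameters because the dominating bound is continuous in them. The main obstacle I anticipate is precisely this last point: controlling the effect of the Dirichlet boundary conditions built into $T^\La$ uniformly in $\La$. This is handled by observing that, because of the combined exponential weights in the kernel bound, the walks contributing non-negligibly to $\caA^\La(w,w')$ remain within a neighborhood of $\{w,w'\}$ whose size is controlled by $|I|$ alone, and thus lie strictly inside $\La$ once $L$ is large enough.
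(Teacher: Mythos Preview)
Your proposal is correct and follows essentially the same route as the paper's proof: Combes--Thomas bounds on $\caU_I$ (real and imaginary time), a walk decomposition of the kernel of $\caG^0_I$, the combinatorics $(2n-1)!!\cdot |I|^{2n}/(2n)!$ for absolute convergence, and dominated convergence for the thermodynamic limit using Assumption~\ref{ass: exponential decay}. The only organizational differences are that the paper sums over the $x$-labels crudely (bounding $|\zeta|\le C^n$ and absorbing the $x$-sums into the walk sum, yielding the factor $(C\nu^{-2d})^{|\varpi|}$) rather than exploiting the $\delta_{x_r,x_s}$ constraints as you do, and that for $\caU^\La\to\caU$ the paper simply invokes strong convergence $H_\sys^\La\to H_\sys$ rather than your explicit ``walks stay away from $\partial\La$'' argument.
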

 
 \begin{proof}
The following bounds apply alike in finite and infinite volume. Denoting by $\mathop{\sum}\limits_{\underline
x,\uvs}^{\varpi}$ the sum over the $x$- and $\vs$-coordinates in the path
$\varpi=((x_1,\vs_1,t_1),\ldots,(x_{2n},\vs_{2n},t_{2n}))$, we obtain, for
$\nu>0$ sufficiently small, and uniformly in $\La$, 
\begin{align}
\sum_{\underline{x},\uvs}^{\varpi} \str\mathcal{G}_I^0\big(\,\varpi\big) (w,w') \str  &
\leq   2^{\str \varpi\str}   \sum_{\footnotesize{\left.\begin{array}{c}   \uw: w
\to w' \\  l(\uw)=\str \varpi\str+2 \end{array}\right.}}   \prod_{j=1}^{\str \varpi
\str+1} \str \caU_{[t_{j-1},t_{j}]}(w_{j-1}, w_j) \str \nonumber  \\[2mm]   
&\leq     C^{1+ |\varpi|} \e{ \str I \str \norm \mathrm{Im} \varepsilon \norm_{\infty,2\nu}} 
\sum_{\footnotesize{\left.\begin{array}{c}   \uw: w \to w' \\  l(\uw)=\str
\varpi\str+2 \end{array}\right.}} \prod_{j=1}^{|\varpi|+1} \e{-2\nu \str w_{j} - w_{j-1}
\str}\nonumber  \\[2mm]
&\leq  C  \e{C \str I \str |\nu|}     \e{-\nu \str w - w' \str}     (C \nu^{-2d})^{\str
\varpi \str }\,,   \label{eq: basic estimate cagzero}
\end{align}
where we introduced the dummy variables $t_0=\infi, t_{\str\varpi\str+1}=\supi$ in
the first line. To get the second line, we used the propagation bound \eqref{eq:
propagation bound} and its imaginary time version \[\norm \e{\tau T}(x,x')
\norm \leq C \e{C\str\tau\str (1+ \str \nu \str)} \e{-2\nu\str x-x'\str},\]
for $\tau \leq \beta/2$.
The term between brackets on the last line of~\eqref{eq: basic estimate cagzero}
originates from the sums 
$
\sum_{w}\e{- \nu\str w \str }  
$.

From~\eqref{eq: repeat expansion} and \eqref{eq: basic estimate cagzero} we get
 \begin{align}
 \str \caZ_I(w,w') \str   &\leq    \sum_{n \geq 0} \sum_{ \pi\in
\mathrm{Pair}(n)}  \int \d t_1 \cdots \d t_{2n}   
\sum_{\underline{x},\uvs}^{\varpi}   \str \zeta(\varpi,\pi) \str     
\str\caG_I^0(\varpi)(w,w') \str  \label{eq: from cag to caz}\nonumber  \\
 &\leq    C  \e{-\nu \str w - w' \str}      \e{ C \str I \str \str\nu \str  }   
\sum_{n \geq 0}   \str \mathrm{Pair}(n)  \str     (C \nu^{-2d})^{2n }   
\int \d t_1 \cdots \d t_{2n}  C^n      \nonumber   \\ 
  &\leq    C  \e{-\nu \str w - w' \str}      \e{ C \str I \str \str\nu \str  }  
\sum_{n \geq 0}    {(2n-1)!!}   \frac{ C^n  \str I \str^{2n}}{(2n) !} \nonumber  \\ & \leq   
C  \e{-\nu \str w - w' \str}      \e{C \str I \str \str\nu \str + C \str I \str^2 } \,,
 \end{align} 
where $(2n-1)!! \equiv  (2n-1)(2n-3) \cdots 1$ is the number of pairings of  $2
n$ elements, i.e.,\ $\str \mathrm{Pair}({n})  \str  $, and the $n=0$ term is understood as $\str \caU_I(w,w') \str$.  Note that we have used Assumption~\ref{ass: exponential decay} to get $ \str \zeta(\varpi,\pi) \str
\leq C^n $. 
 A similar estimate holds for $\caG_I(\cdot), \caV_I(\cdot)$, since ultimately we
only used that the operators $\caU_I$ satisfy the propagation estimate $\str
\caU_I(w,w') \str \leq C_I \e{-\nu \str w -w'\str}$; see~\eqref{eq: propagation
bound}.  In the terminology introduced above, this means that $\caU_I$ is quasi-diagonal. Since the $(\caS_j)$ are also quasi-diagonal, possibly with a different exponential decay rate $c>0$, the proof still applies.  
 
 We now turn to convergence of kernels.  By the uniform bounds above, the
convergence of kernels follows once we have proved that $\caU^{\La}_I(w,w')\to
\caU_I(w,w') $ and $\zeta^{\La}(\varpi,\pi) \to \zeta(\varpi,\pi) $ for any $w,w',\pi,
\varpi$, and uniformly on compacts in the time-arguments.  The first claim is
obvious because $H_\sys^{\La} \to H_\sys$ strongly, moreover $H_\sys^{\La}$ is
bounded for finite $\La$. Thus, functions of $H^{\La}_\sys$ converge
strongly to functions of $H_\sys$.  The second claim follows from Assumption \ref{ass: exponential decay} since 
$\zeta(\varpi,\pi)$ is a product of the correlation functions~$\hat \psi(t),  t \in \bbH_\be$.
 \end{proof}
 
 We now move towards the proof of Lemma \ref{lemma: use of pinned}. We recall that both the particle and the reservoirs are restricted to finite volumes~$\La$ and~$\bar \La$  that are related by~$\La= \bar \La \cap \bbZ^d$. However, since each lattice point is connected to a separate reservoir, there is no compelling reason for these volumes to be related and this is exploited in the present proof: We first perform the thermodynamic limit for the particle ($\La \to \bbZ^d$) but not for the reservoirs ($\bar \La$~remains finite).  To that end, we introduce operators~$\widetilde\caA$ on $\scrB_2(\ell^2(\bbZ^d))$, with $\caA=\caZ_I,\caV_I,\caG_I(\caS^{s_1}_1,\ldots, \caS^{s_m}_m)$, for~$\caS_j$ translation-invariant and quasi-diagonal, and  interval $I \subset [-\beta/2, \infty)$.  These `tilde operators' are obtained from the ones without tildes by choosing the correlation functions~$\zeta$ to be~$\zeta^{\La}$, i.e., in finite volume, but choosing the operators $\caU_I$ and $\caS_j$ in infinite volume.  
Therefore, the operators~$\widetilde\caA$ are 
translation-invariant. Note that given a collection of translation-invariant and quasi-diagonal operators~$\caS_j$, we have now three types of objects, namely $\caA^{\La},\widetilde\caA,\caA$. The latter of the three does not play a r\^{o}le in the proof of Lemma \ref{lemma: use of pinned}. 
\begin{lemma}\label{lem: exp conv} Let $\caA^{\Lambda}$ and $\widetilde{A}$ be as described above. Then
\begin{equation} \label{eq: exp speed of convergence}
\left\str \caA^{\La}(w,w') -  \widetilde\caA(w,w') \right\str \leq C_I \e{-c' d(\{w_l,w_r\},\La^c)}  
\e{-c \str w-w'\str}\,,
\end{equation}
where $d(A,B)= \inf_{x \in A,y \in B} \str x-y\str$ for $A,B \subset \La$, and $\La^c=\bbZ^d \setminus \La$. The constants $c,c'$ can be chosen uniformly in $I, s_1,\ldots, s_m$ and $\caS_1,\ldots, \caS_m$.
 \end{lemma}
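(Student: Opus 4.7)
The operators $\widetilde\caA$ and $\caA^{\La}$ share the \emph{same} reservoir weights $\zeta^{\La}(\varpi,\pi)$ in the path expansion of Section~\ref{sec: definition of expansions}; only the particle-sector factors differ. Writing both as operators on $\scrB_2(\ell^2(\bbZ^d))$ (extending $\caA^{\La}$ by zero outside $\La\times\La$), the difference $\widetilde\caA-\caA^{\La}$ decomposes term by term in the Dyson series into three kinds of discrepancy: (i)~the replacement of each propagator $\caU_I$ by $\caU^{\La}_I$; (ii)~the replacement of each observable $\caS_j$ by $\caS^{\La}_j=\mathrm{Ad}(\lone_{\La})\caS_j\mathrm{Ad}(\lone_{\La})$; and (iii)~the restriction of each sum $\sum_{x_i}$ from $\bbZ^d$ to $\La$. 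I would control all three by a single mechanism.

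First, unfold every free Green function $\caG^{0}_I(\,\cdots\,|\,\varpi)(w,w')$ into the walk representation employed in~\eqref{eq: basic estimate cagzero}: a sum over walks $\uw: w \to w'$ on $\bbZ^d\times\bbZ^d$ weighted by a product of single-step kernels of the propagators $\caU_{[t_{j-1},t_j]}$ and of the observable operators at the insertion times. By the Combes--Thomas estimate~\eqref{eq: propagation bound} (and its imaginary-time analogue on $I\cap[-\be/2,0]$ employed in the proof of Lemma~\ref{lem: thermo on kernels}) together with quasi-diagonality of the $\caS_j$, each link of the walk carries an exponential factor $e^{-\nu\str w_j-w_{j-1}\str}$.

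Next, telescope the three discrepancies one factor at a time. For~(i), a Duhamel identity writes $\caU_I-\caU^{\La}_I$ as an integral involving $\ad(H_\sys-\lone_{\La}H_\sys\lone_{\La})$, whose kernel is supported outside $\La\times\La$ and decays exponentially away from $\partial\La$ thanks to the analyticity of $\varepsilon$. For~(ii), the kernel of $\caS_j-\caS^{\La}_j$ has the analogous support property by quasi-diagonality of $\caS_j$. For~(iii), the excluded summation anchors one vertex of the walk directly in $\La^c$. In each case the walk $\uw$ is forced to visit $\La^c$ at some intermediate step, so that, starting and ending at $w,w'\in\La\times\La$, it executes at least one excursion of total length $\geq d(\{w_\links,w_\rechts\},\La^c)$. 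Combining the per-step decay with the standard walk-summation bound $\sum_{\uw}\prod_j e^{-\nu\str w_j-w_{j-1}\str}\leq C\,e^{-\nu\str w-w'\str}$ then splits off the prefactor $e^{-c'd(\{w_\links,w_\rechts\},\La^c)}$ on top of the expected $e^{-c\str w-w'\str}$. The sums over irreducible pairings and over the path length $\str\varpi\str$ converge absolutely, uniformly in $\La,\,s_1,\ldots,s_m$ and on compacts in $\infi,\supi$, exactly as in the proof of Lemma~\ref{lem: thermo on kernels}: the $(2n-1)!!$ pairings are absorbed by the simplex volume $\str I\str^{2n}/(2n)!$.

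\textbf{Main obstacle.} The principal difficulty is carrying out the telescoping cleanly: the three sources of discrepancy must be rewritten in a common form that \emph{provably} forces each defect walk through a collar around $\partial\La$, \emph{without} sacrificing the per-step Combes--Thomas weights that feed the final sum. A secondary nuisance is the imaginary-time segment of $I$, together with the unboundedness of $H_\sys$ through the term $-\la^2\field\cdot X$; both are handled uniformly in $\La$ by the same analytic-dispersion estimate used throughout Section~\ref{section thermodynamic limit}, but they require the parallel bookkeeping already seen in~\eqref{eq: basic estimate cagzero}.
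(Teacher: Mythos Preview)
Your proposal is correct and follows essentially the same route as the paper: Duhamel for $\caU_I-\caU^{\La}_I$, quasi-diagonality for $\caS_j-\caS^{\La}_j$, telescoping term by term, and the forced excursion through $\La^c$ yielding the extra factor $\e{-c'd(\{w_\links,w_\rechts\},\La^c)}$, with the walk/pairing sums controlled exactly as in Lemma~\ref{lem: thermo on kernels}. One small simplification: your discrepancy~(iii) is redundant, since once $\caU^{\La}_I$ is extended by zero outside $\La\times\La$, every term with some $x_i\in\La^c$ already vanishes in $\caA^{\La}$, so the $x_i$-sums may be taken over all of $\bbZ^d$ in both expansions and only (i) and (ii) survive --- this is how the paper organizes it.
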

 \begin{proof}
We decompose
 \begin{equation}
 H_\sys = H^{\La}_\sys  +   T^{\partial \La} +  H^{\La^c}_\sys\,,
 \end{equation}
 where the operator on the left side corresponds to infinite volume, $H^{\La^c}_\sys$ acts on $\ell^2(\La^c)$ and  
 \begin{equation}
 T^{\partial \La}(x,x') = \begin{cases} \hat\epsilon(x'-x) &  x \in \La, x' \in \La^c \quad \textrm{or} \quad x' \in \La, x \in \La^c  \\    0 & \textrm{otherwise}  \end{cases}.
 \end{equation}
Duhamel's principle yields
 \begin{equation}
\lone_{\La}\left( \e{-\ii t H_\sys} -  \e{-\ii t H_\sys^{\La}} \right)  \lone_{\La}=  \int_0^t \d s \,     \lone_{\La}  \e{-\ii (t-s) H_\sys}   T^{\partial \La}  \e{-\ii s H_\sys^{\La}}  \lone_{\La}\,.
 \end{equation}
 Applying the propagation bound for both $\e{-\ii t H_\sys^{\La}}$ and $\e{-\ii t H_\sys}$ and the exponential decay of the function~$\hat \epsilon(\,\cdot\,)$, we conclude that 
 \begin{equation}
 \left\str ( \e{-\ii t H_\sys} -  \e{-\ii t H_\sys^{\La}} )   (x,x') \right\str  \leq  C \e{Ct } \e{-c \str x-x' \str}  \e{-c' \max(d(x, \La^c), d(x', \La^c) )}, \qquad   x,x' \in \La \,,
 \end{equation}
 and hence we also have
 \begin{equation} \label{eq: thermo prop u}
 \str \caU^{\La}_{I}(w, w') - \caU_{I}(w, w') \str  \leq C_I \e{-c'd(\{w_l,w_r 
\},\La^c)}  \e{-c\str w -w'\str} \,.
 \end{equation}
 Obviously, we can equally well choose $\{w_l',w_r'\}$ instead of $\{w_l,w_r\}$ on the right side of~\eqref{eq: thermo prop u}. Furthermore,  we can also replace the difference 
 $\caU^{\La}_{I}- \caU_{I}$ on the left side by  $\caS^{\La}- \caS$  for quasi-diagonal $\caS$, and we can allow $I \subset [-\be/2, \infty)$. 
 
To address $\caA^{\La}-\tilde \caA$ as required, we recall that the correlation functions $\zeta$ in both operators are the same (i.e.\ the finite-volume ones) so that the only differences originate in the difference on the left side of~\eqref{eq: thermo prop u} (or the generalizations just mentioned).   By repeatedly applying \eqref{eq: thermo prop u} and using the same strategy as in Lemma \ref{lem: thermo on kernels} to sum/integrate over $\varpi$, we get the claim of the Lemma. 

\end{proof}

 \begin{proof}[Proof of Lemma \ref{lemma: use of pinned}]

We abbreviate $g(\La)\deq\frac{Z_{\be,\res} \str \La \str }{Z_{\be}}$.  Recall that
 \begin{equation}
\rho_\be= g(\La) \frac{1}{\str \La \str} \caD(\lone \otimes \rho_\res)=g(\La)
\frac{1}{\str \La \str}    \sum_{x \in \La} \caD(\lone_x \otimes\rho_\res)\,.
\end{equation} 
Therefore, the left side and right side of \eqref{eq: guinea pig} may be written as 
\begin{equation} \label{eq: two necessary sides}
g(\La) \frac{1}{\str \La \str} \sum_{x \in \La} \Tr\left[ \caA^{\La}(\lone_x )\right],\qquad 
g(\La)  \Tr\left[ \caA^{\La}(\lone_0)\right]\,,
\end{equation}
respectively, with $\caA^{\La}= \caG^\La_{[-\be/2,s_m]}(\caS_m^{s_m},\ldots, \caS_1^{s_1} )$. Using kernels, we recast
\begin{align} \frac{1}{\str \La \str}\sum_{x \in \La} \Tr [ \caA^{\La}(\lone_x) ] -  \Tr [ \caA^{\La}(\lone_0) ]  
  = &  \frac{1}{\str \La \str}  \sum_{x,y \in \La} \caA^{\La}(y,y,x,x)-  \sum_{y \in \La} \caA^{\La}(y,y,0,0)\,. \label{eq: diff of kernels}  
\end{align}
Since the $(\caS_j)$ were assumed to be translation-invariant on $\La$ and quasi-diagonal, they are finite-volume restrictions of truly translation-invariant and quasi-diagonal operators. Therefore,  Lemma~\ref{lem: exp conv} applies to the operator~$\caA^{\La}$ above and the associated~$\widetilde \caA$ are translation-invariant.  Let us split
 $\caA^\La=\widetilde \caA+ \caK^\La$ in~\eqref{eq: diff of kernels} with $\caK^\La\deq\caA^\La-\widetilde \caA$ .  By translation invariance,  we can drop the terms containing $\widetilde \caA$, so \eqref{eq: diff of kernels} equals
 \begin{equation}
 \frac{1}{\str \La \str}  \sum_{x,y \in \La} \caK^{\La}(y,y,x,x)-  \sum_{y \in \La} \caK^{\La}(y,y,0,0)\,.
 \end{equation}
By the bounds of Lemma~\ref{lem: exp conv}, this difference is bounded by $C
\frac{\str\partial \La \str}{\str \La \str} $ and hence we obtain
 \begin{equation}
 \left\str \frac{1}{\str \La \str}\sum_{x \in \La} \Tr [ \caA^{\La}(\lone_x) ] -  \Tr [ \caA^{\La}(\lone_0) ]   \right\str \leq \frac{1}{\str \La \str}  \sum_{x \in \La} C\e{-c d(x,\La^c)}  \leq  C
\frac{\str\partial \La \str}{\str \La \str} \,.\label{eq: bound boundary}
 \end{equation}
  Recall that we need to prove that the difference between the two expressions in~\eqref{eq: two necessary sides}  is $\caO(\str\partial \La\str/\str \La \str )$. This follows from~\eqref{eq: bound boundary} provided that $g(\La)$ remains bounded as $\La \to \bbZ^d$, as we show now: By an application of the Golden-Thompson inequality and exploiting the fact that $c\lone \leq \e{\pm \be T^{\La}} \leq C \lone$, for $0<c<C<\infty$, uniformly in~$\La$,  it suffices to check that 
  \begin{equation} \label{eq: bound for free energy}
 0<c'\le \frac{\Tr \e{-\be (H_{\res}+\lambda H_{\mathrm{SR}})}  }{\str \La \str \Tr_\res \e{-\be H_{\res}}}   \le C'<\infty\,,
  \end{equation}
  uniformly in $\La$. The operators in the exponent can be explicitly diagonalized (they are polynomials of order 2 in creation and annihilation operators) and~\eqref{eq: bound for free energy} follows then after a straightforward calculation by the bound on~\eqref{eq: gs shift}. Note for further reference that this also shows that $1/g(\La)$ is uniformly bounded as $\La \to \bbZ^d$.

 It remains to show that $\lim_{\La} g(\La)$ exists. 
  Since $\Tr [\rho_\be]=1$, we have
\begin{equation}
\frac{1}{g(\La)} =  \frac{1}{\str \La \str} \sum_{x \in \La}  \Tr \left[\caD(\lone_x \otimes\rho_\res)\right]\,,
\end{equation}
 and by \eqref{eq: bound boundary}, taking now $\caA^{\La}=\caZ^{\La}_{[-\be/2,0]}$, we get
\begin{equation}\label{eq: normalization of eta}
\frac{1}{g(\La)} =    \Tr[\caZ^{\La}_{[-\be/2,0]}(\lone_0)] + \caO\left(\frac{\str\partial \La \str}{\str \La \str}\right)\,.
\end{equation}
By Lemma \ref{lem: thermo on kernels}, the limit $\lim_\La \Tr[\caZ^{\La}_{[-\be/2,0]}(\lone_0)]$ on the right side exists and is finite.  Hence, either $g(\La)$ diverges or $\lim_{\La}g(\La)$ exists and is finite. But the first possibility was excluded above, hence the proof is complete.
\end{proof}

To construct the correlation functions~(\ref{def: thermo limit loc},~\ref{def:
thermo limit beta}) with $O \in \mathfrak{X}$, ($\mathfrak{X}$ the $*$-algebra
generated by the position operator $(X^j)$), we  want to consider   $\caS_j =
f_j(X)_{\vs}$, $\vs=\links,\rechts$, for some polynomials~$f_j$. In this case,
one can no longer expect $\caV_I(\cdot)$ or $\caG_I(\cdot)$ to be a bounded operators; but their kernels are obviously well-defined, and we can still follow the proof of
Lemma \ref{lem: thermo on kernels}, bounding $\str\caS_j(w',w)\str \leq C\delta_{w,w'} \str w
\str^{N_j}$, where~$N_j$ is the degree of the polynomial~$f_j$.  We obtain 
\begin{equation*}
\str \caA^\La(w,w') \str, \str \caA(w,w') \str \leq  C \str w \str^{N} \e{-\nu
\str w -w'\str}\,,
\end{equation*}
uniformly in $\La$, for some $N$ that is determined by the $\caS_1, \ldots,
\caS_m$.
This polynomial growth in $\str w \str$ is compensated by the exponential decay,
so that, for exponentially localized $\rho_\sys$, (see~\eqref{exponentially localized density matrix}),  expressions such as
 $\caG_I(\cdot ) \rho_\sys,  \caG_I(\cdot  ) \rho_\sys$, etc.\ are again 
exponentially localized operators.  For example, the following identity (trivial
in finite volume) holds for exponentially localized $\rho_\sys$ and interval $I \subset \bbR_+$:
 \begin{equation}
\partial_\field  \caG_I ( \caS_{1}^{s_1}, \ldots,   \caS_{m}^{s_m«}) \rho_\sys 
=    \lambda^2\ii\sum_{\vs= \links, \rechts}  (\delta_{\vs, \links}- 
\delta_{\vs, \rechts})   \int_{I} \d s \, 
\caG_I (   (X_l)^{s}  ,\caS_{1}^{s_1}, \ldots ,  \caS_{m}^{s_m«})  \rho_\sys
\,.\label{eq: examplebound1}
\end{equation}
Another useful  identity is obtained by applying the relation
$X(t)-X(0)=\int_0^t \d s\, V(s)$, with~$V$ the velocity operator, in correlation functions,
e.g.,
\begin{align} \label{eq: manipulations allowed}
\langle    (X(t)-X(0))^2 \rangle_{\rho_\sys\otimes \rho_\referres} &=  \int_0^t
\d s_2\, \int_0^t \d s_1 \langle  V(s_2)\, V(s_1)  \rangle_{\rho_\sys\otimes
\rho_\referres} \,,\\
\langle (X(t)-X(0))\rangle_{\rho_\sys\otimes\rho_\referres}&=\int_0^t \d
s_1\langle V(s_1)  \rangle_{\rho_\sys\otimes \rho_\referres} \,,
\end{align}
 for  exponentially localized $\rho_\sys$. Here, the left sides should be
interpreted as linear combinations of correlations functions, e.g., \small
\begin{align*}
 \langle    (X(t)-X(0))^2 \rangle_{\rho_\sys\otimes \rho_\referres} =\langle(
X(t))^2 \rangle_{\rho_\sys\otimes \rho_\referres} - \langle
X(t)X(0)\rangle_{\rho_\sys\otimes \rho_\referres}-\langle X(0)X(t)
\rangle_{\rho_\sys\otimes \rho_\referres} +\langle (X(0))^2
\rangle_{\rho_\sys\otimes \rho_\referres} \,. 
\end{align*}
 \normalsize
 
 \begin{proof}[Proof of Lemmas~\ref{lem: thermodynamic dyn} and~\ref{lem:
thermodynamic obs}]
 The thermodynamic limit of $\caZ_I$  is immediate from Lemma~\ref{lem:
thermo on kernels}, by the convergence of kernels and the exponential bounds. 
 To deal with correlation functions, we note that, for finite $\La$,
 \begin{equation} \label{eq: bais equality corr and cag}
 \Tr_{\sys}[ \caG_{[0,t]}(\caS_1^{s_1}, \ldots,  \caS_m^{s_m})\rho_\sys]   =  
\langle O_m(s_m) \ldots O_1(s_1) \rangle_{\rho_\sys \otimes \rho_{\referres}}\,,
 \end{equation}
 with $\caS_j= (O_j)_{\links}$, $0 \le s_1 < \ldots <s_m \le t$.  Using the existence of the
thermodynamic limit for $\caG_I(\cdot)$ and the exponential bounds, we show the
convergence for exponentially localized $\rho_\sys$ and $O_j \in
\mathop{\mathfrak{A}}\limits^{\circ}$ (i.e., $O_j$ quasi-diagonal) or~$O_j \in
\mathfrak{X}$, thus defining the right side of \eqref{eq: bais equality
corr and cag} for $\La=\bbZ^d$.  The extension to  $\mathfrak{A}$ is by density. (For \mbox{$m=2$}, the ordering of the times can be relaxed on the right side of \eqref{eq: bais equality corr and cag} by setting $\caS_1=
(O_1)_{\rechts}$, which leads to an exchange of $O_1(s_1)$ and $O_2(s_2)$ on the
right side.)

 For equilibrium correlation functions, we first recall  that the rank-one operator $\eta_\be$ is well-defined in the
thermodynamic limit by Lemma \ref{lemma: use of pinned}. Then, the argument is analogous
to the one above, but replacing \eqref{eq: bais equality corr and cag} by
  \begin{equation} \label{eq: bais equality corr and cag eq}
 \Tr_{\sys}[ \caG_{[-\beta/2,t]}(\caS_1^{s_1}, \ldots,  \caS_m^{s_m})\eta_\be ]  =  
\langle O_m(s_m) \ldots O_1(s_1) \rangle_{\rho_\be}\,.
 \end{equation}
We set $\field=0$, i.e., $O_j(s) $ replaced by $O^{\field=0}_j(s)$ and we consider
\eqref{eq: bais equality corr and cag eq} with $m=2$. The time-reversal
invariance and stationarity of~\eqref{eq: bais equality corr and cag eq} follow
from the finite volume system, where they are explicit. The only thing left to prove is the
infinite-volume KMS condition: We note that the construction of~\eqref{eq: bais
equality corr and cag eq} can be carried out when~$s_1-s_2$ is in the strip
$\bbH_\be$ and the conclusions  of Lemmas~\ref{lemma: use of pinned} and~\ref{lem: thermo on kernels} remain
valid.  This is checked straightforwardly by using that the operators $\e{-{\ii}s H_\sys}$ remain quasi-diagonal for $s \in \bbH_\be$ (in fact, for any $s \in \bbC$) and the
correlation functions~$\zeta(\varpi,\pi)$ remain well-defined, because, upon taking
$0 \leq \im (s_{1}-s_2) \leq \be$, all arguments of the function~$\hat \psi$ in~\eqref{eq:extension correlation to negative} remain in the strip $\bbH_\be$, as
one verifies by inspection.  Hence, the thermodynamic limit
 is still valid for~$s_1-s_2 \in \bbH_\be$ in the sense that the
correlation functions $\langle O^{\La}_2(s_2) O^{\La}_1(s_1)
\rangle_{\rho^\La_{\be}}  
$ are bounded 
uniformly in $\La$ and converge uniformly on compact sets in $s_1-s_2 \in
\bbH_\be $. Therefore, the limit of the finite-volume correlation function is
analytic in the interior of the strip and continuous on the boundary. Thus, the
KMS condition in infinite-volume follows from the one in finite
volume.
 \end{proof}

\subsection{Bounds on the effective dynamics} \label{sec: bounds on effective dynamics}
Up to now, we have established bounds on the free correlation functions
$\caG^{0}_I$, from which we could derive crude bounds on the interacting
correlation functions. This is sufficient to prove the existence of the
thermodynamic limit. In what follows, we prove sharper bounds on the
interacting correlation functions in infinite volume, using the decay
properties of the reservoir correlation function.   From now on, all quantities refer to infinite volume, unless mentioned otherwise.
\subsubsection{Bounds on diagrams}\label{section: bounds on diagrams}
In analogy to the operator $\caV_I$ defined in  \eqref{eq:def v of s},  we set,
for $n\in\N$,
\begin{equation}  \label{def: cavn}
\caV_I^{(n)}(\caS_1^{s_1}, \ldots,  \caS_m^{s_m})  \deq   \int_{\str \varpi \str
\geq 2 n} \d \varpi  \,  \delta(\partial I -\partial \varpi) \sum_{\substack{\pi \in
\mathrm{Pair} (\varpi)\\
\pi\,\mathrm{irreducible}}} \zeta(\varpi,\pi) \,   \mathcal{G}_I^0\big(\caS_1^{s_1}, \ldots, 
\caS_m^{s_m}\,|\,\varpi\big)  \,,
\end{equation}
with $\caV_I^{(2)}=\caV_I$, because the smallest irreducible diagrams have
$\str \varpi \str=2$. 
In Lemma~\ref{lem: bounds on diagrams}, we provide a bound on the right side. To save writing, we  introduce 
\begin{equation} \label{def: tildeh}
\tilde h(t) \deq     
 \sup_{\substack{ \mathrm{Re}(s'-s)=t  \\
s-s'  \in \bbH_\be, \vs_1,\vs_2 \in \{
\links, \rechts\} }}
   \str h(s,s', \vs_1,\vs_2) \str\,.
\end{equation}
 From  Assumption~\ref{ass: exponential decay}, we get 
\begin{equation*}
 \str \tilde h(t) \str \leq C \e{-g_\res \str t\str}\,.
\end{equation*}
\begin{lemma}\label{lem: bounds on diagrams}
For  sufficiently small $\la, \nu >0$, for any $n \geq 1$,  any interval  $I\subset
[-\beta/2,\infty)$ and an arbitrary collection (possibly empty), $ \caS_1, \ldots,\caS_m $, of observables with
associated times $s_1,\ldots, s_m \in I$, 
\begin{align} \label{eq: bound wpaths}
   \left\str \caV_{I}^{(n)}(\caS_1^{s_1}, \ldots,\caS_m^{s_m} ) (w,w')
\right\str& \leq C^{m+n+1}   \la^{2n}  \max(1,\str I \str^{n-1})   \e{- \tfrac{g_\res}{3} 
\str I \str }\nonumber \\[2mm]
  & \qquad \times 
 \sum_{\footnotesize{\left.\begin{array}{c}   \uw: w \to w' \\  l(\uw)=2m+2
\end{array}\right.}}      \e{-\nu \norm \uw \norm }  \prod_{j=1}^m     
\left\str \caS_{ {j}}(w_{2j-1},w_{2j}) \right\str     \e{\nu \str  w_{2j-1} -w_{2j}
\str} \,,
\end{align}
where the constant $C$ depends only on $\tilde h$, the particle dispersion relation $\ve$, and the spatial dimension $d$.
\end{lemma}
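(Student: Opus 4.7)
My plan is to unfold the definition \eqref{def: cavn} and estimate each contribution with fixed path-length $|\varpi|=2k\ge 2n$ and fixed irreducible pairing $\pi\in\mathrm{Pair}(\varpi)$ separately, then sum over~$k$. For the spatial part of the estimate I would proceed exactly as in \eqref{eq: basic estimate cagzero} of Lemma~\ref{lem: thermo on kernels}: inserting the Combes--Thomas bound \eqref{eq: propagation bound} (and its imaginary-time analogue) into every free propagator of $\caG^0_I(\caS_1^{s_1},\ldots,\caS_m^{s_m}|\varpi)$ produces a sum over walks in $\bbZ^d\times\bbZ^d$ weighted by $e^{-2\nu\|\,\cdot\,\|}$. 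Using $\tfrac{3}{2}\nu$ of the decay rate to sum away the $k$ lattice variables of~$\varpi$ (pinned pairwise by the $\delta_{x_r,x_s}$ of \eqref{eq:6.200}) yields a factor $(C\nu^{-d})^{k}$ and reduces the remaining walk sum to one over walks $\uw:w\to w'$ of length $l(\uw)=2m+2$; the residual $\tfrac{\nu}{2}\|\uw\|$-weight together with the observable kernels $|\caS_j(w_{2j-1},w_{2j})|$ reproduces the walk factor in the statement after extracting the $e^{\nu|w_{2j-1}-w_{2j}|}$.

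Next I combine the reservoir bound with irreducibility to extract $e^{-(g_\res/3)|I|}$. By \eqref{def: tildeh} and Assumption~\ref{ass: exponential decay}, $|\zeta(\varpi,\pi)|\le\lambda^{2k}C^k\prod_{(r,s)\in\pi}e^{-g_\res(t_s-t_r)}$. Irreducibility of~$\pi$ means precisely that the arches $\{[t_r,t_s]\}_{(r,s)\in\pi}$ cover~$I$, so $\sum_{(r,s)\in\pi}(t_s-t_r)\ge|I|$, and splitting $g_\res=g_\res/3+2g_\res/3$ gives
\[
\prod_{(r,s)\in\pi}e^{-g_\res(t_s-t_r)}\;\le\;e^{-(g_\res/3)|I|}\prod_{(r,s)\in\pi}e^{-(2g_\res/3)(t_s-t_r)}\,,
\]
the second factor being retained for the time integration.

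The $\delta(\partial I-\partial\varpi)$ in \eqref{def: cavn} pins $t_1=\infi$ and $t_{2k}=\supi$, leaving $2k-2$ times to integrate over $\Delta_I^{2k-2}$ decorated by the surviving exponentials. Passing to increments $\tau_i=t_{i+1}-t_i$ with $\sum_i\tau_i=|I|$, each weight becomes $e^{-(2g_\res/3)\sum_{i=r}^{s-1}\tau_i}$, and a tree-graph / Kirkwood--Salsburg estimate carried out simultaneously with the sum over irreducible pairings~$\pi$ yields
\[
\sum_{\substack{\pi\in\mathrm{Pair}(\varpi)\\ \pi\text{ irreducible}}}\int_{\Delta_I^{2k-2}}\prod_{(r,s)\in\pi}e^{-(2g_\res/3)(t_s-t_r)}\,\d t\;\le\;\frac{C^{k}}{(k-1)!}\max(1,|I|^{k-1})\,,
\]
the factorial surplus $1/(k-1)!$ being precisely what absorbs the $\caO((2k-1)!!)$ pairing count. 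Combining all three steps, the contribution of $|\varpi|=2k$ to $|\caV^{(n)}_I(\cdots)(w,w')|$ is bounded by $\lambda^{2k}C^{m+k+1}\max(1,|I|^{k-1})e^{-(g_\res/3)|I|}/(k-1)!$ times the walk sum of the statement. Summing over $k\ge n$, the $1/(k-1)!$ converts the series into $\sum_{k\ge n}(C\lambda^2|I|)^{k-n}/(k-n)!\le e^{C\lambda^2|I|}$, and the residual $e^{C\lambda^2|I|}$ is dominated by a sliver of $e^{-(g_\res/3)|I|}$ provided $C\lambda^2\le g_\res/6$. The $k=n$ term is then dominant, producing the announced bound.

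The main technical obstacle is the tree-graph estimate of the third step: for an arbitrary irreducible pairing~$\pi$ on $2k$ ordered points (crossed, nested, or any combination) one must bound the iterated time integral by $C^{k}\max(1,|I|^{k-1})/(k-1)!$ uniformly in~$\pi$, while simultaneously summing over the roughly $(2k-1)!!$ such pairings. The intuition is that irreducibility forces the $k$ arches to chain along the time axis so that only one effective length-scale is free to sweep~$|I|$, while the remaining $k-1$ arch-lengths are pinned by their exponential weights; the factorial $1/(k-1)!$ then emerges from the ordered-simplex measure in an essentially Kirkwood--Salsburg fashion.
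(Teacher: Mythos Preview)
Your overall architecture matches the paper's proof closely: decouple the spatial sum via Combes--Thomas to produce the walk sum of length $2m+2$, use irreducibility of $\pi$ together with $\sum_{(r,s)\in\pi}(t_s-t_r)\ge|I|$ to extract the factor $\e{-c|I|}$, and then control the remaining time-integrals and pairing sums. Where you diverge from the paper is in the third step, which you correctly identify as the crux --- and here you have made life harder than necessary.

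The paper's observation is that once irreducibility has done its job of producing $\e{-c|I|}$, it can be \emph{discarded}: one simply enlarges the sum to run over \emph{all} pairings $\pi\in\mathrm{Pair}(p)$, not just irreducible ones. This over-counting costs nothing, and it turns the combinatorial estimate into an elementary calculation. Relabel the $2p$ times as $(u_1,\ldots,u_p,v_1,\ldots,v_p)$ where $u_i$ is the smaller time in the $i$-th pair and the $u_i$ are ordered, $u_1<\cdots<u_p$. Then the sum over all pairings together with the simplex integral becomes
\[
\sum_{p\ge n}\int_{\Delta_I^{p}}\d\underline u\ \prod_{i=1}^p\int_{v_i>u_i}\str k(v_i-u_i)\str\,\d v_i
\ \le\ \sum_{p\ge n}\frac{(|I|\,\|k\|_1)^p}{p!}\,,
\]
which is exactly the truncated exponential $\e{|I|\|k\|_1}_n$. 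The pinning of $t_1=\infi$, $t_{2p}=\supi$ is handled by a two-case split according to whether $(1,2p)\in\pi$ or not, yielding the bound $\|k\|_\infty\,\e{|I|\|k\|_1}_{n-1}+\|k\|_1^2\,\e{|I|\|k\|_1}_{n-2}$; see the paper's combinatorial lemma. Choosing $\|k\|_1\le g_\res/3$ (i.e.\ $\lambda$ small) and using $\e{x}_n\le x^n\e{x}/n!$ then gives precisely the $(C\lambda)^{2n}\max(1,|I|^{n-1})$ you want, with the residual $\e{|I|\|k\|_1}$ absorbed by the extra $\e{-g_\res|I|/3}$ that the paper sets aside by extracting $\e{-2g_\res|I|/3}$ rather than your $\e{-g_\res|I|/3}$.

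So your ``tree-graph / Kirkwood--Salsburg'' obstacle dissolves: no chaining argument exploiting the structure of irreducible pairings is needed, because the $(u,v)$ change of variables already produces the factorial $1/p!$ from the ordering of the $u_i$ alone, uniformly over all pairings. Your bound $\sum_{\pi\,\mathrm{irr}}\int\cdots\le C^k\max(1,|I|^{k-1})/(k-1)!$ is in fact an immediate consequence of this, but proving it directly via irreducibility would be working much too hard. A minor bookkeeping point: with your split you also need a sliver of exponential to absorb both $\e{C\lambda^2|I|}$ from the $k$-sum and $\e{C\tilde\nu|I|}$ from the propagation bounds; the paper handles this by extracting $2g_\res/3$ upfront and spending $g_\res/3$ of it at the end.
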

\begin{proof}
Recalling the definition of the integration measure $\dd\varpi$ on $\caP_{I}$ in~\eqref{path space measure}, we perform the integration on the right hand of~\eqref{def: cavn}, by first fixing the number of triples in $\varpi$, ($|\varpi|=2p$, $p\ge n$), and the time coordinates in~$\varpi$, ($t_1,\ldots,t_{2p})$, while summing over the spatial- and $\uvs$-coordinates.
The times $s_1,\ldots, s_m$ induce a partition of the time interval $I$ into $m+1$ intervals $I^{(j)}$, $j=0,\ldots,m$,
and, almost surely with respect to the fixed times $t_1,\ldots,t_{2p}$, also a
partition of the path $\varpi \in \caP_I$ into subpaths $\varpi^{(j)}$ (they can be
empty, i.e., $\varpi^{(j)}=\emptyset$).  We denote by $\mathop{\sum}\limits^{\varpi^{(j)}}_{\underline{x},\uvs}$ the
sum over the $\ux$- and $\uvs$-coordinates of $\varpi^{(j)}$. 
For any one of those $I^{(j)},\varpi^{(j)}$,
we use~\eqref{eq: basic estimate cagzero}  with $\nu$ replaced by $\tilde \nu$, to obtain, for any  $0 \leq \nu \leq \tilde \nu$,

\begin{align}
\sum_{\underline{x},\uvs}^{\varpi}
\str&\mathcal{G}_I^0\big(\caS_1^{s_1},\ldots,\caS_{m}^{s_m}\,|\,\varpi\big) (w,w') \str\nonumber \\
&\leq   \sum_{\footnotesize{\left.\begin{array}{c}   \uw: w \to w' \\  l(w)=2m+2
\end{array}\right.}}  
\sum^{\varpi^{(0)}}_{\underline{x},\uvs
}\str\mathcal{G}_{I^{(0)}}^0\big(\,\varpi^{(0)} \big) (w_0,w_1) \str 
   \bigg(   \prod_{j=1}^m
\str\caS^{j}(w_{2j-1},w_{2j}) \str  \sum^{\varpi^{(j)}}_{\underline{x},\uvs}
\str\mathcal{G}_{I^{(j)}}^0\big(\,\varpi^{(j)} \big) (w_{2j},w_{2j+1})\str\bigg) 
\nonumber \\[2mm]
&\leq   \sum_{\footnotesize{\left.\begin{array}{c}   \uw: w \to w' \\  l(w)=2m+2
\end{array}\right.}}   \bigg( \prod_{j=0}^m   C(C \tilde\nu^{-2d})^{\str \varpi^{(j)} \str }  \e{-\tilde\nu \str w_{2j+1}- w_{2j} \str } \e{C \str \tilde\nu \str \str I^{(j)} \str} \bigg) \bigg(
 \prod_{j=1}^m   \str\caS^{j}(w_{2j-1},w_{2j}) \str      \bigg)
\nonumber \\[2mm]
&\leq    C^{m+1} (C \tilde\nu^{-2d})^{\str \varpi \str }  \e{C \tilde\nu \str I \str}   
 \sum_{\footnotesize{\left.\begin{array}{c}   \uw: w \to w' \\  l(w)=2m+2
\end{array}\right.}}   \e{-\nu \norm \uw \norm}  \bigg(
 \prod_{j=1}^m   \str\caS^{j}(w_{2j-1},w_{2j}) \str   \e{\nu \str w_{2j}- w_{2j-1} \str }    \bigg)\,.
\label{eq: bound xensv}
\end{align}
This way, we have
bounded the $x$-~and $\vs$-sums in \eqref{def: cavn}, so that we are left with
the $\ut$-integral, $\pi$-sum and the sum over $p\ge n$.   More precisely, to pass from  \eqref{eq: bound xensv} to \eqref{eq: bound wpaths}, it suffices to show
\begin{equation}
 \e{C \tilde\nu \str I \str}    \sum_{p\geq n}  (C \tilde\nu^{-2d})^{2p }   \int_{\Delta^{2p}_I} \d \ut   \sum_{\substack{\pi \in
\mathrm{Pair} (\varpi)\\
\pi\,\mathrm{irreducible}}} \,  \sup_{\ux, \underline{\varsigma}} \str \zeta(\varpi,\pi)\str         \leq  (C \la)^{2n}  \max(1,\str I \str^{n-1})    \e{- \tfrac{g_\res}{3} 
\str I \str }\,, \label{eq: sup sum}
\end{equation}
for some sufficiently small $\tilde \nu$, and all sufficiently small $\la$, depending on $\tilde \nu$.  The supremum in the left hand expression is over all $x$-,$\varsigma$-coordinates of~$\varpi$.   Consider any pairing $\pi$ in the sum  on the left hand side; by irreducibility,
we know that  $\sum_{(r,s) \in \pi} \str t_{r}-t_s \str  \geq \str I \str$ and that $t_1= \infi, t_{2p}=\supi$, so we can bound the left hand side by
\begin{equation}
\e{-\tfrac{2g_\res}{3} \str I \str}      \sum_{p\geq n}   \sum_{{\pi\in\mathrm{Pair}{ (p)}}} \int_{\Delta^{2p-2}_I} \d \ut  
\prod_{(r,s) \in \pi}  \str k(t_r-t_s) \str
\Big\str^{}_{\footnotesize{\left.\begin{array}{l}   t_1= \infi \\ t_{2p}=\supi
\end{array}\right.}} \,,      \label{eq: sup sum two}  \end{equation}
where we set
\begin{equation}\label{def: k}
k(t)\deq  \la^2 (C\tilde\nu^{-4d})
\e{ t(C  \tilde \nu  - \tfrac{g_\res}{3}) } \tilde h(t)\,.
\end{equation}

To deal with \eqref{eq: sup sum two}, we first develop a 
combinatorial estimate:
\begin{lemma}   Let $\e{x}_n\deq \sum_{j \geq n} x^j/j!$ and 
$\e{x}_{-n}\deq\e{x}_{0} =\e{x}$, for $n \in \bbN$.
Then, for  $ k \in \mathrm{L}^1(\bbR_+)\cap \mathrm{L}^{\infty}(\bbR_+)$,  
\begin{align}
&  \sum_{p\geq n} \sum_{\pi\in\mathrm{Pair}{ (p)}} \int_{\Delta^{2p}_I} \d \ut  
\prod_{(r,s) \in \pi}  \str k(t_r-t_s) \str  \leq \e{\str I  \str \norm k
\norm_1}_n  \label{eq: combi1}\,,
 \\[2mm] 
&  \sum_{p\geq n} \sum_{{\pi\in\mathrm{Pair}{ (p)}}} \int_{\Delta^{2p-2}_I} \d \ut  
\prod_{(r,s) \in \pi}  \str k(t_r-t_s) \str
\Big\str^{}_{\footnotesize{\left.\begin{array}{l}   t_1= \infi \\ t_{2n}=\supi
\end{array}\right.}} \leq  \norm k \norm^{}_{\infty} \e{\str I  \str \norm k 
\norm_1}_{n-1}  + \norm k \norm^2_{1} \e{\str I  \str \norm k  \norm_1}_{n-2}\,,
\label{eq: combi2}
\end{align}
 where the $p=0$ term on  the left hand side of \eqref{eq: combi1} is understood to equal $1$.  The bounds \eqref{eq: combi1},\eqref{eq: combi2} hold for $n\geq 0$,  $n\geq 1$, respectively.
\end{lemma}
\begin{proof}
For any pairing $(r,s)\in\pi$ and the corresponding time
coordinates $t_r$, $t_s$ we set $u_i\deq t_r$ and $v_i\deq t_s$, where the
indices $i=1,\ldots,n $ are chosen such that $\infi\le
u_1<u_2<\ldots<u_n\le\supi$. Note
that, by our definition of a pairing,
$u_i<v_i$. By using the change of variables $\ut \to (\uu,\uv)$, we rewrite the
left side of~\eqref{eq: combi1} as 
\begin{equation}
Z_I(n)\deq \sum_{p\geq n}  \int_{\Delta_I^{2p}} \d \uu    \int_{v_j >u_j} \d \uv  
\prod_{j=1}^p  k(v_j-u_j)\,,
\end{equation}
with the $p=0$ term being $1$, and we set $Z_I(n <0)\deq Z_I(0)$.
 Each $v_j$-integral is bounded by $\norm k \norm_1$,  the  integral over
$\Delta_I^{n}$ gives $\str I\str^n/n!$ and~\eqref{eq: combi1} follows.
To derive~\eqref{eq: combi2},  we split the expression according to whether the pairing $\pi$ contains
the pair $(1,2n)$ or not.  
In the first case,  we note that the time-coordinates of all pairs other than $(1,2n)$ are not constrained and hence we get the estimate 
\begin{equation}\label{eq: combi3}
\str k(\supi-\infi)\str\,\str Z_I(n-1)\str \,,
\end{equation}
where the first factor originates from the pair $(1,2n)$.
In the second case, there are pairs $(1,j),  (j', 2n)$ with $j \neq 2n, j' \neq 1$. The time coordinates of all other pairs are again unconstrained, so we get the estimate
\begin{equation}\label{eq: combi4}
 \int \dd v\, \str k(v-\infi)\str  \int
\dd u \,\str k(\supi-u)\str\,  \str Z_I(n-2)\str\,,
\end{equation}
where the first and second factor originate from the pairs  $(1,j),  (j', 2n)$, respectively. 
 Equation~\eqref{eq:
combi2} then follows from adding~\eqref{eq: combi3} and~\eqref{eq: combi4}, and using \eqref{eq: combi1} to evaluate $Z_I(\cdot)$.
\end{proof}

To bound  \eqref{eq: sup sum two}, we first choose  $\tilde\nu$ small enough such $\tfrac{g_\res}{3} - C \tilde \nu >0$ and hence $k$ as defined in \eqref{def: k} belongs to $\mathrm{L}^1(\bbR_+)\cap \mathrm{L}^{\infty}(\bbR_+)$.  Then, by~\eqref{eq: combi2}, we bound \eqref{eq: sup sum two} by
\begin{equation}  \label{eq: sup sum three}
\e{-\tfrac{2g_\res}{3} \str I \str} \left(\norm k \norm^{}_{\infty} \e{\str I  \str \norm k 
\norm_1}_{n-1}  + \norm k \norm^2_{1} \e{\str I  \str \norm k  \norm_1}_{n-2} \right) \,.
\end{equation}
Finally, we use $\e{x}_n \leq  x^n\e{x}/n! $ for $x\geq 0$, and 
we choose $\str\la\str$ small enough compared to $(\tilde\nu)^{-2d}$, so that $\norm k
\norm_1 \leq  g_\res/3$ and \eqref{eq: sup sum three} is bounded by the right side of \eqref{eq: sup sum}.

\end{proof}

\subsubsection{Bounds on correlations functions with unbounded observables}
\label{sec: bounds unbounded correlations}
As argued previously, we cannot bound the correlation functions
$\caG_I(\cdot),\caV_I(\cdot)$ in norm when $\caS_j$ is unbound, for some $j$, but we can bound
their kernels. 
For future use in Section~\ref{sec: analysis of resolvent around zero},  let us
bound $\partial_\field  \caV^{(n)}_I \rho_\sys$ for an exponentially localized
$\rho_\sys$ and interval $I \subset \bbR_+$.  This quantity is given by (see~\eqref{eq: examplebound1})
\begin{equation} \label{eq: derivative as integral}
\partial_\field  \caV^{(n)}_I \rho_\sys =   \lambda^2\ii \sum_{\vs= \links,
\rechts}  (\delta_{\vs, \links}-  \delta_{\vs, \rechts})   \int_{I} \d s \,   
\caV^{(n)}_I ((X_l)^{s})\rho_\sys\,,
\end{equation}
and hence, by Lemma~\ref{lem: bounds on diagrams}, for $\lambda,\nu>0$
sufficiently small,
\begin{equation} \label{eq: bounds for derivative}
\left\str\left(\partial_\field  \caV^{(n)}_I \rho_\sys\right) (w)\right\str \leq 
\max(1,\str I \str^{n}) (C\str \la \str)^{2n+2}   \e{- \tfrac{g_\res}{3}  \str I \str } 
\sum_{w_0,w_1}   \str w_1\str \e{-\nu (\str w-w_1 \str+ \str w_1-w_0 \str+ \str
w_0 \str)} \,,
\end{equation}
where $\nu$ is chosen so small that $\str\rho_\sys(w_0)\str \leq C
\e{-\nu \str w_0 \str}$. 
 Higher derivatives lead to an obvious generalization of~\eqref{eq: bounds for
derivative}; the $k$'th derivative will produce the factor $\max(1,\str I \str^{n+k-1}) (C\la^2)^{n+k}$  on the right side (because the $k^{\textrm{th}}$ derivative
corresponds to $k$ time-integrations over $s_1, \ldots,s_k$ in the
generalization of~\eqref{eq: derivative as integral})  and $k$ factors $\str
w_{i} \str, i=1, \ldots,k$, but these can still be controlled by the exponential
decay in~$\str w_0 \str$ and $\str w_{i+1}-w_i \str$. 
An obvious consequence of \eqref{eq: bounds for derivative} is that, for exponentially localized $\rho_\sys$, the function 
\begin{equation} \label{eq: smoothness of cav field}
\field \mapsto  \caV^{(n)}_I \rho_\sys
\end{equation}
is $C^\infty$ and that all derivatives are exponentially localized operators, too.

\subsubsection{Bounds on the effective dynamics $\caZ_I$}\label{section: bounds
on effective dynamics}
Next, we show how the bounds in Lemma~\ref{lem: bounds on diagrams} help to
control the reduced evolution $\caZ_I$ and the correlations functions
$\caG_I(\caS_1^{s_1},\ldots, \caS_m^{s_m}) $. 
If we demand that $\caS_1, \ldots, \caS_m$ are quasi-diagonal, then Lemma~\ref{lem:
bounds on diagrams} immediately yields
\begin{equation} \label{eq: bound on cavcas}
\left\str \caV^{(n)}_I(\caS^{s_1}_1, \ldots ,\caS^{s_m}_m) (w',w) \right\str 
\leq  C'  (C\str\la\str)^{2n}  \max(1,\str I \str^{n-1}) \e{-\tfrac{g_\res}{3} \str I \str}
\e{-\nu \str w-w'\str}\,,
\end{equation}
for $\la, \nu>0$ sufficiently small, with $C'$ depending on the $\caS_j$'s. To get a bound on the operator norm, we
note that
\begin{equation*}
\norm \caS \norm \leq   \sup_{w} \sum_{w'} \str \caS(w',w)\str\,,
\end{equation*}
where the supremum and the sum are over $\Z^d\times\Z^d$.

 Next, we define operators $\mathcal{J}_{\theta}$, with $\theta=(\theta_{\links},
\theta_{\rechts})\in\C^d \times \C^d$, by 
\begin{align} \label{eq:def jkappa}
\mathcal{J}_{\theta}O\deq\e{-\ii(\theta_\links,X)}O\,\e{-\ii(\theta_\rechts,
X)}\,,\quad\ O\in\scrB(\mathscr{H}_{\sys})\,.
\end{align}
Note that $\mathcal{J}_{\theta}$ is unbounded  if $\theta$ has an imaginary
part. Also note that an operator $O\in\scrB_2(\scrH_\sys)$, is
exponentially localized iff $\|\caJ_{\theta}O\|_2<\infty$, for
$\theta=(\theta_\links,\theta_\rechts)$ in some complex neighborhood of $(0,0)$.

From~\eqref{eq: bound on cavcas}, we get for $0<\lambda$ and $\theta\in\C^{2d}$
sufficiently small, in particular $\str \theta \str \leq \nu$, 
\begin{equation} \label{eq: bound on cavcas norm}
\norm \caJ_{\theta}\caV_I(\caS^{s_1}_1, \ldots, \caS^{s_m}_m) \caJ_{-\theta}
\norm  \leq  C' (C\str\la\str)^{2n}  \max(1,\str I \str^{n-1}) \e{-\tfrac{g_\res}{3} \str I
\str}\,.
\end{equation} 
This implies that $\caV_I^{(n)}(\caS_1^{s_1},\ldots, \caS_m^{s_m})
$  preserves the subspace of exponentially localized density operators.

Using the above bounds on $\caV^{(n)}_I(\,\cdot\,)$, with $m=0$, and  propagation
bounds on $\caU_I$, namely \[\norm  \caJ_{\theta}\,\caU_I \caJ_{-\theta} \norm
\leq C \e{\la^2 \str I \str \caO(\str \theta \str)}\,,\quad\textrm{ for }I \subset
\bbR_+\,,\qquad\qquad \norm \caJ_{\theta}\,\caU_I \caJ_{-\theta} \norm  \leq
C\,,\quad\textrm{ for }I \subset[-\beta/2,0]\,,\] we can bound the series
in~\eqref{eq:6.28} and~\eqref{eq: relation caz cazbeta} by 
\begin{equation*}
\norm \caJ_{\theta}  \caZ_{I} \caJ_{-\theta}  \norm \leq   C  \e{\la^2 \str I
\str \caO(\str \theta \str)}  \sum_{l=0}^{\infty}  \int_{{\Delta^{2l}_I}} \d
\underline{t}   \,  (C\str\la\str)^{2l}  \leq  C  \e{\la^2 \str I \str \caO(\str
\theta \str)}\,,\qquad I\subset[-\beta/2,\infty)\,.
\end{equation*}

\subsection{Laplace transform of Green functions}
As already mentioned in the introduction, it is more convenient to conduct our analysis of the long-time
behaviour of $\caZ_{I}$ in the energy-domain, instead of the time-domain. For
sufficiently large $\re z$, we set
\begin{align}\label{11eq}
\mathcal{R}(z)\deq\int_0^{\infty}\dd t\,\e{-zt}\,\Matz\,, \qquad   
\mathcal{R}_{\be}(z)\deq\int_0^{\infty}\dd t\,\e{-zt}\,\caZ_{[-\beta/2, t]}\,,
\end{align}
\begin{align} 
\caM(z) \deq \int_0^{\infty} \d t\, \e{-zt} 
(\caV_{[0,t]}-\caV_{[0,t]}^{(2)})\,, \qquad 
\caR_{\mathrm{ex}}(z) \deq \int_0^{\infty} \d t \,\e{-zt}  \caV_{[0,t]}^{(2)}\, 
, \label{eq:6.2900}
\end{align}
and (as elucidated below), 
\begin{align*} 
\caY(z) \deq   \caZ_{[-\be/2,0]} +   \int_{0}^{\infty} \d v \, \e{- z v} 
\int_{-\be}^{0} \d u\,  \caV_{[u,v]}    \caZ_{[-\beta/2,u]} \,.
\end{align*}
Note that $\caM$ is the sum/integral of the lowest order diagrams. When writing $\norm \caA \norm$, where $\caA$ is an operator acting on (a subspace of) 
$\scrB_2(\scrH_\sys)$, we understand $\norm \cdot \norm$ to be the standard
operator norm on $\scrB(\scrB_2(\scrH_\sys))$. 

Recall the definition of the operators $\caJ_\theta$, with $\theta \in
\bbC^{2d}$, in~\eqref{eq:def jkappa}.
\begin{lemma}\label{lem: pseudoresolvent}
The operator-valued function  $(z,\theta) \mapsto \caJ_{\theta} \caA(z)
\caJ_{-\theta}$, with $\caA= \caM, \caR_{\mathrm{ex}}, \caY$, is
analytic  in the region  $|\theta|<k_\theta, \mathrm{Re}\,z>-k_z$, for some
$k_{z},k_\theta >0$,  and satisfies the bounds (as $\la\to 0$)
\begin{equation}  \label{eq: bounds on carex}
\sup_{|\theta|<\theta_0\,,\,\mathrm{Re}\,z>-g_0}  \left\{  \begin{array}{rr} 
\|\mathcal{J}_{\theta}\mathcal{M}(z)\mathcal{J}_{-\theta}\|
&= \,\,\caO(\lambda^2) \,,   \\[2mm]
\|\mathcal{J}_{\theta}\mathcal{R}_{ex}(z)\mathcal{J}_{-\theta}\|
&=\,\,\caO(\lambda^4)  \,,  \\[2mm]
\|\mathcal{J}_{\theta}\mathcal{Y}(z)\mathcal{J}_{-\theta}\|
&=\,\,\caO(\lambda^0)\,.
  \end{array} \right.
\end{equation}
Moreover, for $\re z>0$, 
\begin{equation} \caR_\be(z)=  \caR(z) \caY(z)  \label{eq: relation car and carbe} \end{equation}
and
\begin{align}\label{eq:6.4}
\mathcal{R}(z) &=(z-\caL_\sys-\mathcal{M}(z)-\mathcal{R}_{ex}(z))^{-1}\,,
\end{align} 
where $\caL_\sys=\ad (H_\sys)$ is the Liouvillian of the particle system.
\end{lemma}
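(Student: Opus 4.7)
The lemma has three distinct assertions: analyticity and size bounds for $\caJ_\theta\caA(z)\caJ_{-\theta}$ with $\caA\in\{\caM,\caR_{\mathrm{ex}},\caY\}$, the product formula $\caR_\be(z)=\caR(z)\caY(z)$, and the resolvent-like identity for $\caR(z)$. All three rely on two ingredients already established: the diagrammatic bound of Lemma~\ref{lem: bounds on diagrams} in its $\caJ_\theta$-conjugated form~\eqref{eq: bound on cavcas norm}, and the propagation bound on $\caJ_\theta\caU_I\caJ_{-\theta}$ recorded in Section~\ref{section: bounds on effective dynamics}. For the first assertion, the key observation is that $\caV_{[0,t]}-\caV^{(2)}_{[0,t]}$ collects precisely the irreducible diagrams with $|\varpi|=2$, so the case $n=1$, $m=0$ of Lemma~\ref{lem: bounds on diagrams} yields $\norm\caJ_\theta(\caV_{[0,t]}-\caV^{(2)}_{[0,t]})\caJ_{-\theta}\norm\le C\la^2\e{-g_\res t/3}$, while the case $n=2$ gives $\norm\caJ_\theta\caV^{(2)}_{[0,t]}\caJ_{-\theta}\norm\le C\la^4\max(1,t)\e{-g_\res t/3}$. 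Choosing $k_z<g_\res/3$ and $|\theta|<k_\theta$ small, both majorants dominate $|\e{-zt}|$ uniformly on $\{\re z>-k_z\}$, so the integrals defining $\caM(z)$ and $\caR_{\mathrm{ex}}(z)$ converge absolutely, are jointly analytic in $(z,\theta)$, and obey the claimed $\caO(\la^2)$ and $\caO(\la^4)$ bounds. For $\caY(z)$ the first summand is bounded by Section~\ref{section: bounds on effective dynamics}, and the double integral converges by the same argument applied to $\caV_{[u,v]}$ together with the uniform boundedness of $\caZ_{[-\be/2,u]}$ for $u\in[-\be/2,0]$, giving the $\caO(\la^0)$ bound.

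For the product formula I would apply $\int_0^\infty\e{-zt}\,dt$ to both sides of the identity~\eqref{eq: relation caz cazbeta}. The first term on the right contributes $\caR(z)\caZ_{[-\be/2,0]}$ directly. For the second term I swap the order of the $t$- and $v$-integrations, substitute $s=t-v$, and use the time-translation invariance $\caZ_{[v,v+s]}=\caZ_{[0,s]}$ that holds because the Liouvillian is time-independent. The $s$-integral then factors out as $\caR(z)$ on the left, and what remains is precisely the integral summand of $\caY(z)$. Absolute convergence in the $\caJ_\theta$-conjugated topology (already established) justifies the interchanges.

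The resolvent identity is the most delicate part. Starting from the polymer expansion~\eqref{eq:6.28}, I would take its Laplace transform and change variables from the ordered times $0<t_1<\cdots<t_{2l}<t$ to the gap times $\tau_i\deq t_i-t_{i-1}$ (with $t_0\deq 0$ and $\tau_{2l+1}\deq t-t_{2l}$). This turns the simplex integration over $\Delta_{[0,t]}^{2l}$ together with the outer Laplace integral into an unconstrained product $\prod_i\int_0^\infty\e{-z\tau_i}d\tau_i$, and by time-translation invariance both $\caU_{[t_{i-1},t_i]}$ and $\caV_{[t_{i-1},t_i]}$ depend only on $\tau_i$. Each factor then carries its own Laplace integral, producing
\[
\caR(z)=G(z)\sum_{l\ge 0}\bigl[(\caM(z)+\caR_{\mathrm{ex}}(z))\,G(z)\bigr]^l,
\]
where $G(z)\deq\int_0^\infty\e{-zt}\caU_{[0,t]}\,dt=(z-\caL_\sys)^{-1}$ in a formal sense. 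Summing the Neumann series, which converges absolutely in the $\caJ_\theta$-weighted norm because $\caM+\caR_{\mathrm{ex}}=\caO(\la^2)$, yields the stated identity.

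The main technical obstacle lies precisely in this last step: $\caL_\sys$ is unbounded (due to the $-\la^2\field\cdot X$ term in $H_\sys$), so $G(z)=(z-\caL_\sys)^{-1}$ needs careful interpretation on $\scrB_2(\scrH_\sys)$. My plan is to execute the geometric-series manipulation entirely at the level of the $\caJ_\theta$-conjugated operators acting on the subspace of exponentially localized density matrices, where the propagation estimate of Section~\ref{section: bounds on effective dynamics} gives uniform control of $\caJ_\theta\caU_{[0,t]}\caJ_{-\theta}$ on finite time intervals and the resulting Neumann series is norm-convergent. The equality $\caR(z)=(z-\caL_\sys-\caM(z)-\caR_{\mathrm{ex}}(z))^{-1}$ then extends from the exponentially localized sector to its closure by continuity.
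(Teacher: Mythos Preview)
Your treatment of the bounds on $\caM,\caR_{\mathrm{ex}},\caY$ and of the product formula $\caR_\be(z)=\caR(z)\caY(z)$ is correct and matches the paper's argument essentially line by line.

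For the resolvent identity~\eqref{eq:6.4}, your strategy (Laplace transform of the polymer expansion~\eqref{eq:6.28} $\Rightarrow$ Neumann series) is the same as the paper's, but you make the last step harder than it needs to be and leave a small gap. The worry about the unboundedness of $\caL_\sys$ is misplaced: $\caL_\sys=\ad(H_\sys)$ is \emph{selfadjoint} on the Hilbert space $\scrB_2(\scrH_\sys)$ (Stone's theorem applied to the unitary group $\caU_t=\Ad(\e{-\ii tH_\sys})$), so $(z-\caL_\sys)^{-1}$ is a perfectly well-defined bounded operator with $\norm(z-\caL_\sys)^{-1}\norm\le|\re z|^{-1}$ for $\re z\ne0$. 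No $\caJ_\theta$-weighted spaces are needed here; the paper works directly in the $\scrB_2$ operator norm. The gap is that the Neumann series $\sum_l[(\caM+\caR_{\mathrm{ex}})(z-\caL_\sys)^{-1}]^l$ requires $\norm(\caM+\caR_{\mathrm{ex}})(z-\caL_\sys)^{-1}\norm<1$, i.e.\ $C\la^2/\re z<1$, which fails for $0<\re z\le C\la^2$. (Your weighted-space variant does not help: the propagation bound $\norm\caJ_\theta\caU_{[0,t]}\caJ_{-\theta}\norm\le C\e{\la^2 t\,\caO(|\theta|)}$ still yields a Laplace transform that blows up as $\re z\searrow\la^2\caO(|\theta|)$.) The paper closes this gap by first proving~\eqref{eq:6.4} for $\re z$ large and then extending to all of $\{\re z>0\}$ by analytic continuation, using that $\caR(z)=\int_0^\infty\e{-zt}\caZ_{[0,t]}\,\dd t$ is manifestly analytic there because $\norm\caZ_{[0,t]}\norm\le C$. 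You should add this step.
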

\begin{proof}
The bounds on $\mathcal{R}_{ex}(z), \mathcal{M}(z)$ are (the Laplace transform
of) the bound in \eqref{eq: bound on cavcas norm} for $m=0$ and $n=2$, $n=1$,
respectively, with $k_\theta < \nu$, and $k_z = g_\res/4$. For the bound
on $\caY(z)$, we also use that $\norm \caZ_I \norm  \leq C $, for  intervals
$I \subset  [-\beta/2, 0]$, as established above.
To get \eqref{eq:6.4}, let us abbreviate
 \begin{align*}
 \mathcal{R}\irr(z)\deq\mathcal{M}(z)+\mathcal{R}_{ex}(z)\,,\quad
\mathcal{R}_{\sys}(z)\deq(z-\caL_{\sys})^{-1}\,.
 \end{align*}
 Since $\caL_{\sys}$ is selfadjoint (as an operator on $\scrB_2(\scrH_{\mathrm{S}})$), we
have $\|\mathcal{R}_{\sys}(z)\|\le|\re z|^{-1}$. We
choose $\lambda$ sufficiently small and $\re z$ sufficiently large such that
$\|\mathcal{R}\irr(z)\mathcal{R}_{\sys}(z)\|\le|\re
z|^{-1}\|\mathcal{R}\irr(z)\|<1$.
Starting from the `polymer expansion' of $\Matz$ in Equation~\eqref{eq:6.28} and
taking the Laplace transform, we
find that
 \begin{align*}
\int_0^{\infty}\dd t\,\e{-zt}\,\Matz
 =\sum_{n=0}^{\infty}\mathcal{R}_{\sys}(z)\big(\mathcal{R}\irr(z)\mathcal{R}
_{\sys}(z))^n 
=\mathcal{R}_{\sys}(z)(1-\mathcal{R}\irr(z)\mathcal{R}_{\sys}(z))^{-1}\,,
 \end{align*}
which is \eqref{eq:6.4}, for $\re z$ large enough. Since $\|\caZ_I\|\le C$, the
left side of~\eqref{eq:6.4} is an analytic function in the region
$\{z\in\C\,:\,\re z>0\}$ and we can extend~\eqref{eq:6.4} to that region by
analytic continuation.  Finally, the relation $\caR_\be(z)= \caR(z) \caY(z)$
follows (first for $\re z$ large enough) by taking the thermodynamic limit and
the Laplace transform of \eqref{eq: relation caz cazbeta} (and then by
continuing analytically in $z$).

\end{proof}

\subsection{Fiber decomposition}\label{sec: fiber
decomposition}
We interrupt our analysis of Green functions in order to recall the fiber decomposition. 

To start with, we note that $\scrB_{1}(\scrH_\sys)\subset\scrB_2(\scrH_\sys)$,
$\scrH_{\sys}=\ell^2(\Z^d)$. Hence, we may view density matrices on
$\scrH_\sys$ as elements of the space of Hilbert-Schmidt operators,
$\scrB_2(\scrH_\sys)\simeq
\mathrm{L}^2(\bbT^d \times \bbT^d, \d k_{\links}\d k_{\rechts})$. 
We define 
\begin{equation*}
    \widehat O     (k_{\links},k_{\rechts}) \deq\frac{1}{(2\pi)^{d}}
\sum_{x_{\links},x_{\rechts} \in \lat}   
O(x_{\links},x_{\rechts})  \e{- \ii k_{\links} \cdot x_{\links}+\ii k_{\rechts}
\cdot x_{\rechts} }\,, \qquad  O \in    
\scrB_2(\ell^2(\lat))\,.
\end{equation*}
In what follows, we will write $O$ for $\widehat O$.
To conveniently cope with the translation invariance of our model, we make the
following
change of variables 
\begin{equation*}
k\deq \frac{k_{\links}+k_{\rechts}}{2}\,, \qquad  p\deq k_{\links}-k_{\rechts}\,,
\end{equation*}
and, for a.a.\ $p \in \tor$, we obtain a well-defined function $ O_p
\in\mathrm{L}^2(\bbT^d)$ by putting
\begin{equation}\label{eq:2.68}
(O_p)(k) \deq O \left(k+\frac{p}{2},k-\frac{p}{2}\right)\,.
\end{equation}
This follows from the fact that the Hilbert space  $\scrB_2(\scrH_\sys)
\simeq\mathrm{L}^2(\bbT^d \times \bbT^d, \d k_{\links}\d k_{\rechts})$ can be
represented as a
direct integral
\begin{equation} \label{def: fiber decomposition}
\scrB_2(\scrH_\sys) \simeq \int^\oplus_{ \bbT^d} \d p \,    \scrH_p \,, \qquad  
  O =
 \int^\oplus_{ \bbT^d} \d p \, O_p\,,
\end{equation}
where each `fiber space' $\scrH_p$ can be identified with
$\mathrm{L}^2(\bbT^d)$.

Recall the definition of the operators $\caJ_{\theta}$, with $\theta\in\C^{2d}$, in~\eqref{eq:def jkappa}. The following lemma captures some identities used later on.
\begin{lemma} \label{lemma: fibers}
Let $O \in \scrB_1(\scrH_\sys)$, then 
\begin{equation} \label{eq: trace as integral}
\Tr_{\sys} [O \e{\ii p\cdot X }]   =  \langle 1,  O_p   
\rangle_{\mathrm{L}^2(\tor)}=  \int_\tor \d k\, 
O_p(k)\,, \qquad{  p\in \tor}\,. 
\end{equation}
If there is a $\delta>0$ such that  $
 \norm \caJ_{\theta/2} O \norm_2  < \infty $, for  $ \str \theta \str  \leq
\delta$, then $p \mapsto  O_p $ is analytic in the interior of the strip
$\bbV_{\delta}  $.
\end{lemma}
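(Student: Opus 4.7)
The plan for the first identity is a direct Fourier computation. Writing $\Tr_\sys[Oe^{ip\cdot X}]$ in the position basis gives the diagonal sum $\sum_{x\in\Z^d}O(x,x)e^{ip\cdot x}$. On the other hand, substituting the definition of $\widehat O$ into $(O_p)(k)=\widehat O(k+p/2,k-p/2)$ and integrating over $k\in\tor$, the orthogonality relation
\[\int_{\tor}\d k\,e^{-ik\cdot(x_\links-x_\rechts)}=(2\pi)^d\,\delta_{x_\links,x_\rechts}\]
collapses $\int_\tor\d k\,O_p(k)$ to the same diagonal sum, matching the trace up to the sign conventions fixed in Section~\ref{section2.1.1}. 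This first part is pure bookkeeping of Fourier conventions and presents no obstacle.

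For the analyticity claim, the key observation is that for real $p_1$ and any $p_2\in\R^d$, the shift property of the Fourier series yields
\[\bigl(\caJ_{(ip_2/2,\,ip_2/2)} O\bigr)_{p_1}(k) \;=\; \widehat O\!\left(k+\tfrac{p_1+ip_2}{2},\,k-\tfrac{p_1+ip_2}{2}\right),\]
which is precisely the natural extension of $O_p(k)=\widehat O(k+p/2,k-p/2)$ to complex $p=p_1+ip_2$. Setting $\theta\deq(ip_2,ip_2)$, so that $\theta/2=(ip_2/2,ip_2/2)$, the hypothesis $\|\caJ_{\theta/2}O\|_2<\infty$ for $|\theta|\leq\delta$ ensures $\caJ_{(ip_2/2,ip_2/2)}O\in\scrB_2(\scrH_\sys)$ for every $p_2$ with $|p_2|\leq\delta$ (using the sup-norm convention on $\C^{2d}$). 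By the fiber decomposition~\eqref{def: fiber decomposition}, its fibers at $p_1$ therefore lie in $\mathrm{L}^2(\tor)$ for a.e.\ $p_1\in\tor$, with total $\mathrm{L}^2$-mass bounded uniformly on compact subsets of the strip $|\im p|<\delta$.

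To upgrade this to $\mathrm{L}^2(\tor)$-valued analyticity of $p\mapsto O_p$, I would combine the locally uniform $\mathrm{L}^2$-bound above with termwise entirety of the defining series: each truncated Fourier sum in $\widehat O(k+p/2,k-p/2)$ is a trigonometric polynomial in $p$, hence entire, and Plancherel applied to the weighted kernel $e^{(p_2/2)\cdot(x_\links+x_\rechts)}O(x_\links,x_\rechts)$ gives convergence of these truncations in $\mathrm{L}^2(\tor)$ uniformly on compact subsets of the strip. Standard results on vector-valued analytic functions --- e.g.\ Morera's theorem applied to $p\mapsto\langle\varphi,O_p\rangle_{\mathrm{L}^2(\tor)}$ for arbitrary test functions $\varphi$ --- then yield analyticity of the limit. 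The only real technicality is tracking the two factors of $1/2$ (one from the midpoint/difference parametrization $(k_\links,k_\rechts)\mapsto(k,p)$, one from the $\theta/2$ in the hypothesis); the identification $\theta=(ip_2,ip_2)$ above shows that they match exactly to cover the full strip $|\im p|<\delta$. I do not anticipate any serious obstacle.
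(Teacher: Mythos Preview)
Your proposal is correct and follows the same route as the paper, which disposes of the first identity by ``singular-value decomposition for trace-class operators and standard properties of the Fourier transform'' and of the analyticity claim by citing the Paley--Wiener theorem (Reed--Simon~II). Your hands-on reconstruction of Paley--Wiener via the shift identity $(\caJ_{(ip_2/2,\,ip_2/2)}O)_{p_1}=O_{p_1+ip_2}$ and Morera is exactly the content of that citation; the only point to watch is that the paper's norm on $\bbC^{2d}$ is Euclidean rather than sup, which may cost a harmless factor in the width of the strip, and that the absolute convergence of $\sum_x O(x,x)e^{ip\cdot x}$ for $O\in\scrB_1$ is precisely what the SVD is invoked to justify.
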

(In the discussion above, the fiber operator $O_p$ is defined for a.a.\ $p$,  for $O \in \scrB_2(\scrH_{\mathrm{S}})$, but in the context of Lemma~\ref{lemma: fibers}, $O_p$ can be defined for arbitrary $p$.
The first statement of the lemma follows from the singular-value decomposition for trace-class operators and standard properties of the Fourier transform. The second
statement of Lemma~\ref{lemma: fibers} is the Paley-Wiener theorem, i.e., the relation between exponential
decay of functions and analyticity of their Fourier transforms; see~\cite{reedsimon2}.)

The fiber decomposition in Equation~\eqref{def: fiber decomposition} is useful
when one deals with operators $\caA$ acting on $\scrB_2(\scrH_\sys)$ that are
translation-invariant (TI), i.e., 
$\caT_z \caA \caT_{-z}= \caA$, with $\caT_z$ defined  as in Section~\ref{sec:
thermo}.   An important example of a TI operator $\caA$ is the
reduced time-evolution $\caZ_{[0,t]}$, see Lemma \ref{lem: thermodynamic dyn}. For TI operators $\caA$, we find that $(\caA O)_p$ depends on $O_p$ only, and
hence it makes sense to write
\begin{equation}   \label{def: fiber decomposition caa}
(\caA O)_p= \caA_p O_p\,, \qquad  \caA = \int_{\tor}^\oplus \dd p\,\caA_p\,.
\end{equation}

Similarly to Lemma~\ref{lemma: fibers} above, we find that, if $\caJ_{\theta/2}
\caA \caJ_{-\theta/2}$ is bounded for all
$\theta=(\theta_\links,\theta_\rechts)$, with $|\theta|\le \delta$, then the map
 $p \mapsto \caA_p$ is analytic in a strip $\bbV_\delta$.

\subsection{Identifying the lowest order contributions: $\caL_S+\mathcal{M}(z)$}\label{sec: calculation ladders}
We return to our the analysis of Green functions. 
Identifying the fiber spaces $\scrH_p$ with
$\mathrm{L}^2(\tor)$, we interpret~$(\caL_S)_p+\caM(z)_p$ as an operator acting on~$\mathrm{L}^2(\tor)$. 

First, we observe that
\begin{align}\label{lowest order free liouvillian}
 \lambda^{-2}(\caL_S)_{\lambda^2\kappa}=\ii\kappa\cdot\nabla\epsilon-\field\cdot\nabla+\caO(\lambda^2\kappa)\,,
\end{align}
in the limit $\ka\to0$, $\lambda\to 0$. 

Second, displaying the $\field$-dependence in $\caM(z)$ explicitly, a straightforward calculation yields, (see Section~6.2 in~\cite{paper1}),
\begin{align}\label{trick with ladders}
 \lambda^{-2}(\caM(z=0,\field=0)_{\lambda^2\kappa}f)(k)=(Gf)(k)+(Lf)(k)\,,\quad\quad f\in\mathrm{L}^2(\tor)\,,
\end{align}
where
\begin{align}
 (Gf)(k)\deq\int_{\tor}\dd k'\, r(k',k)f(k')\,,\quad\quad (Lf)(k)\deq-\int_{\tor}\dd k'\,r(k,k')f(k)\,,
\end{align}
with $r(\cdot,\cdot):\tor\times\tor\to \C$, the `rate function'
\begin{align}
 r(k,k')=\psi[\epsilon(k')-\epsilon(k)]\,,
\end{align}
where $\psi$ is the `spectral reservoir density' defined in~\eqref{definition of psi ohne hut}. Starting from Assumptions A and B, it is straightforward to verify that $r(\,\cdot\,,\,\cdot\,)$ is a real-analytic function in both variables, which is strictly positive almost everywhere for real arguments.

Hence, taking into account the contributions in~\eqref{lowest order free liouvillian}, we are led to consider the operator
\begin{align}
 M^{\kappa,\field}\deq\ii\kappa\cdot\nabla\epsilon-\field\cdot\nabla+G+L\,,
\end{align}
which is densely defined on $\mathrm{L}^2(\tor)$, with core~$C^{\infty}(\tor)$.

 The operator $M^{\kappa,\field}$ has the physical interpretation of a generator of a one-parameter (strongly continuous) semigroup on $\mathrm{L}^{2}(\tor)$, often referred to as linear Boltzmann evolution. A detailed analysis of the spectrum of~$M^{\kappa,\field}$ and of the associated evolution equation has been carried out in~\cite{paper1}. The next lemma captures the main results of Section 6 of~\cite{paper1}.
\begin{lemma}\label{kinetic lemma}
 There exist constants $k_{\kappa},k_{\field}>0$, such that for $|\kappa|\le k_{\kappa}$ and $|\field|\le k_{\field}$ the following holds.
\begin{itemize}
 \item[$i.$] 
The spectrum of the operator $M^{\kappa=0,\field}$, $|\field|\le k_{\field}$, satisfies
\begin{align}
 \sigma(M^{\kappa=0,\field})\subset\{0\}\cup\{ z\in\C\,:\, \re z\le -g_M(\field)\}\,,
\end{align}
for some $g_M(\field)>0$. Moreover, $0$ is a simple eigenvalue. The spectral projection associated with the eigenvalue $0$ is of the form $\ket{\zeta_M^{\field}}\bra 1$, where $\zeta_M^{\field}\in\mathrm{L}^2(\tor)$ is a strictly positive, smooth function on $\tor$, normalized such that $\langle 1,\zeta_M^{\field}\rangle_{\mathrm{L^2(\tor)}}=1$.
\item[$ii.$] The spectrum of the operator $M^{\kappa,\field}$, $|\kappa|\le k_{\kappa}$, $|\field|\le k_\field$, satisfies
\begin{align}
 \sigma(M^{\kappa,\field})\subset\{u_M(\kappa,\field)\}\cup\{ z\in\C\,:\, \re z\le -g_M(\kappa,\field)\}\,,
\end{align}
where $u_M(\kappa,\field)=\caO(\kappa)$ and $g_M(\kappa,\field)=g_M(\field)+\caO(\kappa)>0$. Moreover, $u_M(\kappa,\field)$ is a simple (isolated) eigenvalue, whose associated spectral projection, $P_M^{\ka,\field}$, can be written as $\ket{\zeta_M^{\kappa,\field}}\bra{\widetilde\zeta_M^{\kappa,\field}}$, with ${\zeta_M^{\kappa,\field}},{\widetilde\zeta_M^{\kappa,\field}}\in\mathrm{L}^2(\tor)$ two smooth functions on $\tor$, normalized such that $\langle \widetilde\zeta_M^{\kappa,\field},\zeta_M^{\kappa,\field}\rangle=1$.
\end{itemize}
\end{lemma}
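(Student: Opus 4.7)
The plan is to regard $M^{0,0}=G+L$ as the unperturbed operator on $\mathrm{L}^{2}(\tor)$ and build up the spectral picture in two stages: first turn on the external field $\field$ to obtain~(i), and then the fiber momentum $\ka$ to obtain~(ii). This ordering is natural because $G$ is a Hilbert--Schmidt integral operator (its kernel $r$ is real-analytic on the compact torus), $L$ is bounded multiplication, and $i\ka\cdot\nabla\epsilon$ is bounded multiplication; the only genuinely unbounded piece of $M^{\ka,\field}$ is the skew-adjoint transport term $-\field\cdot\nabla$, and it is its treatment that constitutes the main technical difficulty.

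For part~(i) at $\field=0$, the detailed balance $\e{\be\om}\psi(\om)=\psi(-\om)$ implies $r(k,k')\e{-\be\epsilon(k)}=r(k',k)\e{-\be\epsilon(k')}$, so $\zeta_{M}^{0}(k)\deq Z^{-1}\e{-\be\epsilon(k)}$ is a strictly positive right eigenvector at eigenvalue~$0$, while $(G+L)^*1=0$ (the collision operator is mass-preserving), so the constant function~$1$ is the corresponding left eigenvector. Since $r>0$ almost everywhere, the semigroup $\e{t(G+L)}$ is positivity-preserving and irreducible; a Perron--Frobenius argument yields simplicity of $0$, and the spectral gap $g_M(0)>0$ follows from compactness of~$G$ combined with the dissipation bound $\mathrm{Re}\langle f,(G+L)f\rangle_{\zeta_{M}^{0}}\le 0$ in the weighted space $\mathrm{L}^{2}(\tor,\zeta_M^{0}\dd k)$, with equality only on multiples of $\zeta_M^{0}$. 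To extend to $\field\neq 0$, I would treat $-\field\cdot\nabla$ as an analytic perturbation. The key analytic input is that $(z-(G+L))^{-1}$ maps $\mathrm{L}^{2}(\tor)$ into $C^{\infty}(\tor)$ for $z$ in the resolvent set of $G+L$ (bootstrap the resolvent identity using the smooth kernel of~$G$ and the smooth multiplier~$L$), so that $\nabla(z-(G+L))^{-1}$ is bounded. A second-resolvent argument around a contour encircling $0$ inside the gap of $G+L$ then shows that the simple eigenvalue persists for small $|\field|$, with $1$ remaining a left eigenvector because $\nabla 1=0$; the right eigenvector $\zeta_M^{\field}$ is obtained by perturbing $\zeta_M^{0}$, and its smoothness and strict positivity follow from the equation $(G+L)\zeta_M^{\field}=\field\cdot\nabla\zeta_M^{\field}$ by a regularity bootstrap plus continuity in $\field$.

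Part~(ii) then follows by bounded analytic perturbation. Because $\epsilon$ is analytic, $i\ka\cdot\nabla\epsilon$ is a bounded multiplication of norm $\caO(\ka)$, so $\{M^{\ka,\field}\}_\ka$ is an analytic family of type~(A) around $\ka=0$. Standard Kato perturbation theory applies to the isolated simple eigenvalue $0$ of $M^{0,\field}$: it perturbs to an analytic $u_M(\ka,\field)$ with first-order correction $\partial_\ka u_M|_{\ka=0}=i\,\langle 1,(\nabla\epsilon)\,\zeta_M^{\field}\rangle$, giving $u_M(\ka,\field)=\caO(\ka)$; the spectral projection $P_M^{\ka,\field}$ remains rank one and analytic, inheriting the factorized form $\ket{\zeta_M^{\ka,\field}}\bra{\widetilde\zeta_M^{\ka,\field}}$; and the gap varies continuously, $g_M(\ka,\field)=g_M(\field)+\caO(\ka)$.

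The main obstacle is the control of the unbounded, non-dissipative perturbation $-\field\cdot\nabla$ in the passage from $\field=0$ to small $|\field|$. The smoothing property of $(z-(G+L))^{-1}$ sketched above is the central ingredient; an alternative route---likely the one adopted in~\cite{paper1}---is to work in a space of real-analytic functions on $\tor$ (justified by Assumption~\ref{ass: analytic dispersion} and the analyticity of~$r$), in which $\nabla$ acts boundedly on suitable analytic norms and the whole perturbation becomes routine.
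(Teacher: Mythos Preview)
Your treatment of part~(ii) is correct and matches the paper: once part~(i) is in hand, $\ii\ka\cdot\nabla\epsilon$ is a bounded analytic perturbation and standard Kato theory applies.

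For part~(i), however, your primary route contains a genuine gap, and the paper's approach is different from both of your suggestions. The claim that $(z-(G+L))^{-1}$ maps $\mathrm{L}^{2}(\tor)$ into $C^{\infty}(\tor)$ (hence that $\nabla(z-(G+L))^{-1}$ is bounded) does not follow from the bootstrap you sketch. Writing $(z-G-L)f=g$ as $f=(z-L)^{-1}g+(z-L)^{-1}Gf$, the second term is indeed smooth, but the first is only as regular as $g$: $(z-L)^{-1}$ is multiplication by the smooth function $(z-\ell(k))^{-1}$, which preserves but does not improve Sobolev regularity. Iterating the resolvent identity does not help, since the leading $(z-L)^{-1}g$ term never disappears. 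Hence $\nabla(z-(G+L))^{-1}$ is unbounded on $\mathrm{L}^{2}$, and your $\field$-perturbation argument does not go through.

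The paper sidesteps this entirely by a different grouping: it takes $-\field\cdot\nabla+L$ as the unperturbed operator and $G$ as the perturbation. Since $L$ is multiplication by the strictly negative function $\ell(k)=-\int r(k,k')\,\dd k'$, one has $\sigma(L)\subset(-\infty,-g]$ for some $g>0$; and because $-\field\cdot\nabla$ is anti-self-adjoint, $\re\langle f,(-\field\cdot\nabla+L)f\rangle=\langle f,Lf\rangle\le -g\|f\|^{2}$, so $\sigma(-\field\cdot\nabla+L)\subset\{z:\re z\le -g\}$ \emph{uniformly in $\field$}. The operator $G$ is compact (Hilbert--Schmidt with analytic kernel), so Weyl's theorem confines the essential spectrum of $M^{0,\field}$ to the same half-plane, and only isolated finite-multiplicity eigenvalues can lie to its right. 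A Perron--Frobenius argument based on the strict positivity of $r$ then yields the simple eigenvalue~$0$ and the structure of its spectral projection. The point is that the unbounded, non-dissipative term is absorbed into the reference operator from the start, so no smoothing of resolvents and no smallness of $\field$ is needed for the gap; your alternative route via analytic function spaces is not what the paper does either.
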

In the following we will refer to $\zeta_M^{\kappa,\field}$ as the invariant state of $M^{\kappa,\field}$.

We refer to Section 6 of~\cite{paper1}, for a detailed proof of this lemma. One key ingredient is that the spectrum of the multiplication operator $L$ has a gap, as follows from the strict positivity of the rate function $r(\,\cdot,\,\cdot\,)$. Since $\field\cdot\nabla$ is anti-self adjoint, the operator $-\field\cdot\nabla+L$ has the same gap. Next, since the rate function is analytic in both variables, $G$ is a compact operator. Then Weyl's theorem on the stability of the essential spectrum and a Perron-Frobenius type argument, using again the positivity of the rate function, yield part~$i$. Part~$ii$ follows from analytic perturbation theory since $\ii\kappa\cdot(\nabla\epsilon)$ is a bounded operator.\\

Next, recalling the definition of $\mathcal{M}$, we obtain from~\eqref{lowest order free liouvillian} and~\eqref{trick with ladders}, for $\field=0$,
\begin{align}  
(\caL_\sys+\mathcal{M}(z=0,\field=0) )_{\lambda^2\kappa} =\lambda^2  M^{\kappa,\field=0} + \caO(\lambda^4\str \ka \str) \,,
\end{align}
as $\lambda\to 0$, $\kappa\to 0$. For $\field\not=0$, the situations is more subtle: In~\eqref{trick with ladders}, we have set $\field=0$. However, as it turns out, $\caM(z,\field)_p$ is {\it not} an analytic perturbation of $\caM(z,\field=0)_p$. To overcome this technical difficulty, we define an operator $\widetilde{M}^{\lambda,\kappa,\field}$ on $\mathrm{L}^2(\tor)$ by
\begin{align}\label{definition of M tilde}
\widetilde M^{\lambda,\kappa,\field}\deq \ii\kappa\cdot(\nabla\epsilon)-\field\cdot\nabla+\lambda^{-2}(\caM(z=0,\field))_{\lambda^2\kappa}\,,
\end{align}
with core $\caD=C^{\infty}(\tor)$, such that
\begin{align}  \label{eq: deltam rules}
(\caL_\sys+\mathcal{M}(z=0,\field) )_{\lambda^2\kappa} =
\lambda^2\widetilde M^{\lambda,\kappa,\field} + \caO(\lambda^4\str \ka \str) \,,
\end{align}
as $\lambda\to 0$, $\kappa\to 0$, holds.

In Section 6.2 of~\cite{paper1}, we have proven the following lemma, relating the spectrum of $\widetilde M\equiv\widetilde M^{\lambda,\kappa,\field}$ to the spectrum of $M\equiv M^{\kappa,\field}$ in a small neighborhood of zero. Recall the definition of the gap $g\kin(\field)$ in Lemma~\ref{kinetic lemma}, and define $B_r$ to be the disk $B_r \deq\left\{ z \in \bbC\,:\,\str z \str \leq r
\right\}$. 

\begin{lemma}\label{lem: spectrum of tilde m}
 There is a constant $r>0$, $r \propto g\kin(0)$, such that, inside the ball $B_r$,
the operators $M$ and $\widetilde M$  have unique simple
eigenvalues $\eig\kin\equiv\eig\kin(\ka,\field)$ and $\eig\kini\equiv
\eig\kini(\la,\ka,\field) $, respectively, with $\str
\eig\kini-\eig_{M^{\phantom{'}}}\str =\caO(\la^2)$.  Moreover, for $z \in B_r$, 
\begin{equation*}
\frac{1}{z-\widetilde M } = \frac{1}{z-u\kini} P\kini + \caO(z^0)\,.
\end{equation*}
\end{lemma}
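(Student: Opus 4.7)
The plan is to realise $\widetilde M$ as a small perturbation of $M$ in a resolvent sense and apply standard Riesz-projector/analytic perturbation theory to transfer the isolated eigenvalue $u_M$ of $M$ inside $B_r$ to an eigenvalue $u_{\widetilde M}$ of $\widetilde M$.

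\textbf{Step 1: Estimate $\widetilde M - M$.} By the definitions and the identity \eqref{trick with ladders}, one has the algebraic decomposition
\begin{equation*}
\widetilde M^{\la,\ka,\field} - M^{\ka,\field} = \bigl[\la^{-2}(\caM(0,\field))_{\la^{2}\ka} - \la^{-2}(\caM(0,\field))_{0}\bigr] + \bigl[\la^{-2}(\caM(0,\field))_{0} - \la^{-2}(\caM(0,0))_{0}\bigr].
\end{equation*}
The first bracket is controlled by differentiating the fiber decomposition in $p$ at $p=0$: each $p$-derivative inserts a factor $X$ inside $\caM$, which by the kernel bounds of Lemma~\ref{lem: bounds on diagrams} (combined with the analyticity in $\theta$ of $\caJ_\theta \caM \caJ_{-\theta}$ guaranteed by Lemma~\ref{lem: pseudoresolvent}) is bounded uniformly on a strip $\bbV_\delta$, so the difference is $\caO(\la^{2}|\ka|)$. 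For the second bracket, one uses \eqref{eq: smoothness of cav field}--\eqref{eq: bounds for derivative}: differentiation in $\field$ produces an extra prefactor $\la^2$ (see \eqref{eq: derivative as integral}), so $\partial_\field \caM = \caO(\la^4)$ and hence the second bracket is $\caO(\la^{2}|\field|)$. Together these give $\|\widetilde M - M\|_{L^2(\tor)} = \caO(\la^2)$, uniformly in $|\ka|,|\field|$ small.

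\textbf{Step 2: Perturbative spectrum transfer.} By Lemma~\ref{kinetic lemma}, $M^{\ka,\field}$ has a simple eigenvalue $u_M = \caO(\ka)$ and the rest of its spectrum lies at distance $\ge g_M(\field)+\caO(\ka)$. Choose $r$ proportional to $g_M(0)$ so that $\partial B_r$ is entirely in the resolvent set of $M$ and so that $u_M \in B_r$ is the only spectral point of $M$ inside $B_r$. Then $\sup_{z\in\partial B_r}\|(z-M)^{-1}\|$ is bounded independently of $\la$. Combined with Step~1, the Neumann series
\begin{equation*}
(z-\widetilde M)^{-1} = (z-M)^{-1}\sum_{n\ge 0}\bigl[(\widetilde M - M)(z-M)^{-1}\bigr]^{n}
\end{equation*}
converges uniformly on $\partial B_r$ for $\la$ small. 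Hence the Riesz projector
\begin{equation*}
P_{\widetilde M} \deq \frac{1}{2\pi\ii}\oint_{\partial B_r} (z-\widetilde M)^{-1}\, \dd z
\end{equation*}
is well defined, satisfies $\|P_{\widetilde M} - P_M\| = \caO(\la^2)$, and therefore has rank one. Thus $\widetilde M$ has a unique simple eigenvalue $u_{\widetilde M}$ inside $B_r$. From $u_{\widetilde M} P_{\widetilde M} = \widetilde M P_{\widetilde M}$ and the analogous identity for $M$, together with the norm estimates of Step~1, one reads off $|u_{\widetilde M} - u_M| = \caO(\la^2)$.

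\textbf{Step 3: Laurent expansion of the resolvent.} Because $P_{\widetilde M}$ is the spectral projection of the isolated simple eigenvalue $u_{\widetilde M}$, the standard decomposition
\begin{equation*}
(z-\widetilde M)^{-1} = \frac{P_{\widetilde M}}{z - u_{\widetilde M}} + R(z), \qquad z \in B_r,
\end{equation*}
holds with $R(z)$ analytic and uniformly bounded on $B_r$ (its norm is controlled by the Neumann series restricted to the complementary spectral subspace). This yields the claimed expansion with $\caO(z^0)$ remainder.

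\textbf{Main obstacle.} The delicate point is Step~1, because the warning immediately before \eqref{definition of M tilde} stresses that $\caM(z,\field)$ is \emph{not} analytic in $\field$; one must therefore use the $C^\infty$ smoothness guaranteed by~\eqref{eq: smoothness of cav field} and extract the quantitative $\la^2$-smallness from the explicit $\la^2$ factor in the $\field$-derivative formula~\eqref{eq: derivative as integral}, rather than from analytic perturbation in $\field$. Checking that the required derivatives survive the fiber decomposition and remain bounded on the space where $M$ has its gap (ultimately a suitable analytic/Hardy-type subspace of $L^2(\tor)$ corresponding to $\caJ_\theta$-analyticity) is where most of the technical effort sits.
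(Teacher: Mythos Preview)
The paper does not give a proof here; it simply refers to Lemma~6.3 of the companion paper~\cite{paper1}. Your strategy---exhibiting $\widetilde M - M$ as a bounded operator of size $\caO(\la^2)$ and then running Riesz-projector perturbation theory on a contour of radius $r \propto g_M(0)$---is the natural one and is almost certainly what is carried out in~\cite{paper1}. Steps~2 and~3 are textbook once Step~1 is in hand.

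One point in Step~1 deserves a sharper justification than you give. The bound~\eqref{eq: bounds for derivative} on $\partial_\field \caV$ is stated for the \emph{action on an exponentially localised} $\rho_\sys$, not as an operator-norm bound; the $X$-insertion is unbounded, and the polynomial factor $\str w_1\str$ in~\eqref{eq: bounds for derivative} is only controlled there by the exponential decay of $\rho_\sys$. To obtain a genuine $\caO(\la^4)$ bound on the fibered operator $(\partial_\field \caM)_{\la^2\ka}$ on $\mathrm{L}^2(\tor)$, you should observe that $\partial_\field \caM$ is translation-invariant, so its kernel depends only on $w-w'$. Evaluating the kernel at $w'=0$, the weight $\str w_1\str$ in the analogue of~\eqref{eq: bounds for derivative} is dominated by the exponential $\e{-\nu\str w_1\str}$ coming from the propagators, which yields $\caJ_\theta$-boundedness of $\partial_\field \caM$ for small $\theta$ and hence (by the discussion after~\eqref{def: fiber decomposition caa}) boundedness and analyticity in $p$ of its fibers, with norm $\caO(\la^4)$. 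This is morally what your reference to~\eqref{eq: smoothness of cav field} is pointing at, but it is worth making explicit since it is exactly the mechanism that resolves the ``main obstacle'' you flag: translation invariance, rather than mere $C^\infty$-smoothness in $\field$, is what converts the unbounded $X$-insertion into a bounded fiber operator.
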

For a proof we refer to Lemma~6.3 in~\cite{paper1}.

 This concludes our discussion on the lowest order contributions. In the following section, we explain how the higher order contribution can be controlled.

\section{Analysis of $\caR(z)$ around $z=0$} \label{sec: analysis of resolvent
around zero}\label{sec: perturbation
kinetic limit}

In this section, we show that the map $z\mapsto\caR(z)$, a priori defined for $\re
z>0$, can be analytically extended into the region $\{z\in\C\,:\, |z|< \lambda^2 r\}$, for some
$r>0$ and $\la>0$ sufficiently small. This is accomplished by applying
perturbation theory to the (fibers of the) operators  $\caR(z)$. The guiding
idea is that  $(\caR(z))_{\la^2 \ka}$ is a small perturbation of $(z-\lambda^2
\widetilde{M}^{\lambda,\ka,\field})^{-1}$. The small parameters will be the coupling constant $\la$, the (rescaled) fiber
momentum $\ka$ and the field $\field$.
 
All of these three parameters are assumed
to be {\it sufficiently small} throughout, and we do not repeat this at every step.

In Lemma~7.4 of~\cite{paper1} we have shown that the map $z\mapsto
(\caR(z))_{\la^2\ka}$ has a unique simple pole in a neighborhood of $z=0$, whose
residue, $P\equiv P^{\lambda,\ka,\field}$ is a rank-one operator with the property that, in the fiber
indexed by $\kappa=0$,
\begin{align}
 P^{\la,\ka=0,\field}=\ket\zeta\bra1\,,\qquad\textrm{ with
}\qquad\|\zeta-\zeta_M\|_{\mathrm{L}^2(\tor)}=\caO(\lambda^2)\,,
\end{align}
where $\zeta_M\equiv\zeta_M^{\kappa,\field}$ is the invariant state of the generator $M\equiv M^{\kappa,\field} $; see Lemma~\ref{kinetic lemma}. In Lemma~\ref{lem: prop for asymptotic}, we establish that $P$ and $\zeta$ are regular function of $\field$.

To start with, we define an operator acting on $\mathrm{L}^2(\tor)$:
\begin{align}\label{eq: definition of S}
 S \equiv &   S(z,\field,\la, \ka) \deq  (\caL_\sys+ \caM(z) +
\caR_{\mathrm{ex}}(z))_{\la^2 \ka}\,,
\end{align}
such that $(\mathcal{R}(z))_{\la^2 \ka}= (z-S)^{-1}$ whenever the left side
is well-defined. Note that $S$ is a closed operator on~$\mathrm{L}^2(\tor)$:
It is bounded except for the term $ \field\cdot \nabla$  that comes from
$\caL_\sys$. 

For simplicity, we often abbreviate $S(z,\la,\ka,\field)$ by, e.g., $S(z)$, when
considering the function \mbox{$z\mapsto S(z)$}, with the other variables kept
fixed. We use similar shorthand notations for $u_M\equiv
u_M(\kappa,\field)$, $P\equiv P^{\lambda,\kappa,\field}$, etc.\ in this
and the remaining sections.

Recalling~\eqref{definition of M tilde}, we write
\begin{align}
 S(z)=\lambda^2 \widetilde{M}+(\caM(z,\field)-\caM(0,\field))_{\lambda^2\kappa}+(\caR(z))_{\lambda^2\kappa}\,.
\end{align}
From the definition of $\caM(z,\field)$ in~\eqref{eq:6.2900}, we infer, using~\eqref{eq: bound on cavcas norm}, that
\begin{align}
 \|(\caM(z,\field)-\caM(0,\field))_{\lambda^2\kappa}\|\le C\lambda^2 |z|\,,
\end{align}
as an operator on $\mathrm{L}^2(\tor)$. Moreover, from Lemma~\ref{lem: pseudoresolvent}, $\|(R_{\mathrm{ex}})_{\lambda^2\kappa}\|\le C\lambda^4$. Thus for $z\in B_{\lambda^2 r}$, $r>0$, sufficiently small, we have
\begin{align}
 S(z)=\lambda^2 \widetilde{M}+\caO(\lambda^4(1+|\kappa|)\,.
\end{align}

Recall the constants
$k_z$ and $k_\theta$ from Lemma~\ref{lem: pseudoresolvent}. 
\begin{lemma}\label{lem: prop for asymptotic}
Let $\caD \subset \mathrm{L}^2(\bbT^d)$ be the dense subspace of real-analytic
functions on $\bbT^d$.
\begin{itemize}
 \item[$i.$] $\caD$ is a core for $S$ and $S \caD \subset \caD$. For all $ z 
\in \bbC$ satisfying $\re z \geq -k_z $ and such that $ (z-S(z))^{-1}$ exists (i.e., as a bounded operator), we have $(z-S(z))^{-1} \caD \subset
\caD$. Moreover, the differences $S(z)-S(z=0)$ and $S(\ka)-S(\ka=0)$ are bounded
operators, and they are analytic in the variables $\ka,z$
 in the region
\mbox{$\mathrm{Re}z>-k_z$} and $|\kappa|<k_\theta$.
\item[$ii.$]  For $ f \in \caD$, the function $\field \mapsto
S(\field)f $ is $C^{\infty}$, and  all derivatives  are bounded in the variables $z,\la,\ka,\field$,  uniformly on
compacts. Moreover, all derivatives of $\field\mapsto S(\field)f$, $f\in\caD$, are in $\caD$.
\item[$iii.$] Fix $r>0$ sufficiently small, e.g., $r= g\kin(0)/4$. Then there is a unique
$z=z^{*}(\la,\ka,\field)$ in $B_{\la^2 r}$ such that $z-S(z)$ is not invertible,
i.e., such that $z \in \sigma(S(z))$. Denoting this unique $z^*$ by $u^{\lambda,\kappa,\field}$, we have that $u^{\lambda,\kappa,\field}$ is an isolated simple
eigenvalue of $S(u^{\lambda,\kappa,\field})$, and when considered as pole of the map $z\to S(z)$, the residue at $u^{\lambda,\kappa,\field}$, $P^{\lambda,\kappa,\field}$, is a rank-one operator.
\item[$iv.$] For $r>0$ sufficiently small,
\begin{align}
 \frac{1}{z-S(z)}=\frac{1}{z-u^{\lambda,\kappa,\field}}P^{\lambda,\kappa,\field}+R^{\lambda,\kappa,\field}(z)\,,
\end{align}
for $z\in B_{\lambda^2r}$, where $z\mapsto R^{\lambda,\kappa,\field}(z)$ is bounded analytic in $B_{\lambda^2r}$.
\item[$v.$] The pole $u\equiv u^{\lambda,\ka,\field}$ and the operators $P^{\ka}\equiv P^{\lambda,\kappa,\field},R^{\ka}(z)\equiv R^{\lambda,\kappa,\field}(z)$ are analytic in $\ka$ and
$\ga$. The pole $u$ is a $C^{\infty}$-function in $\field$, and, for $B=P^\kappa,R^\kappa(z)$, the function $\field
\mapsto B(\field)f$ is $C^{\infty}$, for any $f \in \caD$.
\end{itemize}
\end{lemma}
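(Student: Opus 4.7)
\medskip

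\noindent\textbf{Proof plan.} The strategy is to treat $S(z)$ as a perturbation of the rescaled kinetic operator $\la^2 \widetilde{M}^{\la,\ka,\field}$, using Lemma~\ref{lem: spectrum of tilde m} as the unperturbed input. Because $S$ itself depends on $z$, the eigenvalue equation will not be a standard linear spectral problem but an implicit scalar equation, which I propose to derive by a Feshbach--Schur reduction onto the rank-one range of the spectral projection of $\widetilde M$ associated with $u\kini$. The five parts are then obtained in the order (i)$\to$(ii)$\to$(iii)$\to$(iv)$\to$(v).

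For \emph{part (i)}, I would first unfold the fiber structure from Subsection~\ref{sec: fiber decomposition}. On the fiber indexed by $p=\la^2\ka$, $(\caL_\sys)_p$ acts as the multiplication operator $k\mapsto \ve(k+p/2)-\ve(k-p/2)$ together with the anti-self-adjoint first-order operator $-\ii\la^2\field\cdot\nabla_k$; both manifestly preserve $\caD$. The operators $\caM(z)$ and $\caR_{\mathrm{ex}}(z)$ carry integral kernels that are real-analytic in $(k,k')$: this follows from the expansion~\eqref{eq:6.200bis} of the weights $\zeta(\varpi,\pi)$, the analyticity of $\hat\psi$ on $\bbH_\be$ in Assumption~\ref{ass: exponential decay}, and the analyticity of $\varepsilon$ in Assumption~\ref{ass: analytic dispersion}. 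The uniform bounds $\norm\caJ_\theta \caM(z)\caJ_{-\theta}\norm=\caO(\la^2)$ and $\norm\caJ_\theta\caR_{\mathrm{ex}}(z)\caJ_{-\theta}\norm=\caO(\la^4)$ from Lemma~\ref{lem: pseudoresolvent} then translate, via the Paley--Wiener statement in Lemma~\ref{lemma: fibers}, into analyticity of the fiber kernel in a strip; hence integration against a real-analytic $f$ produces a real-analytic output, giving $S\caD\subset\caD$. The same Cauchy-type bounds yield analyticity of $S(z)-S(0)$ and $S(\ka)-S(0)$ in the stated region, and the corresponding bounds show that $\caD$ is a core and is preserved by any resolvent that exists.

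For \emph{part (ii)}, I would differentiate in $\field$ under the Laplace and $\dd\varpi$ integrals, using the identity~\eqref{eq: examplebound1} together with the bound~\eqref{eq: bounds for derivative} and its higher-derivative generalization noted in Subsection~\ref{sec: bounds unbounded correlations}. Each $\field$-derivative produces a factor of $\la^2$ and an extra time integration of an $X_\links$-type insertion, yielding kernels that, while no longer bounded in operator norm on $\scrB_2(\scrH_\sys)$, remain kernels of integral operators with exponential decay weighted by polynomial factors. On the fiber side, $X$ becomes $\ii\nabla_k$, which preserves $\caD$; combining this with the uniform-in-$z,\la,\ka,\field$ decay estimates gives both the $C^\infty$-regularity in $\field$ and the claim that each derivative maps $\caD$ to $\caD$ with bounds uniform on compact sets.

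For \emph{parts (iii) and (iv)}, I would run the Feshbach--Schur method. Let $P\kini,\bar P\kini=1-P\kini$ be the spectral projections of $\widetilde M$ from Lemma~\ref{lem: spectrum of tilde m}; $P\kini$ is rank one. The expansion~\eqref{eq: deltam rules} together with the Laplace-bound control $\|\caM(z,\field)-\caM(0,\field)\|=\caO(\la^2|z|)$ derived from~\eqref{eq: bound on cavcas norm} gives $S(z)=\la^2\widetilde M + E(z)$ with $\|E(z)\|=\caO(\la^4)$ uniformly for $z\in B_{\la^2 r}$ and $r\le g\kin(0)/4$. By Lemma~\ref{lem: spectrum of tilde m}, the reduced resolvent $\bar P\kini(z-\la^2\widetilde M)^{-1}\bar P\kini$ is bounded by $C\la^{-2}$ on $B_{\la^2 r}$, so $\bar P\kini(z-S(z))\bar P\kini$ is invertible by Neumann series, with inverse bounded by $C\la^{-2}$. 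The Feshbach effective operator
\begin{equation*}
F(z)\deq P\kini(z-S(z))P\kini - P\kini S(z)\bar P\kini\bigl[\bar P\kini(z-S(z))\bar P\kini\bigr]^{-1}\bar P\kini S(z)P\kini
\end{equation*}
is then scalar-valued on $\Ran P\kini$, analytic in $z$, and equals $(z-\la^2 u\kini)P\kini+\caO(\la^4)$. The implicit function theorem (or a contraction argument on the scalar equation $z=\la^2 u\kini+\caO(\la^4)$) yields a unique zero $u=u^{\la,\ka,\field}\in B_{\la^2 r}$, which by the standard Feshbach equivalence coincides with the unique value of $z$ where $z-S(z)$ fails to be invertible. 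The Riesz projection
\begin{equation*}
P^{\la,\ka,\field}=-\frac{1}{2\pi\ii}\oint_\gamma(z-S(z))^{-1}\dd z\,,
\end{equation*}
around a small circle $\gamma\subset B_{\la^2 r}$ enclosing $u$, inherits rank one from the scalar nature of $F$, and subtracting the resulting simple-pole singular part from $(z-S(z))^{-1}$ yields the bounded analytic remainder $R^{\la,\ka,\field}(z)$ on $B_{\la^2 r}$.

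For \emph{part (v)}, regularity of $u$, $P$, $R(z)$ in $\ka$ follows from analyticity of $S(z)$ in $\ka$ (part (i)) by differentiating the Riesz formula above under the contour integral; the same formula gives the $C^\infty$-dependence on $\field$ once one combines it with the estimates from part (ii), noting that the integrand acts on $\caD$ and that all $\field$-derivatives remain uniformly bounded on compacts in the parameters. \emph{The main obstacle} I foresee is the $z$-dependence of $S$: Feshbach reduction produces not an eigenvalue equation for a single operator but a self-consistent scalar equation $F(z)=0$, and one must track the perturbation $E(z)=S(z)-\la^2\widetilde M$ uniformly in $z$ across $B_{\la^2 r}$ while ensuring the bound $\caO(\la^4)$ does not degrade. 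This is where the analyticity in $z$ from part (i) and the sharp Laplace estimates of Lemma~\ref{lem: pseudoresolvent} are essential; once they are in hand, the rest is a routine application of analytic perturbation theory and Cauchy's formula.
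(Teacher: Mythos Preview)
Your proposal is correct and, where the paper actually gives arguments (parts~(ii) and the $\field$-regularity in~(v)), you follow essentially the same route: differentiation via the identity~\eqref{eq: examplebound1} and the bound~\eqref{eq: bounds for derivative} for~(ii), and the contour representation~\eqref{eq: contour rep} combined with the resolvent identity for~(v). The paper defers parts~(i), (iii), (iv) and the $\ka$-analyticity in~(v) to Section~7 of~\cite{paper1}, so a direct comparison is not possible from this text alone; your Feshbach--Schur reduction onto $\Ran P\kini$ is a natural and correct implementation of the perturbative picture the paper sets up just before the lemma (namely $S(z)=\la^2\widetilde M+\caO(\la^4)$ together with Lemma~\ref{lem: spectrum of tilde m}). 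One minor slip: with the convention $(z-S(z))^{-1}$ and a positively oriented contour, the Riesz projection carries a $+\tfrac{1}{2\pi\ii}$, not a minus sign, as in the paper's~\eqref{eq: contour rep}.
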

\begin{proof}
Statements $i$, $iii$, $iv$ and the statements about analyticity in $\kappa$ in item $iv$ have been proven in Section~7 of~\cite{paper1}. Here, we only give the proofs of statement $ii$ and of the claim on regularity in $\field$ in~$v$.

Item $ii$ is a consequence of the discussion in
Subsection~\ref{sec: bounds unbounded correlations}. In particular, the first
derivative is constructed by restricting \eqref{eq: bounds for derivative} to a
fiber (take $n=2$ to get $\caR_{\mathrm{ex}}$ and $n=1$ to get $\caM$,  for
$\caL_\sys$, the differentiability is explicit).  Higher derivatives are dealt
with analogously; see the remark after \eqref{eq: bounds for derivative}. 

To prove the regularity claim in $iv$, we first note that residue and pole of the map $z\mapsto (z-S(z))^{-1}$, can be expressed as contour integrals, 
\begin{equation} \label{eq: contour rep}
P =  \frac{1}{2 \pi \ii}    \int_\caC \d z \,\frac{1}{z-S(z)}, \qquad  \eig   P=
\frac{1}{2 \pi \ii}     \int_\caC \d z  \,\frac{z}{z-S(z)}\,,
\end{equation}
with the (positively oriented) contour $\caC\deq\{z\in\C\,:\,|z|= \lambda^2r/2\}$. 
 We claim that
\begin{equation}\label{asymptotic 1}
 \left(\frac{1}{z-S(z,\field')}-  \frac{1}{z-S(z,\field)}\right)f =   
\frac{1}{z-S(z,\field')} (S(z,\field')- S(z,\field))  
\frac{1}{z-S(z,\field)}f\,,\quad\quad f\in\caD\,.
\end{equation}
By statement $i$, both sides are well defined for $z\in \caC$ and the equality is checked by multiplying both sides with the invertible operator $z-S(z,\field')$. Since $(z-S(z,\field'))^{-1}$ is uniformly bounded for all $z\in\caC$, and since $(z-S(z,\field))^{-1}f$ is in $\caD$, it follows from $i$ that the right side of~\eqref{asymptotic 1} converges to~$0$, as $\field'\to\field$, for $f\in\caD$ (and by density this holds for all $f\in\mathrm{L}^2(\tor)$). Thus, forming the difference quotient, we obtain
\begin{align}\label{asymptotic 2}
\frac{\partial}{\partial\field}\frac{1}{z-S(z,\field)}f=\frac{1}{z-S(z,\field)}\left(\frac{\partial}{\partial\field}S(z,\field)\right)\frac{1}{z-S(z,\field)}f\,,\quad\quad f\in\caD\,,
\end{align}
 where we used that $S(z,\field)f$, $f\in \caD$, is differentiable in $\field$, with derivative uniformly bounded for $z\in\caC$; see item~$ii$. We also claim that the right hand side defines a function in $\caD$. This follows from $(z-S(z,\field))^{-1}\caD\subset \caD$, see~$i$, and $\frac{\partial}{\partial\field}S\caD\subset \caD$, see~$ii$. Thus the right side of~\eqref{asymptotic 2} is a function in $\caD$, which is uniformly bounded on compacts in the variables $\lambda,\kappa,\field$ and uniformly bounded for all $z\in\caC$. Thus the above procedure can be iterated, and we infer that $(z-S(z,\field))^{-1}f$, $f\in\caD$, is a $C^{\infty}$-function in the variable $\field$, whose derivatives are uniformly bounded on compacts in the variables $\lambda,\kappa,\field$ and for all $z\in\caC$. 

The identities in~\eqref{eq: contour rep} then immediately lead to the conclusion that $\eig$, $Pf$, $R(z)f$, for $f\in\caD$, are $C^{\infty}$-functions of $\field$. We refer to~\cite{katoperturbation} for a more detailed treatment of asymptotic perturbation theory.

\end{proof}
 Replacing $S(z)$ by the Boltzmann generator $M^{\kappa,\field}$, the same proof also shows that $u_M^{\kappa,\field}$, $\zeta_M^{\kappa,\field}$, in Lemma~\ref{kinetic lemma}, are $C^{\infty}$-functions of $\field$.
\newline

Next, we recall Lemma~\ref{lem: spectrum of tilde m}, to compare the eigenvalue $u^{\lambda,\kappa,\field}$, the rank-one operator $P^{\lambda,\kappa,\field}$, with the corresponding quantities of the operator $M^{\kappa,\field}$, from Section~\eqref{sec: calculation ladders}. Combination of  Lemma~\ref{lem: spectrum of tilde m} with Lemma~\ref{lem: prop for asymptotic} yields:

\begin{lemma}\label{cor:laurent} For $\kappa=0$, the residue at $z=0$, $P^{\kappa=0}$, can be written as $P^{\kappa=0} = \str \zeta^{\lambda,\kappa=0,\field} \rangle \langle 1
\str$, with $\zeta\equiv \zeta^{\la,\ka, \field}$ a
real-analytic function on $\tor$ satisfying
$$  \norm \zeta -\zeta_M \norm =\caO(\lambda^2)\,,$$
where $\zeta_M\equiv \zeta_M^{\ka,\field}$ is the invariant state of $M\equiv M^{\kappa,\field}$. 
For $\ka=0$, $\zeta$ is a probability density on $\tor$.
The function $\eig=\eig(\la,\ka,\field) \in \bbC $ satisfies $$\overline{
\eig(\ka)}=\eig(-\bar\ka)\,, \qquad |\eig-\la^2 \eig\kin |=\caO(\lambda^2)\,. $$
Moreover, we have that
\begin{align}\label{eq: eigenvalue is zero}
 \eig(\kappa=0)=0\,,\qquad P^{\ka=0} R^{\ka=0}(z)=0\,.
\end{align}

\end{lemma}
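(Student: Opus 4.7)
\textbf{Vanishing of $u$ at $\ka=0$ and $\bra 1$-structure of $P^{\ka=0}$.} The key input is the trace-preserving property of $\caZ_{[0,t]}$ established in Lemma~\ref{lem: thermodynamic dyn}. In the fiber notation of Section~\ref{sec: fiber decomposition}, the trace of $\rho \in \scrB_1(\scrH_\sys)$ equals $\langle 1, (\rho)_{p=0}\rangle_{\mathrm{L}^2(\tor)}$, so trace preservation reads $\langle 1| (\caZ_{[0,t]})_{p=0} = \langle 1|$ for all $t \geq 0$. Laplace transforming, $\langle 1|(z - S(z,\ka=0))^{-1} = z^{-1}\langle 1|$, i.e.\ $\langle 1|\,S(z,\ka=0) = 0$ for every $z$ in the domain of validity of Lemma~\ref{lem: pseudoresolvent}. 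Hence $0 \in \sigma(S(z,\ka=0))$ and the uniqueness of $u$ inside $B_{\la^2 r}$ (Lemma~\ref{lem: prop for asymptotic}(iii)) yields $u(\la,\ka=0,\field) = 0$. The residue $P^{\ka=0}$ is rank-one by the same lemma, and the just-established left-null-vector property forces $P^{\ka=0} = \ket{\zeta}\bra{1}$ with $\zeta \in \mathrm{L}^2(\tor)$. The identity $P^{\ka=0}R^{\ka=0}(z) = 0$ is the built-in orthogonality of the singular and regular parts of the Laurent decomposition in Lemma~\ref{lem: prop for asymptotic}(iv).

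\textbf{Real-analyticity and leading-order expansion.} Note that $\zeta = P^{\ka=0} f$ for any $f \in \mathrm{L}^2(\tor)$ with $\langle 1,f\rangle = 1$, in particular for the constant function $1 \in \caD$. Real-analyticity of $\zeta$ then follows from Lemma~\ref{lem: prop for asymptotic}(i) via the contour formula $P^{\ka=0} = (2\pi\iu)^{-1}\oint_{\caC}(z - S(z,\ka=0))^{-1}\,\d z$, since $(z-S(z,\ka=0))^{-1}$ preserves $\caD$. For the $\mathrm{L}^2$-estimate, I combine~\eqref{eq: deltam rules} with the bound $\|(\caR_\mathrm{ex}(0))_{p=0}\| = \caO(\la^4)$ from Lemma~\ref{lem: pseudoresolvent} to get $\la^{-2}S(0,\ka=0) = \widetilde M^{\la,0,\field} + \caO(\la^2)$. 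Both operators have $0$ as a simple isolated eigenvalue with left eigenvector $\langle 1|$ (by Lemmas~\ref{kinetic lemma},~\ref{lem: spectrum of tilde m} and the previous paragraph), and ordinary analytic perturbation theory on a fixed contour of radius $r/2$ gives $\|\zeta - \zeta_M\|_{\mathrm{L}^2(\tor)} = \caO(\la^2)$, together with $|u - \la^2 u_{\widetilde M}| = \caO(\la^4)$; combined with $|u_{\widetilde M} - u_M| = \caO(\la^2)$ from Lemma~\ref{lem: spectrum of tilde m} this yields the claimed $|u - \la^2 u_M| = \caO(\la^2)$.

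\textbf{Probability density and symmetry in $\ka$.} Positivity of $\zeta$ is invisible to the $\mathrm{L}^2$-perturbation argument, so I invoke the complete positivity of the finite-time dynamics together with the Abelian ergodic representation of the residue provided by Lemma~\ref{lem: prop for asymptotic}(iv):
\[
P^{\ka=0} f = \lim_{\varepsilon \downarrow 0}\varepsilon\int_0^\infty \e{-\varepsilon t}(\caZ_{[0,t]})_{p=0} f\,\d t,
\]
for any $f$ of the form $(\rho_\sys)_{p=0}$ with $\rho_\sys$ exponentially localized. Taking $\rho_\sys = \ket{0}\bra{0}$, so that $f \equiv (2\pi)^{-d}$, each $(\caZ_{[0,t]})_{p=0}f$ is the momentum-diagonal of the positive unit-trace operator $\caZ_{[0,t]}\rho_\sys$; hence it is a nonnegative function on $\tor$ with integral~$1$. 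Both properties pass to the Abelian limit, so $\zeta$ is a probability density. Finally, $\overline{u(\ka)} = u(-\bar\ka)$ follows from the preservation of Hermiticity by $\caZ_{[0,t]}$: in the fiber picture, Hermitian conjugation is implemented by the antilinear involution $\Sigma\colon f(k) \mapsto \overline{f(-k)}$ that sends the $p$-fiber to the $-p$-fiber, so that $\overline{S(\bar z,\ka)} = \Sigma\,S(z,-\bar\ka)\,\Sigma$ on $\caD$ after analytic continuation in $\ka$; applying $\sigma(S(\bar z,\ka)) = \overline{\sigma(S(z,-\bar\ka))}$ to the unique pole inside $B_{\la^2 r}$ gives the symmetry.

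\textbf{Main obstacle.} The only non-perturbative ingredient is the positivity of $\zeta$: the $\mathrm{L}^2$-closeness estimate is blind to signs, and one has to bring in complete positivity of the effective dynamics together with the Abelian ergodic realization of the residue. Every other assertion reduces to bookkeeping of perturbation-theoretic material already assembled in Sections~5 and~6.
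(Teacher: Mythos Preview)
Your proof is correct and considerably more detailed than the paper's own treatment, which simply asserts that the lemma follows by combining Lemma~\ref{lem: spectrum of tilde m} with Lemma~\ref{lem: prop for asymptotic}. The mechanisms you isolate---trace preservation for $u(\ka=0)=0$ and the $\bra{1}$-structure of $P^{\ka=0}$, complete positivity together with the Abelian representation of the residue for positivity of $\zeta$, Hermiticity preservation for the $\ka$-symmetry of $u$---are exactly the intended ones.

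A few small points are worth tightening. First, the antilinear involution implementing Hermitian conjugation in the fiber picture is $f(k)\mapsto\overline{f(k)}$ together with $p\mapsto -p$, not $f(k)\mapsto\overline{f(-k)}$: from $\overline{\hat\rho(k_\links,k_\rechts)}=\hat\rho(k_\rechts,k_\links)$ one reads off $\overline{\rho_p(k)}=\rho_{-p}(k)$. Your conclusion is unaffected. Second, the identity $P^{\ka=0}R^{\ka=0}(z)=0$ is \emph{not} ``built in'' for a $z$-dependent $S(z)$ (simple $2\times 2$ counterexamples exist); rather, it follows immediately from your own computation $\langle 1|(z-S(z,\ka=0))^{-1}=z^{-1}\langle 1|$, which gives $\langle 1|R^{\ka=0}(z)=0$, combined with $P^{\ka=0}=|\zeta\rangle\langle 1|$. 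Third, your perturbative step actually yields $\|\zeta-\zeta_{\widetilde M}\|=\caO(\la^2)$; the remaining comparison $\|\zeta_{\widetilde M}-\zeta_M\|=\caO(\la^2)$ for $\field\neq 0$ is part of the content of Lemma~\ref{lem: spectrum of tilde m} (whose proof is delegated to \cite{paper1}) and does not reduce to ordinary bounded perturbation theory, since $\widetilde M-M$ is not a bounded operator when $\field\neq 0$.
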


\section{The equilibrium regime $\field=0$}\label{sec: equilibrium case} 
In this section, we discuss properties of the equilibrium correlation functions. We will often need restrictions of operators acting on $\scrB_2(\scrH_\sys)$ to the
fiber space indexed by $\kappa=0$. We indicate these restrictions by writing
$(\caA)_0$, for $\caA\in\scrB(\scrB_2(\scrH_{\sys}))$, below; in particular,
$(\caA)_0$ acts on $\mathrm{L}^2(\tor)\simeq\scrH_{\kappa=0}$. Whenever we use
such fiber restrictions of operators pointwisely, i.e., for a given fiber
indexed by $\ka=0$, this can be justified by Lemma~\ref{lemma: fibers} because
all operators are quasi-diagonal on $\scrB_2(\scrH_\sys)$, and we will omit these
justifications.

Recall the results of Theorem~\ref{thm: stationary}: Statement $i$, for $\field\not=0$, is proven in~\cite{paper1}; statement $ii$, for $\field=0$, has been proven in~\cite{deroeckfrohlichpizzo}. Note that the statement for $\field=0$ is stronger. In the notation of Section~\ref{sec: calculation ladders}, this follows from the observation that $\widetilde{M}=M$, for $\field=0$. As argued in~\cite{paper1,deroeckfrohlichpizzo}, the function $z \mapsto (\caR(z))_{\la^2 \ka}$ consequently has only one
pole, namely $\eig(\la,\ka,\field=0)$, in the region  $ \re z > -\la^2g\kin(\ka,
\field=0) +\caO(\la^4)$.  Then, the pole $\eig(\la,\ka,\field=0)$ determines the long-time
properties.  By the inverse Laplace transform, one then proves the following theorem.
\begin{theorem}\emph{[Equilibrium asymptotics]} \label{thm: equilibrium rte}
Take $\field =0$. 
Then, for $0<\la$ and $\ka$ sufficiently small, there is  $g >0$ such that
\begin{align*}
\left\|(\mathcal{Z}_{[0,t]})_{\lambda^2\kappa}-\e{ \eig(\ka) t}P^\ka\right\|=\caO\big(\e{-
g\la^2  t} \big)\,, \qquad t \to \infty\,,
\end{align*}
as operators on $\mathrm{L}^2(\tor)$.
\end{theorem}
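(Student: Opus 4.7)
The plan is to recover $(\caZ_{[0,t]})_{\la^2\ka}$ from its Laplace transform $(\caR(z))_{\la^2\ka}=(z-S(z))^{-1}$ by Bromwich inversion, and then to shift the contour past the isolated pole at $z=u(\ka)$. The key ingredient is that when $\field=0$ a true spectral gap is available on the whole strip, not just inside $B_{\la^2 r}$.

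The first step is to pin down the spectral picture. When $\field=0$, the operator $\widetilde M^{\la,\ka,\field=0}$ of \eqref{definition of M tilde} coincides with the Boltzmann generator $M=M^{\ka,0}$ of Lemma~\ref{kinetic lemma}, whose spectrum is $\{u_M(\ka,0)\}\cup\{\re z\le -g_M(\ka,0)\}$ with a strict gap. Combining Lemma~\ref{lem: prop for asymptotic}(iv), which yields the Laurent expansion of $(z-S(z))^{-1}$ inside $B_{\la^2 r}$, with the bounds of Lemma~\ref{lem: pseudoresolvent} on $\caM(z)$ and $\caR_{\mathrm{ex}}(z)$ in $\{\re z>-k_z\}$, I would prove that $z\mapsto (z-S(z))^{-1}$ extends meromorphically to a strip $\{\re z>-g\la^2\}$, for some $g$ slightly less than $g_M(0,0)$, with a single pole of rank-one residue $P^{\ka}$ at $z=u(\ka)$. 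Inside $B_{\la^2 r}$ this is immediate from Lemma~\ref{lem: prop for asymptotic}; outside the disc it has to be established by a Perron--Frobenius-type argument in the spirit of Lemma~\ref{kinetic lemma}, exploiting the strict positivity of the rate function $r(k,k')$ to produce a $\la^2$-sized dissipative gap.

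The second step is the contour deformation. For $t>0$ and $c>0$, the uniform bound of Section~\ref{sec: bounds on effective dynamics} on $\caZ_{[0,t]}$ justifies Bromwich inversion
\begin{equation*}
(\caZ_{[0,t]})_{\la^2\ka}=\lim_{T\to\infty}\frac{1}{2\pi\ii}\int_{c-\ii T}^{c+\ii T}\e{zt}\,(\caR(z))_{\la^2\ka}\,dz
\end{equation*}
(in the strong operator topology). Deforming the vertical line to $\Gamma=\{\re z=-g\la^2\}$ picks up the pole at $u(\ka)$, giving
\begin{equation*}
(\caZ_{[0,t]})_{\la^2\ka}=\e{u(\ka)t}\,P^{\ka}+\frac{1}{2\pi\ii}\int_{\Gamma}\e{zt}\,(\caR(z))_{\la^2\ka}\,dz,
\end{equation*}
and since $|\e{zt}|=\e{-g\la^2 t}$ on $\Gamma$, the remainder is bounded in operator norm by $\e{-g\la^2 t}$ times $\int_{\Gamma}\|(\caR(z))_{\la^2\ka}\|\,|dz|$. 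Convergence at $|\im z|\to\infty$ along $\Gamma$ follows from Riemann--Lebesgue applied to $\caM(z)$ and $\caR_{\mathrm{ex}}(z)$ (whose kernels decay exponentially in time by Assumption~\ref{ass: exponential decay}) together with $\|(z-(\caL_\sys)_{\la^2\ka})^{-1}\|=\caO(1/|\im z|)$, since $(\caL_\sys)_{\la^2\ka}$ acts as multiplication by the bounded real function $\ve(k+\la^2\ka/2)-\ve(k-\la^2\ka/2)$. If absolute convergence at infinity turns out to be too marginal, one can first invert $\int_0^t \caZ_{[0,s]}\,ds$, whose transform $(\caR(z)/z)_{\la^2\ka}$ decays faster, and differentiate.

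The main obstacle is the uniform control of $(z-S(z))^{-1}$ along $\Gamma$ for moderate $|\im z|$. A naive Neumann series in the $\caO(\la^2)$ perturbation $\caM(z)+\caR_{\mathrm{ex}}(z)$ around $z-(\caL_\sys)_{\la^2\ka}$ fails, because the Liouvillian's spectrum is real, bounded, and crosses zero, so that $(z-(\caL_\sys)_{\la^2\ka})^{-1}$ has norm of order $\la^{-2}$ near the real axis. Overcoming this requires the dissipative content of $\caM(z)$ itself, namely the $\la^2$-sized kinetic gap of $M$ captured in Lemma~\ref{kinetic lemma}, and amounts to propagating the local analysis of Lemma~\ref{lem: spectrum of tilde m} from a neighborhood of $z=0$ to the whole strip. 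Once this estimate is in place, the $\caO(\e{-g\la^2 t})$ bound follows.
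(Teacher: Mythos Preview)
Your approach---Bromwich inversion of $(\caR(z))_{\la^2\ka}$ followed by a contour shift past the unique simple pole at $u(\ka)$---is correct and is exactly what the paper sketches (the details being deferred to Theorem~4.5 of~\cite{deroeckfrohlichpizzo}).

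One remark on your ``main obstacle'': you slightly overestimate its difficulty. When $\field=0$ the operator $S(z)$ is \emph{bounded}, and in fact $\|S(z)\|=\caO(\la^2)$ uniformly on $\{\re z>-k_z\}$, since $(\caL_\sys)_{\la^2\ka}$ is multiplication by a real function of size $\caO(\la^2|\ka|)$ and Lemma~\ref{lem: pseudoresolvent} bounds $\caM,\caR_{\mathrm{ex}}$. Hence for $|z|\geq C\la^2$ the Neumann series $(z-S(z))^{-1}=z^{-1}\sum_{n\geq 0}(S(z)/z)^n$ converges and yields $\|(z-S(z))^{-1}\|\leq 2/|z|$; no spectral input is needed there. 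Only the compact region $\{|z|\leq C\la^2,\ \re z>-\la^2 g\}$ requires work, and on it one has $S(z)-\la^2 M^{\ka,0}=\caO(\la^4)$ (using $\widetilde M=M$ at $\field=0$, the bound $\|\caM(z)-\caM(0)\|\leq C\la^2|z|=\caO(\la^4)$ obtained from analyticity of $\caM$, and $\|\caR_{\mathrm{ex}}\|=\caO(\la^4)$). Ordinary analytic perturbation theory around $\la^2 M^{\ka,0}$, together with the resolvent bounds implied by the gap in Lemma~\ref{kinetic lemma}, then gives the unique pole; no fresh Perron--Frobenius argument is needed. The $2/|z|$ decay along the shifted contour is only logarithmically divergent, so your integrate-then-differentiate workaround (or any mild regularization) is indeed needed to close the estimate.
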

Note that $g$ can be chosen to be given by $g= g\kin(0)/5$. Also recall that
$u(\kappa=0)=0$, by Lemma~\ref{cor:laurent}. For the proof, we refer to
Theorem~4.5 of~\cite{deroeckfrohlichpizzo}.

\subsection{Correlation functions}
In this subsection, we discuss properties of the {\it equilibrium correlation
functions} and we prove Theorem~\ref{thm:correlationfunctions}. Below we choose
$\caS_i=(O)_{\vs}$, $\vs\in\{\links,\rechts\}$, with
$O\in{\mathop{\mathfrak{A}}\limits^{\circ}}_{\mathrm{ti}}$, i.e., $O$ is quasi-diagonal and translation-invariant. The extension of our results to
$\mathfrak{A}_{\mathrm{ti}}$, the norm closure of
${\mathop{\mathfrak{A}}\limits^{\circ}}_{\mathrm{ti}}$, follows by density. Recalling our discussion on Green functions in Section~\ref{negative1}, we may represent correlation functions as
\begin{align}
 \langle O_2(s_2)O_1(s_1)\rangle_{\rho_{\sys}\otimes\rho_{\referres}}&=\Tr\left[\caS_2 \caG_{[0, s_2]}(\caS^{s_1}_1) \rho_{\sys} \right]\,,\nonumber\\
 \langle O_2(s_2)O_1(s_1)\rangle_{\rho_\beta}&=\Tr\left[\caS_2 \caG_{[-\beta/2, s_2]}(\caS^{s_1}_1) \eta_\beta \right]\,,\label{representation of correlations}
\end{align}
for $s_2> s_1\ge 0$, $\caS_i=(O_i)_\links$ and $\rho_{\sys}$ an exponentially localized density matrix. We note, however, that the representation of correlation functions in~\eqref{representation of correlations} is not unique, since the value of
\begin{align}
\Tr_{\sys}[ \caG_{[0,t]}(\mathcal{S}^{s_1}_1,
\caS_2^{s_2})\rho_{\sys}]\qquad\textrm{ and }\qquad
\Tr_{\sys}[\mathcal{G}_{[-\beta/2,t]}          
(\mathcal{S}^{s_1}_1, \caS_2^{s_2})  \label{eq: to be considered}
\eta_{\beta}]\,,
\end{align}
with $t\ge s_2>s_1\ge 0$, is independent of $t$, as long as $t \geq s_2$, and we may equally well write
\begin{align}
 \langle O_2(s_2)O_1(s_1)\rangle_{\rho_{\sys}\otimes\rho_{\referres}}&=\Tr\left[\caG_{[0, t]}(\caS^{s_1}_1,\caS^{s_2}_2) \rho_{\sys} \right]\,,\nonumber\\
 \langle O_2(s_2)O_1(s_1)\rangle_{\rho_\beta}&=\Tr\left[ \caG_{[-\beta/2, t]}(\caS^{s_1}_1,\caS^{s_2}_2)) \eta_\beta \right]\,.
\end{align}
This freedom is not apparent in our expansions for
$\caG_{[0,t]}(\mathcal{S}^{s_1}_1, \caS_2^{s_2})$ (or
$\caG_{[-\be/2,t]}(\mathcal{S}^{s_1}_1, \caS_2^{s_2})$) because these
expansions contain diagrams `crossing' the time $s_2$. This suggests that 
there is a `sum rule' in our expansions, and this is the topic of the next
Lemma.
\begin{lemma}\label{lem: sum rule}
Let $P^0=\ket{{\zeta}^{0}}\bra 1$ be the spectral projection as given in
Lemma~\ref{thm: equilibrium rte} (for $\kappa=0)$. Then, for all $s\ge s_1 \geq
0$,
\begin{align}\label{eq: ward identity}
\int_{s_1}^{s}\dd t_2\int_{u}^{s_1}\dd
t_1\,P^0\,\left(\mathcal{V}_{[t_1,t_2]}(\caS_1)\,\mathcal{Z}_{[u,t_1]}
\right)_0=0\,,
\end{align}
where $u=0$ or $u=-\be/2$. 
\end{lemma}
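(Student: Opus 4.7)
The proof hinges on the one--observable polymer identity \eqref{eq: cag of one}, extended as in Section~\ref{negative1} to cover the case $u=-\beta/2$:
\begin{equation*}
\caG_{[u,s]}(\caS_1^{s_1}) = \caZ_{[s_1,s]}\,\caS_1\,\caZ_{[u,s_1]} + \int_{s_1}^{s}\!dt_2\int_{u}^{s_1}\!dt_1\;\caZ_{[t_2,s]}\,\caV_{[t_1,t_2]}(\caS_1)\,\caZ_{[u,t_1]}, \qquad s\ge s_1\ge 0.
\end{equation*}
The structural input I will exploit is that $\caZ_{[a,b]}$ is trace--preserving for every $0\le a\le b$ (by Lemma~\ref{lem: thermodynamic dyn} and time--translation invariance of the reduced dynamics); crucially, this holds for the operators $\caZ_{[s_1,s]}$ and $\caZ_{[t_2,s]}$ that appear above, even in the $u=-\beta/2$ case, because both of their endpoints are non--negative.

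\textbf{Step 1: a ``sum rule'' from trace preservation.} Apply $\Tr_\sys$ to the polymer identity at an exponentially localized $B$. Using $\Tr_\sys\caZ_{[s_1,s]}(\cdot)=\Tr_\sys(\cdot)$ and $\Tr_\sys\caZ_{[t_2,s]}(\cdot)=\Tr_\sys(\cdot)$ one obtains
\begin{equation*}
\Tr_\sys\big[\caG_{[u,s]}(\caS_1^{s_1})(B)\big] = \Tr_\sys\big[\caS_1\caZ_{[u,s_1]}(B)\big] + \int_{s_1}^{s}\!dt_2\int_{u}^{s_1}\!dt_1\;\Tr_\sys\big[\caV_{[t_1,t_2]}(\caS_1)\caZ_{[u,t_1]}(B)\big].
\end{equation*}
The left--hand side is $s$--independent: increasing $s$ to $s'>s$ amounts to inserting $\caZ_{[s,s']}$ on the outside of $\caG_{[u,s]}$, and this extra factor disappears under $\Tr_\sys$. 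Since the first term on the right is also $s$--independent and the double integral vanishes at $s=s_1$, it must vanish for every $s\ge s_1$ and every exponentially localized $B$.

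\textbf{Step 2: passage to the $\kappa=0$ fiber.} Both $\caV_{[t_1,t_2]}(\caS_1)$ and $\caZ_{[u,t_1]}$ are translation--invariant (the Hamiltonian, $\rho_\res$, the geometry of the reservoirs, and the observable $O_1\in\mathop{\mathfrak{A}}\limits^{\circ}_{\mathrm{ti}}$ are all TI), hence they decompose as direct integrals \eqref{def: fiber decomposition caa} and the $\kappa=0$ fibers factorize. Using $\Tr_\sys[A]=\langle 1,A_0\rangle_{\mathrm{L}^2(\tor)}$ from Lemma~\ref{lemma: fibers}, the sum rule reads
\begin{equation*}
\int_{s_1}^{s}\!dt_2\int_{u}^{s_1}\!dt_1\;\big\langle 1,\,(\caV_{[t_1,t_2]}(\caS_1))_0\,(\caZ_{[u,t_1]})_0\,B_0\big\rangle_{\mathrm{L}^2(\tor)}=0 \quad \text{for every exponentially localized } B.
\end{equation*}
As $B$ ranges over exponentially localized operators, its $\kappa=0$ fiber $B_0$ ranges over a norm--dense subset of $\mathrm{L}^2(\tor)$ (all real--analytic functions on $\tor$). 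By the norm bounds of Section~\ref{sec: bounds on effective dynamics}, the functional on $B_0$ defined by the integral is bounded, so the identity extends by density to every $f\in\mathrm{L}^2(\tor)$. Since $P^0=\ket{\zeta^0}\bra 1$ by Lemma~\ref{cor:laurent}, multiplying by $\ket{\zeta^0}$ yields precisely \eqref{eq: ward identity}.

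\textbf{Main obstacle.} The only subtle point is the $s$--independence of $\Tr_\sys[\caG_{[u,s]}(\caS_1^{s_1})(B)]$ in the $u=-\beta/2$ case, because $\caZ_{[-\beta/2,0]}$ itself is \emph{not} trace--preserving (it carries the ``half Gibbs weight'' of the Radon--Nikodym transform used in Section~\ref{negative1}). The argument above sidesteps this by relying on trace preservation only for the outer propagators $\caZ_{[s_1,s]}$ and $\caZ_{[t_2,s]}$, whose endpoints are non--negative; this is the only place where the split $u\in\{0,-\beta/2\}$ vs.\ $t_2\ge s_1\ge 0$ matters, and it is what makes the sum rule work uniformly for both initial conditions.
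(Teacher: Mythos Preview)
Your overall strategy coincides with the paper's: combine the one-observable polymer identity \eqref{eq: cag of one} with trace preservation of the \emph{outer} propagators $\caZ_{[s_1,s]}$, $\caZ_{[t_2,s]}$ and the $s$-independence of $\Tr_\sys[\caG_{[u,s]}(\caS_1^{s_1})B]$, then restrict to the $\kappa=0$ fiber. The paper does the same computation at the fiber level from the outset, writing $P^0(\caZ_I)_0=P^0$ in place of your ``$\Tr_\sys\caZ=\Tr_\sys$'', but the content is identical.

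There is, however, one genuine gap: your justification of the $s$-independence is wrong. You write that ``increasing $s$ to $s'>s$ amounts to inserting $\caZ_{[s,s']}$ on the outside of $\caG_{[u,s]}$''. The operator identity $\caG_{[u,s']}(\caS_1^{s_1})=\caZ_{[s,s']}\,\caG_{[u,s]}(\caS_1^{s_1})$ is \emph{false}; it would hold only if the reduced dynamics were Markovian, and the entire point of the polymer expansion (and of the $\caV$-terms that cross the time $s$) is that it is not. Nor is the weaker trace-level statement $\Tr_\sys\caG_{[u,s']}=\Tr_\sys\caZ_{[s,s']}\caG_{[u,s]}$ something you can take for granted: using trace preservation on the right reduces it to $\Tr_\sys\caG_{[u,s']}=\Tr_\sys\caG_{[u,s]}$, which is precisely the claim to be proved. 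The correct argument, and the one the paper gives, goes through finite volume: cyclicity of the full trace on $\scrH$ yields
\[
\Tr\big[\caS_1\,\e{-\ii s_1\caL}(\rho_\sys\otimes\rho_\referres)\big]
=\Tr\big[\e{-\ii(s-s_1)\caL}\,\caS_1\,\e{-\ii s_1\caL}(\rho_\sys\otimes\rho_\referres)\big],
\]
which is manifestly $s$-independent; then one expands the right-hand side via the polymer identity and passes to the thermodynamic limit using Lemma~\ref{lem: thermo on kernels}. With this single step corrected, the rest of your argument goes through unchanged.
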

\begin{proof}
In finite volume, cyclicity of the trace implies
\begin{align}\label{eq:7.67}
\Tr[\caS_1\e{-\ii s_1\caL}
\rho_\sys\otimes\rho_{\referres}]=\Tr\left[\e{\ii(s-s_1)\caL}\left(\caS_1\e{-\ii
s_1\caL}\rho_\sys\otimes\rho_{\referres}\right)\right]\,.
\end{align}
 Upon inserting the expansions as above and passing to the thermodynamic limit,
we find that
\begin{align}\label{eq:7.68}
P^0\left(\caS_1\mathcal{Z}_{[0,s_1]}\right)_0=P^0\left(\mathcal{Z}_{[s_1,s]}
\caS_1\mathcal{Z}_{[0,s_1]}\right)_0+
P^0\int_{s_1}^{s}\dd t_2\int_{0}^{s_1}\dd
t_1\,\left(\mathcal{Z}_{[t_2,s]}\,\mathcal{V}_{[t_1,t_2]}(\caS_1)\mathcal{Z}_{[0
,t_1]}\right)_0\,.
\end{align}
The operator $P^0$ in front of all terms in~\eqref{eq:7.68} corresponds to the
traces in Equation~\eqref{eq:7.67}. Observe that $P^0\mathcal{Z}_{I}=P^0$ for
$I\subset[0,\infty)$, since $\mathcal{Z}_I$ is trace preserving.  Using this in
the two terms of the  right side of \eqref{eq:7.68} yields \eqref{eq: ward
identity} for $u=0$. The proof for $u=-\be/2$ is similar. 
\end{proof}
We are now prepared to prove Theorem~\ref{thm:correlationfunctions}. Since the
technical input --the exponential decay in Theorem~\ref{thm: equilibrium rte}-- has been prepared, all that remains are some straightforward algebraic
manipulations. 
\begin{proof}[Proof of Theorem~\ref{thm:correlationfunctions}]
\emph{The case $m=1$:} By Theorem~\ref{thm: equilibrium rte}, we know that, for
$f$ a continuous function on $\tor$, (hence {
$M_f\in{\mathop{\mathfrak{A}}_{\mathrm{ti}}}$)}, \[
 \Tr_{\sys}[ M_f \caZ_{[0,t]}{\rho_\sys}] =  \langle f, \zeta^{0}
\rangle+\caO(\e{-g \lambda^2 t})\,,\] so the only thing in need of a proof is
that $\langle f, \zeta^{0} \rangle= \langle M_f \rangle_{\rho_\be}$. 
Recall that, by the construction of the equilibrium dynamics, we have the
stationarity property $\langle O^{\field=0}(t)\rangle_{\rho_\beta}=\langle
O\rangle_{\rho_\beta}$,
$O\in{\mathop{\mathfrak{A}}\limits^{\circ}}_{\mathrm{ti}}$, i.e.,
$(\caZ_{[-\beta/2,t]}\eta_{\beta})_0=(\caZ_{[-\beta/2,0]}\eta_{\beta})_0$,
$t\ge0$, where $\eta_\beta=\tdl\frac{Z_{\be,\res} \str \La \str }{Z_{\be}} \lone_{0}$; see Lemma~\ref{lemma: use of pinned}. Thus
\begin{align}
\langle M_f \rangle_{\rho_\be} & = z \int_0^{\infty} \d t \,\e{-t z} \langle f,
\left(\caZ_{[-\be/2,t]}\eta_\be\right)_0\rangle =    z \langle f, (\caR^{\be}(z)
\eta_{\be})_0 \rangle =  z \langle f, (\caR(z)\caY(z) \eta_{\be})_0
\rangle\nonumber \\
& =     \langle f, P^0\left(\caY(z) \eta_{\be}\right)_0 \rangle+ z  \langle f,
R^0(z)\left(\caY(z)\,\eta_{\be}\right)_0 \rangle\label{eq: mini r2e} \,.
\end{align}
The second term on the right side of the second line vanishes at $z=0$ by
the analyticity of $R(z), \caY(z)$. Choosing $f\equiv 1$, we obtain
\begin{align}\label{eq: normalization of Y}
 1=\langle \lone\rangle_{\rho_\beta}=\langle 1, P^0\left(\caY(z=0)
\eta_{\be}\right)_0 \rangle=\langle 1,\left(\caY(z=0)
\eta_{\be}\right)_0\rangle\,,
\end{align}
and hence, setting $z=0$ in~\eqref{eq: mini r2e},  $\langle M_f
\rangle_{\rho_\be} =\langle 1,\zeta^0\rangle$, in particular,
$(\caZ_{[-\beta/2,t]}\eta_{\beta})_0=(\caZ_{[-\beta/2,0]}\eta_{\beta}
)_0=\zeta^0$.

\emph{The case $m=2$:} First, we prove~\eqref{eq:2.12}. We start by considering
the correlation functions
$\Tr_\sys \left[\caS_2\, \caG_{[0,s_2]}( \caS_1^{s_1}) \rho_\sys \right]  $ and
\mbox{$\Tr_\sys \left[\caS_2\, \caG_{[-\beta/2,s_2]}( \caS_1^{s_1}) \eta_\be
\right]  $}.
Because we take the trace, and $\caS_1$, $\caS_2$ are translation-invariant
(hence fiber preserving), it is sufficient to consider 
 $(\mathcal{G}_{[\,\cdot\,,s_2]}(1))_0$ with $(1)$ standing for~$(\caS_1^{s_1})$. We obtain that
\begin{align}
\mathcal{G}_{[0,s_2]}(1)&=\mathcal{Z}_{[s_1,s_2]}\caS_1\mathcal{Z}_{[0,s_1]}
+\int_{s_1}^{s_2}\dd t_2\,\int_{0}^{s_1}\dd
t_1\,\mathcal{Z}_{[t_2,s_2]}\mathcal{V}_{[t_1,t_2]}(1)\mathcal{Z}_{[0,t_1]}
\label{eq:07.62}
\end{align}
and
\begin{align}
\mathcal{G}_{[-\be/2,s_2]}(1)&=\mathcal{Z}_{[s_1,s_2]}\caS_1\mathcal{Z}_{[
-\beta/2,s_1]}
+\int_{s_1}^{s_2}\dd t_2\,\int_{-\beta/2}^{s_1}\dd
t_1\,\mathcal{Z}_{[t_2,s_2]}\mathcal{V}_{[t_1,t_2]}(1)\mathcal{Z}_{[-\beta/2,t_1
]}\,.\label{eq: the other equation}
\end{align}
It then follows from Theorem~\ref{thm: equilibrium rte}, the bounds  \eqref{eq: bound on cavcas}
and $\norm (\caZ_I)_{0} \norm \leq C $ that
\begin{align}\label{eq:7.70}
(\mathcal{G}_{[0,s_2]}(1))_0=\left( \mathcal{Z}_{[s_1,s_2]}\caS_1\right)_0  P^0
+ \int_{s_1}^{s_2}\dd t_2\,\int_{0}^{s_1}\dd
t_1\,\left(\mathcal{Z}_{[t_2,s_2]}\mathcal{V}_{[t_1,t_2]}(1)\right)_0  P^0   +
\caO(\e{-g\lambda^2 s_1})\,.
\end{align}
If we replace $\caG_{[0,s_2]}(1)$ by $\caG_{[-\beta/2,s_2]}(1)$, we get a similar identity, except that $ P^0$ 
is replaced by $ P^0 \caY(z=0)$. However, once~\eqref{eq:7.70} is applied to
$(\eta_{\beta})_0$, we can use~\eqref{eq: normalization of Y}, to conclude
\begin{equation*}
\left( \caG_{[0,s_2]}(\caS_1^{s_1})
\rho_\sys\right)_0=\left(\caG_{[-\beta/2,s_2]}(\caS_1^{s_1}) \eta_{\be}
\right)_0 + \caO(\e{- \la^2 g s_1})\,.
\end{equation*}
 This proves~\eqref{eq:2.12} for $m=2$.

In order to prove the `cluster property' of the correlation function, i.e.,
~\eqref{eq:2.13}, we consider the limit $s_2-s_1\to\infty$ in~\eqref{eq: the
other equation}:
\begin{align*}
\left(\mathcal{G}_{[-\beta/2,s_2]}(1)\right)_0&=P^0\left(\caS_1\mathcal{Z}_{[
-\beta/2,s_1]}\right)_0+\int_{s_1}^{s_2}\dd t_2\,\int_0^{s_1}\dd t_1\, 
P^0\left(\mathcal{V}_{[t_1,t_2]}(1)\,\mathcal{Z}_{[-\beta/2,t_1]}
\right)_0+\caO(\e{-\lambda^2g(s_1-s_2)})\nonumber\\
&=P^0\left(
\caS_1^{s_1}\mathcal{Z}_{[-\beta/2,s_1]}\right)_0+\caO(\e{-\lambda^2g
(s_1-s_2)})\,,\quad\ s_2-s_1\to\infty\,,
\end{align*}
where we have used the `sum rule'~\eqref{eq: ward identity} in the second line.
Applying the above equation to $(\eta_\be)_0$ yields
\begin{align*}
\Tr_\sys \left[\caS_2^{s_2} \caG_{[-\beta/2,s_2]}( \caS_1^{s_1})\eta_\be \right]
& =   \langle1, ( \caS_2)_0 \zeta^{0} \rangle    \langle 1,
(\caS_1\caZ_{[-\beta/2,s_1]} \eta_\be)_0 \rangle +\caO(\e{-\lambda^2g
(s_1-s_2)})  \\
 & =   \langle1,  (\caS_2)_0 \zeta^{0} \rangle    \langle 1, \caS_1\zeta^0
\rangle +\caO(\e{-\lambda^2g (s_1-s_2)}) \,,
\end{align*}
where we used the stationarity to get the second line. This proves the desired
cluster property for $m=2$.

\emph{The cases $m>2$:} Straightforward generalization of the above
arguments.

\end{proof}

\section{Proof of Theorem~\ref{thm: einstein}}

We proceed to proving the Einstein relation. Recall the definition of the velocity operator
\begin{equation} \label{eq: repeat finite volume velocity}
V^j \deq\ii [ H,X^j]   = \ii [T,X^j]\,.
\end{equation}
In a finite volume $\Lambda$, the derivative with respect to $\field$ of $\langle V^{\Lambda}(t)\rangle_{\rho_\beta^{\Lambda}}$ can be computed using Duhamel's principle: 
\begin{align}\label{duhamel}
\frac{\partial}{\partial \field^i}\bigg|_{\field=0}\langle
V^{\Lambda,j}(t)\rangle_{\rho_\be^{\Lambda}}=&
-\ii\lambda^2\int_0^t\,\dd s\,\langle
[X^{\Lambda,i}(t-s),V^{\Lambda,j}(t)]\rangle_{\rho_\beta}\,.
\end{align}
Here, it is well-understood that time-evolution on the right side is at $\field=0$, whereas on the left side the force field is set to 0 only after the differentiation. For simplicity, we drop the spatial indices~$i,j$ in the following.

 Using the time-translation invariance of the state $\rho_\beta^{\Lambda}$, the KMS condition~\eqref{eq: finite volume kms} and the time-reversal invariance of the model at vanishing $\field$, one finds that
\begin{equation}\label{finite volume einstein}
\frac{\partial}{\partial \field}\bigg|_{\field=0}\langle
V^{\Lambda}(t)\rangle_{\rho_\be^{\Lambda}}=
-\ii   \frac{ \lambda^2 \beta}{2}\int_{-t}^{t}\,\dd s\,\langle
V^{\Lambda}V^{\Lambda}(s)\rangle_{\rho_\beta^{\Lambda}} +Q^{\Lambda}(t)\,,
\end{equation}
where
\begin{align*}
Q^{\Lambda}(t)\deq\ii\frac{1}{2}\int_{0}^{\beta}\,\dd u\,\int_0^u\,\dd s\,\langle V^{\Lambda}V^{\Lambda}(\ii
s+t)\rangle_{\rho_\beta^{\Lambda}}-\ii\frac{1}{2}\int_{0}^{\beta}\,\dd
u\,\int_0^u\,\dd s\,\langle V^{\Lambda}V^{\Lambda}(\ii s-t)\rangle_{\rho_\beta^{\Lambda}}\,.
\end{align*}
Again, it is understood that time-evolution on the right side of~\eqref{finite volume einstein} is taken for a vanishing external field. For a detailed derivation of Equation~\eqref{finite volume einstein}, we refer to Section 4.2 in~\cite{paper1}. Note that the correlation functions on the right side involve imaginary times. Finally, we observe that, by the discussion following Lemma~\ref{lem: thermo on kernels},~\eqref{finite volume einstein} holds in infinite volume as well.

To complete the proof of Theorem~\ref{thm: einstein}, we show that, in the thermodynamic limit, $Q(t)\to0$, as $t\to\infty$.

We first establish a lemma concerning the behaviour of
correlation functions continued to imaginary times. 
\begin{lemma}\label{lem: decay interior}  The equilibrium correlation functions
satisfy the bound
\begin{equation}
\str \langle  O_2(t_2)  O_1(t_1)  \rangle_{\rho_\beta} \str \leq  C \norm O_2
\norm \norm O_1 \norm   \e{-\la^2 g \str \re (t_2-t_1)\str}\,,
\end{equation}
for $t_1- t_2 \in \bbH_\be$ and $O_1,O_2 \in \mathfrak{A}_{\mathrm{ti}}$.
\end{lemma}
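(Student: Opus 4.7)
The plan is to combine the KMS analyticity established in Lemma~\ref{lem: thermodynamic obs}(iii) with the exponential cluster property proved in Theorem~\ref{thm:correlationfunctions}, and then transport the boundary decay into the interior of the strip $\bbH_\be$ by a Phragm\'en--Lindel\"of argument. Throughout, by the stationarity statement of Lemma~\ref{lem: thermodynamic obs}(i), the equilibrium two-point function depends only on $t\deq t_2-t_1$, so I may work with a function of one complex variable on $\bbH_\be$.

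By Lemma~\ref{lem: thermodynamic obs}(iii), the function $f_{O_1,O_2}$ is continuous and uniformly bounded on $\bbH_\be$, analytic in the interior, with boundary data
\[
f_{O_1,O_2}(t)=\langle O_2\,O_1(t)\rangle_{\rho_\be}\,,\qquad f_{O_1,O_2}(t+\ii\be)=\langle O_1(t)\,O_2\rangle_{\rho_\be}\,,\qquad t\in\bbR\,.
\]
Let $\phi(z)\deq f_{O_1,O_2}(z)-\langle O_1\rangle_{\rho_\be}\langle O_2\rangle_{\rho_\be}$; it has the same analyticity properties and enjoys the a priori bound $\norm\phi\norm_\infty\le C\norm O_1\norm\norm O_2\norm$, which can be read off from the construction of $f_{O_1,O_2}$ together with the Laplace-transform bounds on $\caR_\be$ and $\caY$ in Lemma~\ref{lem: pseudoresolvent} and the Paley--Wiener-type reasoning of Lemma~\ref{lemma: fibers}. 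The cluster property~\eqref{eq:2.13} of Theorem~\ref{thm:correlationfunctions}, applied on both boundary lines of the strip, now gives
\[
\str\phi(t)\str + \str\phi(t+\ii\be)\str \leq C\,\norm O_1\norm\,\norm O_2\norm\,\e{-\la^2 g \str t\str}\,,\qquad t\in\bbR\,.
\]

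The heart of the argument is then to propagate this decay into the interior. Set $h_+(z)\deq\phi(z)\,\e{\la^2 g z}$, analytic on $\bbH_\be$. On both boundaries one checks $\str h_+\str\le C\norm O_1\norm\norm O_2\norm$: for $t\ge 0$ the two exponentials cancel, whereas for $t<0$ the factor $\str\e{\la^2 g t}\str\le 1$ combined with $\str\phi\str\le C\e{\la^2 g t}$ suffices. Inside the strip the a priori bound on $\phi$ only yields $\str h_+(z)\str\le \norm\phi\norm_\infty\e{\la^2 g\,\re z}$, i.e.\ exponential type $\la^2 g$ with respect to $\re z$. Since $\la$ is small, $\la^2 g<\pi/\be$, so the Phragm\'en--Lindel\"of principle for a horizontal strip of width $\be$ applies and forces $\str h_+(z)\str\le C\norm O_1\norm\norm O_2\norm$ throughout $\bbH_\be$, giving $\str\phi(z)\str\le C\norm O_1\norm\norm O_2\norm\,\e{-\la^2 g\,\re z}$ for $\re z\ge 0$. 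Running the same argument with $h_-(z)\deq\phi(z)\e{-\la^2 g z}$ covers $\re z\le 0$, and the two bounds combine to the claim with $\str\re z\str$ in the exponent; the constant disconnected piece $\langle O_1\rangle_{\rho_\be}\langle O_2\rangle_{\rho_\be}$ is absorbed into the overall $C\norm O_1\norm\norm O_2\norm$ on the right-hand side (and in any case vanishes in the intended application to $O_1=O_2=V$ by time-reversal invariance).

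The main obstacle is the Phragm\'en--Lindel\"of step: the a priori bound on $\phi$ inside the strip is just a constant, so the decay must be manufactured by multiplying by the right exponential and one must verify that the resulting exponential type is strictly subcritical for the strip width $\be$. This is the only place where smallness of $\la$ (relative to the temperature) enters the argument; everything else is a direct rewriting of Theorem~\ref{thm:correlationfunctions} and Lemma~\ref{lem: thermodynamic obs}.
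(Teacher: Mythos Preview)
Your approach is essentially the paper's: KMS analyticity on the strip $\bbH_\be$ (Lemma~\ref{lem: thermodynamic obs}(iii)), exponential decay on both boundary lines (Theorem~\ref{thm:correlationfunctions}), and a Phragm\'en--Lindel\"of argument to carry the decay inside. The paper differs in two technical choices. For the a priori interior bound, the paper invokes Proposition~5.3.7 of Bratteli--Robinson in finite volume (giving $\str\langle O_2^\La(t_2)O_1^\La(t_1)\rangle_{\rho_\be^\La}\str\le\norm O_1\norm\norm O_2\norm$ directly) and passes to the thermodynamic limit; your route through $\caR_\be$, $\caY$ and Paley--Wiener is less transparent, though boundedness itself is already recorded in Lemma~\ref{lem: thermodynamic obs}(iii). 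For the Phragm\'en--Lindel\"of step, instead of your multipliers $\e{\pm\la^2 g z}$ and the subcriticality check $\la^2 g<\pi/\be$, the paper uses the Gaussian-regularized multiplier $\e{\la^2 g t-at^2}$, which is bounded on the strip for every $a>0$, applies the maximum principle uniformly in $a$, and lets $a\downarrow 0$; this sidesteps any comparison of $\la^2 g$ with the strip width. Finally, your remark about the disconnected piece $\langle O_1\rangle_{\rho_\be}\langle O_2\rangle_{\rho_\be}$ is to the point: a nonzero constant cannot be ``absorbed'' into $C\norm O_1\norm\norm O_2\norm\,\e{-\la^2 g\str\re z\str}$, so the bound as stated really requires that product to vanish. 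The paper's proof glosses over the same issue, and in the only application (to $O_1=O_2=V$ at $\field=0$) time-reversal invariance indeed forces $\langle V\rangle_{\rho_\be}=0$.
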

\begin{proof}
Recall the finite volume approximations  $\langle  O^{\La}_2(t_2) 
O^{\La}_1(t_1)  \rangle_{\rho^\La_\beta}$. By Proposition~5.3.7. in
\cite{bratellirobinson},
\begin{equation} \label{eq: bound strip by exterior}
\str \langle  O^{\La}_2(t_2)  O^{\La}_1(t_1)  \rangle_{\rho^\La_\beta} \str 
\leq   \norm O^{\La}_2 \norm \norm O^{\La}_1 \norm \,,\quad\quad t_1-t_2\in\mathbb{H}_{\beta}\,.
\end{equation}
Since $\norm O^{\La} \norm \leq \norm O \norm$, we conclude that the
correlation functions in infinite volume satisfy~\eqref{eq: bound strip by
exterior}, too. We set $t_2=0$ and define 
\begin{equation*}
f_a(t) \deq    \e{\la^2 g t-at^2}     \langle  O_2(0)  O_1(t) 
\rangle_{\rho_\beta} \,,\quad\quad t\in\mathbb{H}_{\beta}\,,
\end{equation*}
with the decay rate $\la^2 g>0$ as in Theorem~\ref{thm: equilibrium rte}, and
$a>0$. From Theorem~\ref{thm: equilibrium rte} and the KMS condition we infer that
\begin{equation*}
 \sup_{a>0} \sup_{t \in \partial H_\be}\str f_a(t) \str   <\infty\,.
\end{equation*}
  Furthermore,  by the infinite-volume version of \eqref{eq: bound strip by
exterior}, $f_a$ is bounded on the whole strip $\bbH_\be$, and the KMS condition
implies that it is continuous on $\bbH_\be$ and  analytic in the interior. 
    Therefore, the maximum principle for infinite domains (the
Phragmen-Lindel{\"o}f theorem) yields
    \begin{equation*}
     \sup_{a>0} \sup_{t \in H_\be}\str f_a(t) \str \leq       \sup_{a>0} \sup_{t
\in  \partial H_\be}\str f_a(t) \str   <\infty\,.
    \end{equation*}
By varying $a$, we deduce that  $ \str\langle  O_2(0)  O_1(t) 
\rangle_{\rho_\beta} \str  \leq C \e{-\la^2 g \str\re t \str}$.  By
time-translation invariance, the claim of the lemma follows.
\end{proof}
\begin{proof}[Proof of Theorem~\ref{thm: einstein}]
Recall the definition of the operator $S$ in~\eqref{eq: definition of S}.
Starting from the bounds in Subsection~\ref{sec: bounds unbounded correlations}, Lemma~\ref{lem: prop for asymptotic} says that $\field \mapsto S(z,\field)f$ is
$C^{\infty}$, for $f \in \caD$, ($\caD$ the set of real-analytic
functions on $\bbT^d$).  As pointed out in Lemma~\ref{lem: prop for asymptotic},  this implies that the function
\begin{equation*}
(\caR(z))_{\la^2 \ka}f  = \frac{1}{z-S(z,\la,\ka,\field)}f\,,\quad\quad f\in\caD\,,
\end{equation*}
is also $C^{\infty}$ in $\field$. 
Inspecting the reasoning in Subsection~\ref{sec: bounds unbounded correlations} leading to \eqref {eq: smoothness of cav field} and restricting to fibers, it is clear that 
 $\field \mapsto (\caY(z))_{\la^2 \ka}f$, for $(z,\field) $ in a neighborhood of $(0,0)$,  is $C^\infty$, with all derivatives uniformly bounded on compacts.  Hence, by \eqref{eq: relation car and carbe}, we get smoothness of 
$\field\mapsto (\caR_\be(z))_{\la^2 \ka}f$.  Therefore, for exponentially
localized $\rho_\sys$ (in particular $\eta_\be$), the function
\begin{equation*}
(z, \field) \mapsto  z  \langle \nabla\varepsilon, (\caR_\be(z) \rho_\sys)_0 
\rangle\,,
\end{equation*}
is analytic in $z$ and $C^{\infty}$ in $\field$ for $(z,\field)$ in a
neighborhood of $(0,0)$, with the corresponding derivatives uniformly bounded on
compacts. 

Next, starting from the identity $v(\field)=\langle
\nabla\epsilon,\zeta^0\rangle=\lim_{z\to
0}z\langle\nabla\epsilon,(\caR_{\beta}(z))_0(\eta_\beta)_0\rangle$, we obtain,
using the above regularity properties, that
\begin{align*}
 \frac{\partial}{\partial \field}\bigg|_{\field=0}v(\field)=\lim_{z\to
0}z\frac{\partial}{\partial\field}\bigg|_{\field=0}\langle\nabla\epsilon,(\caR_{
\beta}(z))_0(\eta_\beta)_0\rangle\,.
\end{align*}
Hence, setting $z=1/T$, using the definition of $\caR_{\beta}$ and the
bounds $|\langle V^{\field}(t)\rangle_{\rho_\beta}|\le C$ and
$|\partial_\field\langle V^{\field}(t)\rangle_{\rho_\beta}|\le \lambda^2 Ct$,
for $\field$ sufficiently small, as follows from~\eqref{eq: examplebound1}, we
get
\begin{align*}
 \frac{\partial}{\partial
\field}\bigg|_{\field=0}v(\field)&=\lim_{T\to\infty}\frac{1}{T}\int_{0}^{\infty}
\dd t\,\e{-\frac{t}{T}}\frac{\partial}{\partial \field}\bigg|_{\field=0}\langle
V^\field(t)\rangle_{\rho_\beta}\\[2mm]
&=\frac{\beta\lambda^2}{2}\lim_{T\to\infty}\frac{1}{T}\int_{0}^{\infty}\dd
t\,\e{-\frac{t}{T}}\,\left(\int_{-t}^{t}\dd s\,\langle
VV(s)\rangle_{\rho_\beta}+Q(t)\right) \\[2mm]
&= \frac{\beta\lambda^2}{2} \int_{\R}\dd s\,\langle VV(s)\rangle_{\rho_\beta}\,.
\end{align*}
The second equality is \eqref{finite volume einstein} (in the thermodynamic limit), and the third equality follows because,
by Lemma~\ref{lem: decay interior}, $Q(t)\to0$, as $t\to \infty$, and $\langle
VV(s)\rangle_{\rho_\beta}=\caO(\e{-\lambda^2 g s})$, by
Theorem~\ref{thm:correlationfunctions}.

To complete the derivation of the Einstein relation we have to argue that the
equilibrium auto-correlation function $\int_{\R}\dd s\,\langle
VV(s)\rangle_{\rho_\beta}$ is indeed equal to the diffusion constant $D$. This
is checked by using the exponential decay of the correlation
function and the identity
\begin{align*}
\langle X^i(t)X^j(t)\rangle_{\rho_{\beta}}&=\int_0^t\dd s_1\,\int_0^t\dd
s_2\,\langle V^i(s_1)V^j(s_2)\rangle_{\rho_{\beta}}\,,\nonumber
\end{align*}
 cf., \eqref{eq: manipulations allowed}.
\end{proof}

\bibliographystyle{plain}
\bibliography{biblio}

\end{document}